\newcommand{\nada}[1]{}
\newtheorem{theorem}{Theorem}
\newtheorem{lemma}{Lemma}
\newcommand{\mux}{\mu^{X}}
\newcommand{\sigmax}{\sigma^{X}}
\newcommand{\muf}{\mu^{F}}
\newcommand{\sigmaf}{\sigma^{F}}
\newcommand{\SX}{\nu^{X}}
\newcommand{\SF}{\nu^{F}}
\newcommand{\pix}{\pi^{X}}
\newcommand{\pif}{\pi^{F}}
\newcommand{\optpix}{\hat{\pi}^{X}}
\newcommand{\optpif}{\hat{\pi}^{F}}
\DeclareMathOperator{\ESR}{ESR}
\title{Hedge and Mutual Funds' Fees \\ and the Separation of Private Investments}
\author{Paolo Guasoni\thanks{Paolo Guasoni is partially supported by the ERC (278295), NSF (DMS-1109047, DMS-1412529), SFI (07/SK/M1189, 08/SRC/FMC1389), and the European Commission (RG-248896).}
\and
Gu Wang
}
\institute{Paolo Guasoni\\
Boston University,
Department of Mathematics and Statistics,
111 Cummington Mall,
Boston, MA 02215,
United States
\and
Dublin City University,
School of Mathematical Sciences,
Glasnevin,
Dublin 9,
Ireland\\
\email{guasoni@bu.edu}\\
Gu Wang\\
University of Michigan at Ann Arbor,
Department of Mathematics,
530 Church Street,
Ann Arbor, MI 48109,
United States\\
\email{robuw@umich.edu}
}
\date{\today}
\begin{document}
\maketitle

\begin{abstract}
A fund manager invests both the fund's assets and own private wealth in separate but potentially correlated risky assets, aiming to maximize expected utility from private wealth in the long run. If relative risk aversion and investment opportunities are constant, we find that the fund's portfolio depends only on the fund's investment opportunities, and the private portfolio only on private opportunities. This conclusion is valid both for a hedge fund manager, who is paid performance fees with a high-water mark provision, and for a mutual fund manager, who is paid management fees proportional to the fund's assets.
The manager invests earned fees in the safe asset, allocating remaining private wealth in a constant-proportion portfolio, while the fund is managed as another constant-proportion portfolio. The optimal welfare is the maximum between the optimal welfare of each investment opportunity, with no diversification gain.
In particular, the manager does not use private investments to hedge future income from fees.
\end{abstract}

\subclass{91G10 \and 91G80}

\medskip\noindent\textbf{JEL Classification} G11, G12

\keywords{Hedge Funds\and Portfolio Choice \and High-Water Marks \and Performance Fees \and Management Fees}

\section{Introduction}
Performance fees are the main driver of hedge fund managers' compensation. Typical performance fees amount to 20\% of a fund's profits, subject to a high-water mark provision, which requires past losses to be recovered before further fees are paid.  By contrast, mutual fund managers receive as management fees a fixed proportion -- about 1\% annually -- of assets under management.

A manager's large exposure to the fund's performance is a powerful incentive to deliver superior returns, but is also a potential source of moral hazard, as the manager may use private wealth to hedge such exposure. Extant models \citep{ross2004compensation,carpenter2000does,panageas2009hwm,guasoni2012nopart} acknowledge this issue, but avoid it rather than modeling it, by assuming that private wealth, including earned fees, are invested at the risk-free rate. As a result, the literature is virtually silent on the interplay between a manager's personal and professional investments.\footnote{As an exception, \citet*{aragon2010} restrict earned fees to reinvestment in the fund, which also excludes hedging attempts.}

This paper begins to fill this gap, focusing on a model with two investment opportunities, one accessible to the fund, the other accessible to the manager's private account. Investment opportunities are constant over time, and potentially correlated. To make the model tractable, and consistently with the literature, we consider a fund manager with a constant relative risk aversion and a long horizon, who maximizes expected utility from private wealth. The assumption of a long horizon means, in particular, that the model's conclusions are driven by a stationary risk-return tradeoff, and are indifferent to the one-time, short-term profits that materialize near the horizon, such as by taking excessive risk in the fund, in order to increase the probability of earning more fees before the horizon approaches.

For both hedge and mutual fund managers, we find the manager's optimal investment policies explicitly. The optimal portfolio for the fund entails a constant risky proportion, which corresponds to the manager's own risk aversion for mutual funds, while for hedge funds it corresponds to the effective risk aversion identified by \citet*{guasoni2012nopart} in the absence of private investments.

The optimal policy for private wealth is more complex. For both hedge and mutual funds, the manager leaves earned fees in the safe asset, investing remaining wealth according to an optimal constant-proportion portfolio, which corresponds to the manager's own risk aversion. The result of these policies combined is that the manager obtains the maximum between fees' and private investments' welfare, but not more.

The significance of this result is threefold.
First, the model predicts that the fund composition does not affect the manager's private investments, and that such investments also do not affect the fund composition -- portfolio separation holds. In particular, even if investment opportunities are positively correlated, the manager does not attempt to hedge fund exposure with a position in the private account. The intuition is that, for a long horizon, the benefits from hedging are surpassed by the costs of holding a short position in an asset with positive returns.

Second, the manager does not rebalance all private wealth. Indeed, the optimal policy is to leave earned fees in the safe asset, and to rebalance only excess wealth. This policy effectively replicates a pocket of private wealth that grows like the fund (for mutual funds) or its high-water mark (for hedge funds), while leaving the other pocket to grow at the optimal rate for private investments. Over time, the pocket with the higher growth rate dominates the private portfolio, delivering the maximum welfare of the two strategies. In contrast to usual portfolio allocation with multiple assets, private investments can outperform the fund, but cannot augment its Sharpe ratio through diversification, regardless of correlation.

Third, since the manager's welfare is the maximum between the fund's and the private wealth's welfares, our policy is always optimal, but never unique. Indeed, if the fund delivers the optimal welfare, it does not matter how the manager invests private wealth in excess of earned fees. By contrast, if private investments deliver the optimal welfare, it does not matter how earned fees or even the fund are invested. Although the lack of uniqueness is an extreme effect of the long-horizon approximation, it highlights that either the fund, or private investments, become the main focus of a manager, without long-lasting interactions. Furthermore, the model yields the conditions under which a manager focuses on the fund rather than on private wealth.

In summary, for a manager with a long horizon we find that neither performance fees nor management fees create the incentive to use private investments to either hedge or augment the fund's returns. This conclusion remains valid if the manager has private access to the fund's investment opportunities, a situation that is nested in our model when the correlation between investment opportunities is perfect, and the Sharpe ratios are equal.

The results can inform the decisions of investors and regulators alike. For investors, the main message is that moral hazard can be mitigated by arrangements that increase a manager's horizon, such as longer lock-up periods, infrequent redemptions, deferred compensation, or clawback provisions. These observations are broadly consistent with those of \citet*{aragon2010}, who find that some of these features help alleviate asymmetric-information issues for hedge funds. From a regulatory viewpoint, our results suggest that, in order to reduce managers' incentive to privately trade against investors' interest, restrictions on managers' private investments may be less important than incentive contracts that encourage managers to plan with a long term perspective.

The analysis of this paper relies on the maximization of the \emph{equivalent safe rate} for constant relative risk aversion and a long horizon, which summarizes the expected utility of the management contract in terms of an equivalent risk-free rate earned on private wealth in lieu of the risky flow of fees and returns. This criterion focuses on the stationary, persistent tradeoffs of the model, which dominate its long-term performance, while neglecting initial conditions, such as the initial ratio between the fund's assets and the manager's private wealth. 

Importantly, this asymptotic criterion is immune to the well-known fallacies of growth-optimal portfolios \citep{merton1974fallacy,samuelson1974generalized}, in which the first two asymptotic moments of logarithmic wealth are used to approximate expected utility, because the equivalent safe rate measures the exponential rate of growth of \emph{true} expected utility with the horizon. Rather, the limit of this criterion lies in the relative importance of such a rate in determining expected utility, equivalent to the question of whether the horizon is long enough -- which is not addressed in this paper.

The effectiveness of the equivalent safe rate is thus a quantitative (ultimately, empirical) issue,  similar to the relevance of turnpike portfolios \citep{leland1972turnpike} for long horizons, which depends critically on the parameter values considered \citep{dybvig1999portfolio}. 
The equivalent safe rate performs well with horizons as short as two years in portfolio choice with transaction costs \citep{gerhold2014transaction}, and with horizons of about twenty years with stochastic investment opportunities \citep{guasoni2012portfolios}, while estimates for the problem considered here remain an open question.


Such a word of caution notwithstanding, the main message of the paper -- that long horizons mitigate moral hazards for fund managers -- can be read in reverse as pointing to short horizons as an essential ingredient of any model of portfolio delegation that reproduces substantial moral hazards.

The literature on turnpike portfolios also suggests that our results are likely to remain valid in the same form for utility functions that are only approximately isoelastic for high levels of wealth. Such a generalization is not pursued here, but its conclusions would require the double qualification that the horizon be long enough so that both initial conditions are irrelevant, \emph{and} wealth grows fast enough into the domain of approximate isoelasticity. On the other hand, relaxing the assumption made here of constant investment opportunities is presumably more challenging. At best, the parametric conditions for attention separation would be highly model-dependent. At worst, the results may simply fail to hold, for example, in (admittedly improbable) models where personal and professional investment opportunities alternately exclude each other, therefore the manager can achieve a superior return by alternating attention to whichever opportunity is present.

The paper is most closely related to the literature in portfolio choice with hedge funds \citep{detemple2010optimal}, high-water marks \citep{panageas2009hwm,janecek2012optimal}, and drawdown constraints \citep{grossman1993ois,cvitanic1995portfolio,elie2008optimal}, and is the first one to consider a manager who simultaneously trades in the fund and in private wealth to maximize personal welfare. The rest of the paper is organized as follows: the next section presents the model, its solution, and discusses the main implications. Section 3 offers a heuristic derivation of the main result using informal arguments of stochastic control, and Section 4 concludes. The formal verification of the main theorems are in the Appendix.

\section{Main Result}

\subsection{Model}
A fund manager aims at maximizing expected utility from private wealth at a long horizon. (For brevity, henceforth `private wealth' is simply `wealth', unless ambiguity arises.) To achieve this goal, the manager has two tools: allocating the fund's assets $X$ between a safe asset and a risky asset $S^X$, and allocating wealth $F$, including fees earned from the fund, between the safe asset and another risky asset $S^F$.
The interpretation is that the manager's wealth is invested in securities available to individual investors, while the fund may have access to investment opportunities that, because of scale, regulation, or technology, are restricted to institutional investors. Examples of such investments are institutional funds, restricted shares, such as Rule 144a securities, or high-frequency trading strategies.

\subsubsection*{Market}

The fund's ($S^X$) and private ($S^F$) risky assets follow two correlated geometric Brownian Motions, with expected returns, volatilities and Sharpe ratios $\mux, \sigmax$, $\SX = \frac{\mux}{\sigmax}$ and $\muf, \sigmaf$, $\SF = \frac{\muf}{\sigmaf}$ respectively. Formally, consider a filtered probability space $(\Omega,\mathcal F,(\mathcal F_t)_{t\ge 0},P)$ equipped with the Brownian Motions $W^X = (W^X_t)_{t\ge 0}$ and $W^F = (W^F_t)_{t\ge 0}$, with correlation $\rho$ (i.e., $\langle W^X, W^F\rangle_t =\rho t$), and define the risky assets as
\begin{align*}
\frac{dS^{X}_{t}}{S^{X}_{t}} =& \mux dt + \sigmax dW^{X}_{t},\\
\frac{dS^{F}_{t}}{S^{F}_{t}} =& \muf dt + \sigmaf dW^{F}_{t}.
\end{align*}
To ease notation, we assume a zero safe rate.\footnote{\citet*{guasoni2012nopart} consider a constant safe rate, and find that its value does not affect the optimal fund's policy, suggesting that the assumption of a zero safe rate is inconsequential.}
The manager chooses the proportion of the fund $\pix = \left(\pix_t\right)_{t\geq 0}$ to invest in the asset $S^X$, and the proportion of wealth $\pif = \left(\pif_{t}\right)_{t\geq 0}$ to invest in the asset $S^F$. The strategies $\pix$ and $\pif$ are square-integrable processes, adapted to $\mathcal F_t$, defined as the augmented natural filtration of $W^X$ and $W^F$.

\subsubsection*{Hedge Funds}

In a hedge fund, the manager receives a flow of performance fees subject to a high-water mark provision. The high-water mark $\bar X_{t}$ is the running maximum $\bar X_{t} = \displaystyle\max_{0\leq s\leq t}X_{s}$ of the net value of the fund.\footnote{In the rest of the paper, for any process $(X_t)_{t\geq0}$, $\bar X_t$ denotes its running maximum.}  Then, with the investment strategy $\pix$, the net fund return equals the gross return on the amount invested $X^{\pix}_{t}\pix_{t}$, minus performance fees, which are a fraction of the increase in the high-water mark. Thus, with $0<\alpha<1$,
\begin{equation}
dX^{\pix}_{t} = X^{\pix}_{t}\pix_{t}\left(\mux dt + \sigmax dW^{X}_{t}\right) - \frac{\alpha}{1-\alpha}d\bar X^{\pix}_{t}.\label{DymX}
\end{equation}
In this equation, the last term reflects the fact that each dollar of gross profit is split into $\alpha$ dollars as fees, plus $1-\alpha$ dollars as net profit, whence performance fees are $\alpha/(1-\alpha)$ times net profit.

Similarly, the return on the manager's wealth equals the return on the risky wealth $F^{\pix,\pif}_{t}\pif_{t}$, plus the fees earned from the fund, i.e.\begin{equation}
dF^{\pix,\pif}_{t} = F^{\pix,\pif}_{t}\pif_{t}\left(\muf dt + \sigmaf dW^{F}_{t}\right) + \frac{\alpha}{1-\alpha}d\bar X^{\pix}_{t}.\label{DymF}
\end{equation}
Note, in particular, that while the fund evolution depends only on its policy $\pix$, the evolution of wealth $F^{\pix,\pif}_{t}$ depends both on $\pif$ and on $\pix$, as the latter drives earned fees.

\subsubsection*{Mutual Funds}

For a mutual fund manager, fees are proportional to assets under management, and $\varphi>0$ denotes the annual fees as a fraction of assets.
Henceforth, with a slight abuse of notation we use the letters $X$ and $F$ to denote the fund and wealth processes both in the model for hedge funds and in the model for mutual funds. This notation has the advantage of emphasizing the analogies between the two models, and should not create confusion, since hedge and mutual fund managers exclude each other.

Thus, in the mutual fund model the joint dynamics of fund and wealth becomes

\begin{align}
dX^{\pix}_{t} =& X^{\pix}_{t}\pix_{t}\left(\mux dt + \sigmax dW^{X}_{t}\right) -\varphi X^{\pix}_tdt,\label{DymXM}\\
dF^{\pix,\pif}_{t} =& F^{\pix,\pif}_{t}\pif_{t}\left(\muf dt + \sigmaf dW^{F}_{t}\right) + \varphi X^{\pix}_tdt\label{DymFM}.
\end{align}
whence the return on the fund $d X^{\pix}_{t}/X^{\pix}_{t}$ is decreased by the constant $\varphi$, while the return on wealth $dF^{\pix,\pif}_{t}/F^{\pix,\pif}_{t}$ is increased by a variable amount, which depends on the fund/wealth ratio $X^{\pix}_t/F^{\pix,\pif}_t$.

\subsubsection*{Preferences}

In both cases, the fund manager chooses optimal strategies $\optpix$ and $\optpif$ so as to maximize expected utility from wealth in the long run, that is, the equivalent safe rate (ESR)  of  wealth (cf. \citet*{grossman1993ois,dumas1991esd,cvitanic1995portfolio}):
\begin{equation}
\ESR_\gamma({\pix,\pif}) =
\begin{cases}
\displaystyle\lim_{T\rightarrow \infty}\frac{1}{T} \ln\mathbb{E}\left[\left(F^{\pix,\pif}_{T}\right)^{1-\gamma}\right]^{\frac{1}{1-\gamma}},  & 0< \gamma \neq 1,\\
\displaystyle\lim_{T\rightarrow \infty} \frac{1}{T} \mathbb{E}\left[\ln F^{\pix,\pif}_{T}\right],\label{logaim}
& \gamma=1.
\end{cases}
\end{equation}
This equivalent safe rate measures the manager's welfare, and has the dimension of an interest rate. It corresponds to the hypothetical safe rate which would make the manager indifferent between (i) actively managing the fund and wealth, and (ii) retiring from the fund, investing all wealth at this riskless rate.

\subsection{Solution and Discussion}
The main result identifies the manager's optimal policies, and the corresponding welfare. The result below is proved in the case of logarithmic risk aversion ($\gamma=1$), or lower ($\gamma\le 1$), and hence allows to understand the risk-neutral limit $\gamma\downarrow 0$. We conjecture that the same result remains valid for risk aversion greater than one, but we cannot offer a formal proof for this case.

In the next two theorems, $\gamma\in (0,1]$ denotes the manager's relative risk aversion.
\begin{theorem}[Hedge Funds]\label{Main}
For a hedge fund manager compensated by high-water mark  performance fees with rate $0<\alpha<1$, the investment policies
\begin{align}
\optpix_{t} &= \frac{\mux}{\gamma^{*}\left(\sigmax\right)^{2}},
\quad\text{where}\quad \gamma^{*} = \alpha + (1-\alpha)\gamma\label{optimal1}\\
\optpif_{t} &= \left(1-\frac{\alpha}{1-\alpha}\frac{\bar{\hat{X}}_{t} - X_{0}}{\hat{F}_{t}}\right)\frac{\muf}{\gamma\left(\sigmaf\right)^{2}},\label{optimal2}
\end{align}
attain the manager's maximum equivalent safe rate of  wealth, which equals
\begin{equation}
\ESR_\gamma({\optpix,\optpif}) =
\max\left((1-\alpha)\frac{\left(\SX\right)^{2}}{2 \gamma^*},\frac{\left(\SF\right)^{2}}{2\gamma}\right),\label{ESR}
\end{equation}
where $\hat{X}$ and $\hat{F}$ are the fund and wealth processes corresponding to $\optpix$ and $\optpif$.
\end{theorem}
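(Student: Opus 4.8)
The plan is to prove \eqref{ESR} in two halves: that the stated policies attain the value, and that no admissible pair $(\pix,\pif)$ exceeds it. The organizing device is a ``two-pocket'' reformulation of wealth. Writing $A_t := \frac{\alpha}{1-\alpha}(\hat X^*_t - X_0)$ for the cumulative fees and $G_t := \hat F_t - A_t$ for the excess wealth, a direct computation from \eqref{DymF} shows that under $\optpif$ the fee inflow cancels exactly against $dA_t$, leaving $dG_t = G_t\frac{\muf}{\gamma(\sigmaf)^2}(\muf\,dt + \sigmaf\,dW^F_t)$. Thus $G$ is a pure constant-proportion portfolio at the Merton fraction, while $A$ is held in the safe asset and tracks the high-water mark $\hat X^*$. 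This is precisely what the weight $\optpif$ in \eqref{optimal2} encodes, since $\hat F_t\optpif_t = G_t\frac{\muf}{\gamma(\sigmaf)^2}$, and it reduces the analysis of $\hat F = A + G$ to that of its two transparent components.

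For achievability I would compute the two pockets' growth rates separately. As $G$ is a geometric Brownian motion at the Merton proportion $\frac{\muf}{\gamma(\sigmaf)^2}$, an elementary moment computation gives its equivalent safe rate as $\frac{(\SF)^2}{2\gamma}$. As $A$ is proportional to the high-water mark generated by the constant policy $\optpix$, its rate equals that of $\hat X^*$, which the hedge-fund analysis of \citet{guasoni2012nopart} identifies as $(1-\alpha)\frac{(\SX)^2}{2\gamma^*}$, attained exactly at the effective risk aversion $\gamma^* = \alpha+(1-\alpha)\gamma$ and the fraction $\optpix$ of \eqref{optimal1}. Recombining is then immediate: from $\max(A_T,G_T)\le \hat F_T = A_T + G_T \le 2\max(A_T,G_T)$ and monotonicity of $x\mapsto x^{1-\gamma}$, the constant $2^{1-\gamma}$ washes out under $\frac1T\ln(\cdot)$, so $\ESR_\gamma(\optpix,\optpif)$ equals the maximum of the two rates, matching \eqref{ESR}.

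The harder direction is optimality: that for every admissible $(\pix,\pif)$ the equivalent safe rate cannot exceed the maximum. Solving the linear SDE \eqref{DymF} by variation of constants gives $F_T = \Phi_T F_0 + \frac{\alpha}{1-\alpha}\int_0^T \frac{\Phi_T}{\Phi_t}\,dX^*_t$, where $\Phi$ is the stochastic exponential of $\int \pif(\muf\,dt+\sigmaf\,dW^F)$, i.e.\ the private portfolio applied to one unit of wealth, and $X^*$ is the high-water mark of the fund under $\pix$. The crucial tool is the subadditivity $(a+b)^{1-\gamma}\le a^{1-\gamma}+b^{1-\gamma}$, valid exactly because $0\le 1-\gamma\le 1$; this is where the hypothesis $\gamma\le 1$ enters, and its failure for $\gamma>1$ is what blocks the same argument there. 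Applying it splits $\mathbb{E}[F_T^{1-\gamma}]$ into a private term $F_0^{1-\gamma}\mathbb{E}[\Phi_T^{1-\gamma}]$ and a fee term. For the private term, a standard estimate bounds $\Phi_T^{1-\gamma}$ by an exponential local martingale times $\exp\!\big((1-\gamma)\frac{(\SF)^2}{2\gamma}T\big)$, since maximizing the drift rate of $\Phi_t^{1-\gamma}$ over $\pif$ yields exactly $(1-\gamma)\frac{(\SF)^2}{2\gamma}$; taking expectations gives growth at most $\frac{(\SF)^2}{2\gamma}$.

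The main obstacle is the fee term $J_T := \int_0^T \frac{\Phi_T}{\Phi_t}\,dX^*_t$, which couples the fund's high-water mark (whose growth, for any $\pix$, is bounded by $(1-\alpha)\frac{(\SX)^2}{2\gamma^*}$ through \citet{guasoni2012nopart}) with the compounding of each earned fee in the private account (bounded by $\frac{(\SF)^2}{2\gamma}$), and it does so through the correlation $\rho$ and the singular measure $dX^*_t$. Here one must establish precisely the absence of a diversification gain: a fee earned at time $t$ contributes $\frac{\Phi_T}{\Phi_t}\,dX^*_t$, so the earliest fees compound longest at the private rate while the latest fees carry the full, larger high-water mark, and the total is governed by whichever rate is larger rather than by any combination of the two. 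I expect to obtain $\limsup_T \frac1T\ln\mathbb{E}[J_T^{1-\gamma}]^{1/(1-\gamma)}\le \max\!\big((1-\alpha)\frac{(\SX)^2}{2\gamma^*},\,\frac{(\SF)^2}{2\gamma}\big)$ by conditioning on the fund's Brownian motion $W^X$ and applying the private Merton estimate to the compounding factor, or equivalently by exhibiting a supermartingale of the form $e^{-(1-\gamma)\lambda t}$ times a degree-$(1-\gamma)$ homogeneous function of $(X_t,X^*_t,F_t)$ with $\lambda$ the claimed maximum. Subadditivity then recombines the private and fee terms without producing a cross-gain, so the overall rate is the maximum of the two, matching the achievable value and completing the proof.
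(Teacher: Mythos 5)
Your achievability half is correct and is essentially the paper's own argument (Lemma \ref{optimum}): decompose $\hat F_T = G_T + H_T$ into the Merton pocket and the fee pocket, use positivity of each summand to bound the rate of the sum below by each individual rate, and quote \citet{guasoni2012nopart} for the rate of $\hat X^*$. The genuine gap is in the optimality direction, at exactly the step you flag as ``the main obstacle.'' Your plan for the fee term $J_T=\int_0^T (\Phi_T/\Phi_t)\,dX^*_t$ --- ``conditioning on the fund's Brownian motion and applying the private Merton estimate to the compounding factor'' --- does not go through as stated, because in $\mathbb{E}\left[J_T^{1-\gamma}\right]$ the power $p=1-\gamma\in(0,1)$ sits \emph{outside} the $dX^*$ integral. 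The Merton-type estimate controls $\mathbb{E}_t\left[(\Phi_T/\Phi_t)^{p}\right]$, i.e.\ the compounding factor with the power \emph{inside}; to use it you must first commute the power with the integral, and for $p<1$ no available step does this. Jensen's inequality in the direction $\mathbb{E}\left[J_T^p\right]\le\left(\mathbb{E}\left[J_T\right]\right)^p$ is useless because first moments of wealth processes grow at arbitrarily large exponential rates (for constant $\pif$, $\mathbb{E}\left[e^{R^F_T}\right]=e^{\pif\muf T}$, unbounded over $\pif$), while discretized subadditivity $\left(\sum_i a_i\right)^p\le\sum_i a_i^p$ degenerates as the partition of the continuous measure $dX^*_t$ refines, since $(\Delta A_i)^p\gg\Delta A_i$.

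The paper circumvents precisely this difficulty by a different decomposition: in Lemma \ref{UBofF} it sets $\tilde F_t = F_t + X^*_t$, reinterprets $\pif$ as a strategy $\tilde\pi^F$ for $\tilde F$, and applies It\^o's formula to $\tilde F^p_t$ directly, obtaining the identity \eqref{UB01} in which the power $p$ is already inside the fee integral; then, because $\tilde F_t\ge X^*_t$ and $p-1<0$, the bound $p\tilde F^{p-1}_t dX^*_t\le p(X^*_t)^{p-1}dX^*_t = d(X^*_t)^p$ converts the integrand into $d(X^*)^p$, after which Lemma \ref{IntCE}, integration by parts, Lemma \ref{maxmin}, and Lemmas 9 and 11 of \citet{guasoni2012nopart} close the estimate. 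Your alternative suggestion --- an explicit supermartingale $e^{-(1-\gamma)\lambda t}$ times a homogeneous function of $(X_t,X^*_t,F_t)$ --- is a genuinely different route, but it amounts to a full HJB verification with the high-water-mark boundary condition and the correlation $\rho$; the paper constructs such a function only heuristically (Section 3) and pointedly does not make it rigorous. Finally, your subadditivity framing covers only $\gamma<1$: the logarithmic case $\gamma=1$ requires the separate argument the paper gives in Lemma \ref{less} (time reversal, conditioning on the orthogonal component $W^\perp$, the supermartingale Lemma \ref{SuperM}, and Lemma \ref{intparts}), which your sketch does not address. So the proposal frames the problem correctly and completes the easy half, but the core estimate of the theorem is left unproven.
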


\begin{theorem}[Mutual Funds]\label{Mutual}
For a mutual fund manager compensated by proportional fees with rate $\varphi>0$, the investment policies
\begin{align}
\optpix_{t} &= \frac{\mux}{\gamma\left(\sigmax\right)^{2}},\label{optimal1M}\\
\optpif_{t} &= \left(1-\frac{\varphi\int_0^t\hat{X}_{s}ds}{\hat{F}_{t}}\right)\frac{\muf}{\gamma\left(\sigmaf\right)^{2}},\label{optimal2M}
\end{align}
attain the manager's maximum equivalent safe rate of wealth, which equals
\begin{equation}
\ESR_\gamma({\optpix,\optpif}) =
\max\left(\frac{\left(\SX\right)^{2}}{2\gamma} - \varphi,\frac{\left(\SF\right)^{2}}{2\gamma}\right),\label{ESRM}
\end{equation}
where $\hat{X}$ and $\hat{F}$ are the fund and wealth processes corresponding to $\optpix$ and $\optpif$.
\end{theorem}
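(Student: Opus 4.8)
The plan is to establish the value \eqref{ESRM} by matching a lower bound (the policies $(\optpix,\optpif)$ attain the right-hand side) with an upper bound (no admissible policy exceeds it), abbreviating the two candidate rates as $v_X=(\SX)^2/(2\gamma)-\varphi$ and $v_F=(\SF)^2/(2\gamma)$; I would treat $0<\gamma<1$ first, the case $\gamma=1$ following from the same decomposition with $\ln$ in place of the power. For the lower bound I would exploit the two-pocket structure behind \eqref{optimal2M}. Writing $G_t=\varphi\int_0^t\hat X_s\,ds$ for the earned fees left in the safe asset and $H_t=\hat F_t-G_t$ for the remainder, a direct substitution in \eqref{DymFM} shows that $\optpif$ is exactly the policy under which $G$ only accrues the fee stream while $H$ is a constant-proportion portfolio in $S^F$ at the Merton fraction $\muf/(\gamma(\sigmaf)^2)$; hence $\hat F_t=G_t+H_t$ with the pockets decoupled. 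Under $\optpix=\mux/(\gamma(\sigmax)^2)$ from \eqref{optimal1M} the fund $\hat X$ is a geometric Brownian motion, so the elementary moment formula gives $\tfrac1T\ln\mathbb E[\hat X_T^{1-\gamma}]\to(1-\gamma)v_X$ and $\tfrac1T\ln\mathbb E[H_T^{1-\gamma}]\to(1-\gamma)v_F$. A short lemma transfers the fund's rate to the fee pocket: using concavity of $x\mapsto x^{1-\gamma}$ for the lower estimate, its subadditivity across unit intervals for the upper, and the i.i.d.\ structure of the increments $\int_k^{k+1}(\hat X_s/\hat X_k)\,ds$, one gets $\tfrac1T\ln\mathbb E[G_T^{1-\gamma}]\to(1-\gamma)v_X$ whenever $v_X>0$, the only relevant case since $v_F\ge0$. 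Finally $G,H\ge0$ and $0<1-\gamma<1$ give $(G_T\vee H_T)^{1-\gamma}\le\hat F_T^{1-\gamma}\le 2^{1-\gamma}(G_T\vee H_T)^{1-\gamma}$, so the equivalent safe rate of the sum equals the maximum of the two pockets' rates, namely $\max(v_X,v_F)$.

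For the upper bound I would use the linearity of \eqref{DymFM} in $F$. Variation of constants yields $F_T=\mathcal E_{0,T}F_0+\varphi\int_0^T\mathcal E_{s,T}X_s\,ds$, where $\mathcal E_{s,t}$ is the gross return from $s$ to $t$ of a dollar invested in $S^F$ under $\pif$, and subadditivity of $x\mapsto x^{1-\gamma}$ separates $\mathbb E[F_T^{1-\gamma}]$ into an initial-wealth term and a fee term. The key device is a pathwise estimate: for any admissible control the $(1-\gamma)$-power of a growth factor splits into a nonnegative Dol\'eans supermartingale times a deterministic exponential whose rate is bounded, pointwise in the proportion, by the maximum of the per-unit-time log-growth, which is $v_F$ for the factor $\mathcal E$ and $v_X$ for the fund. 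Taking expectations (the supermartingale factor having mean $\le1$) gives the policy-uniform bounds $\mathbb E[\mathcal E_{s,T}^{1-\gamma}\mid\mathcal F_s]\le e^{(1-\gamma)v_F(T-s)}$ and $\mathbb E[X_s^{1-\gamma}]\le X_0^{1-\gamma}e^{(1-\gamma)v_X s}$. Cutting the fee integral at unit times, factoring each piece into an $S^F$-growth from its right endpoint to $T$ (conditionally contributing $v_F$) and a local piece controlled by $X$ (contributing $v_X$), and summing the resulting geometric series in $e^{(1-\gamma)(v_X-v_F)k}$ produces a bound of rate exactly $(1-\gamma)\max(v_X,v_F)$; with the initial-wealth term (rate $(1-\gamma)v_F$) this gives $\limsup_T\tfrac1{(1-\gamma)T}\ln\mathbb E[F_T^{1-\gamma}]\le\max(v_X,v_F)$.

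The hard part is the fee term of the upper bound, specifically the uniform control over all admissible $(\pix,\pif)$ of the within-period integral $\int_k^{k+1}\mathcal E_{s,k+1}X_s\,ds$. The difficulty is that this estimate must be carried out at the critical exponent $1-\gamma<1$ and at no higher power: the per-unit-time log-growth of a position is concave in the proportion only for exponents $\le1$, which is exactly what caps it at the Merton rates $v_X$ and $v_F$, whereas any H\"older or Cauchy--Schwarz step to an exponent $>1$ destroys the concavity and lets an aggressive fund policy inflate the fee income without a compensating penalty. One must therefore bound the within-period integral while staying at exponent $1-\gamma$ and respecting the correlation $\rho$ between the two drivers. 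This is precisely where the trade-off between a riskier fund (more fees) and the concavity penalty is pinned down, and hence where the absence of a diversification gain --- the paper's central conclusion --- is actually earned.
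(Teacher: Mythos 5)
Your two-sided strategy (a policy-uniform upper bound matched by the candidate policies) is the paper's own structure, and your lower bound is essentially the paper's Lemma \ref{optimumM}: split $\hat F_T$ into the fee pocket and the Merton pocket, note both are positive, and compute each pocket's rate --- the paper does this by Jensen's inequality (concavity of $x\mapsto x^p$, $p=1-\gamma$) plus Fubini, which is exactly your ``concavity for the lower estimate.'' That half is fine. The problem is the upper bound, where you have correctly located the crux --- the fee term at the critical exponent $1-\gamma$ --- but you have not proved it. Your plan requires a policy-uniform estimate of the form $\mathbb{E}\bigl[\bigl(\int_k^{k+1}\mathcal{E}_{s,k+1}X_s\,ds\bigr)^{1-\gamma}\bigr]\le C e^{(1-\gamma)v_X k}$, and the proposal's third paragraph only restates the constraints such an estimate must satisfy (``stay at exponent $1-\gamma$, respect $\rho$''), not an argument. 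The two natural routes indeed fail: Jensen in the expectation reduces it to $\mathbb{E}[\mathcal{E}_{s,k+1}X_s]$, i.e.\ an exponent-one moment of the private growth factor, and $\mathbb{E}[\mathcal{E}_{s,k+1}\mid\mathcal F_s]=e^{\int_s^{k+1}\muf\pif_u\,du}$ for deterministic $\pif$, which is unbounded over admissible policies (leverage inflates it without penalty); while pathwise the power cannot be pushed inside the time integral, since on a length-one interval Jensen for the concave power gives $\bigl(\int_k^{k+1}f\,ds\bigr)^{1-\gamma}\ge\int_k^{k+1}f^{1-\gamma}\,ds$, the wrong direction. So the step on which the whole theorem rests is missing, and it is not a routine completion: the product ``fund level at power one $\times$ future private growth'' is genuinely hard to control at the critical exponent.

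The paper's proof of Lemma \ref{lessM} avoids ever forming that product. For $p=1-\gamma\in(0,1)$ it augments wealth by the fund's running maximum, $\tilde F_t = F_t + X^*_t$, and applies It\^o's formula to $\tilde F_t^p$ directly, obtaining the three-term decomposition \eqref{three} in which the fee inflow $\varphi X_t\,dt$ (and the auxiliary $dX^*_t$ term) enters multiplied by $\tilde F_t^{p-1}$. Since $\tilde F_t\ge X^*_t\ge X_t$ and $p-1<0$, one has $\tilde F_t^{p-1}X_t\le X_t^p$: the fund appears at power $p$ \emph{pathwise, before any expectation is taken}. The resulting cross term $e^{pR^{F,\tilde\pi^F}_{t,T}}X_t^p$ is then decoupled by the tower property together with the conditional, policy-uniform, critical-exponent estimate $\mathbb{E}_t[e^{pR^{F,\tilde\pi^F}_{t,T}}]\le e^{p\lambda_2(T-t)}$ (the reverse-H\"older bound \eqref{WEXP}), followed by Fubini and $\mathbb{E}[X_t^p]\le X_0^p e^{p((\SX)^2/(2(1-p))-\varphi)t}$; the $dX^*_t$ term is handled by integration by parts and Lemma 9 of \citet{guasoni2012nopart}. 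Note that conditioning makes the correlation $\rho$ irrelevant here --- no two-driver estimate is needed --- and that no unit-interval splitting or i.i.d.\ structure is used (none is available for arbitrary adapted controls). If you want to complete your route, the lesson from the paper is that you must convert the fund to power $p$ pathwise via a domination trick such as $\tilde F = F + X^*$ (or, for the mutual-fund case, $F+\varphi\int_0^\cdot X_s ds$ would also serve for the $dt$-term), rather than trying to estimate the within-period integral as it stands.
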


Both fund investment policies \eqref{optimal1} and \eqref{optimal1M} imply constant-proportion portfolios, but with different exposures. For a hedge fund manager, the optimal policy corresponds to the risk aversion $\gamma^*$, which is between one and the manager's own risk aversion $\gamma$, and depends on the rate of performance fees $\alpha$. This policy coincides with the one obtained by \citet*{guasoni2012nopart} in the absence of private investment opportunities, which is recovered in our model with $\nu^F=0$. In this case, the private risky opportunity has zero return, and hence it is never used. By contrast, for a mutual fund manager the optimal policy corresponds to the manager's own risk aversion $\gamma$, and does not depend on the rate of proportional fees $\varphi$.

The private policies in \eqref{optimal2} and \eqref{optimal2M} are extremely similar, and are best understood by denoting cumulative fees up to time $t$ by $C_t$, whereby $C_t = \frac{\alpha}{1-\alpha}\left(\bar{\hat{X}}_{t} - X_{0}\right)$ for a hedge fund, while $C_t = \varphi\int_0^t\hat{X}_{s}ds$ for a mutual fund. With this notation, in each model the total risky and safe positions respectively reduce to
\begin{align*}
\hat{F}_{t}\optpif_t &=
\left(\hat{F}_{t}-C_t\right)\frac{\muf}{\gamma\left(\sigmaf\right)^{2}},\\
\hat{F}_{t} (1-\optpif_{t}) &=
\left(\hat{F}_{t}-C_t\right)\left(1-\frac{\muf}{\gamma\left(\sigmaf\right)^{2}}\right)+ C_t .
\end{align*}
These formulas show that for both hedge and mutual funds, the manager divides wealth into earned fees $C_t$, which are set aside in the safe asset, and the rest, which is invested in the constant-proportion portfolio with the manager's own risk aversion $\gamma$. With this investment strategy $\hat{F}_t = F_0e^{\frac{(2\gamma-1)\left(\SF\right)^2t}{2\gamma^2} + \frac{\muf }{\gamma\sigmaf}W^{F}_t} + C_t$, and the managers' welfare in \eqref{ESR} and \eqref{ESRM} equal the maximum between the welfare of fees and the welfare of private investments.

\subsection{Portfolio Separation}

A salient feature of these results is that the risky position in the fund policy is independent of private positions, and vice versa. In other words, the risky investment in the fund $\hat{X}_t\optpix_t$ does not depend on $\muf, \sigmaf$, and the risky investment in wealth $\hat{F}_t\optpif_t = F_0e^{\frac{(2\gamma-1)\left(\SF\right)^2t}{2\gamma^2}+ \frac{\muf W^{F}_t}{\gamma\sigmaf}}\frac{\muf}{\gamma\left(\sigmaf\right)^2}$ does not depend on $\mux, \sigmax$. Furthermore, neither $\optpix$ nor $\optpif$ depend on the correlation $\rho$ between investment opportunities. We call this property portfolio separation.\footnote{Note that the \emph{proportion} of risky private investment depends on total private wealth, hence on cumulative fees $C_t$, which in turn depend on the fund proportion $\optpix_t$. Thus, portfolio separation also holds for proportions, but only \emph{conditionally} on the values of the fund and private accounts. By contrast, the dollar value of the private risky investment $\hat{F}_t\optpif_t$ does not depend on cumulative fees (which are left in the safe investment), and therefore portfolio separation holds even \emph{unconditionally} for dollar positions.}

Portfolio separation entails that a manager with a long horizon has no incentive to hedge personal exposure to future fee income with private risky assets, regardless of their correlation. In fact, hedging does not take place even in the limit case $\mux=\muf, \sigmax=\sigmaf, \rho=1$, which corresponds to a manager who has unfettered access to the fund's investments with private capital, and hence faces a dynamically complete market. This result holds for both high-water mark performance fees, and proportional management fees.

The biggest concern about high-water mark performance fees combined with private investment opportunities is that when the fund and private investment opportunities are highly correlated, managers may short private investments and take a larger risk in the fund, which may be against clients' interests. To understand why such hedging is ineffective, suppose that risky assets are positively correlated, and consider a manager with a fund trading well below its high-water mark. In this case, a short position in the private asset is a poor hedge, because the high-water mark (and hence future income) is insensitive to small variations in the fund value, while the short position reduces the long-term growth of wealth.

Consider now a mutual fund manager. If the two investment opportunities are the same, the manager could hedge the flow of future fees with a strategy that swaps their unique arbitrage-free price with a flow of payments that exactly offset the flow of fees. Such a strategy is equivalent to selling future fees forward, and therefore it entails a short position in the risky asset, leaving the manager with a zero net risky exposure. However, such an exposure is not optimal, because it foregoes the growth in wealth generated by that risky asset, and thus the manager wants to keep a positive allocation to the risky asset, which reverses at least in part, and possibly more, the hedging position.

The mathematical point is that the fee flow is a finite variation process both for the hedge fund manager, since the high-water mark is increasing, and for the mutual fund manager, since cumulative fees are absolutely continuous in time. Thus, though the two investment opportunities may be correlated, the fee flow has zero covariation with the private investment.

Put another way, portfolio separation implies that the manager has no incentive to take more or less risk in the fund, in view of private investment opportunities outside the fund. A priori, it may seem plausible that a manager takes more risk in the fund if outside opportunities are attractive, because more risk is likely to lead to more fees in the short term, which could then be invested in outside opportunities. However, this tactic would only generate a one-time wealth transfer, not a lasting increase in the growth rate of manager's wealth, and hence is irrelevant in the long run.

In summary, the message of portfolio separation is largely positive: if horizons are long, then moral hazards concerns are limited, because performance or proportional fees essentially defeat any hedging incentives between fund and wealth. Yet, portfolio separation has a downside -- attention separation.

\subsection{Attention Separation}

As a consequence of portfolio separation, the manager's welfare in \eqref{ESR} and \eqref{ESRM} is the maximum between the welfare from fees and the welfare from remaining wealth. Thus, while the joint policies in \eqref{optimal1}, \eqref{optimal2} and \eqref{optimal1M}, \eqref{optimal2M} are optimal in all cases, they are never unique. Indeed, if the manager's welfare is due to the fund (that is, fees), then the private investment opportunity  becomes irrelevant, and $\optpif$ can be replaced, for example, with the policy $\pif=0$. Vice versa, if remaining wealth drives the welfare, then the fund policy is irrelevant, and utter negligence ($\pix=0$) will deliver the same result.

This rather extreme implication is driven by the assumption of a long-horizon, which focuses on the risk-adjusted long-term growth rate, neglecting all short-term effects. Still, it makes it clear that a manager's commitment to the fund will easily wane, unless its investments are superior to outside opportunities. The manager's attention inevitably shifts to either the fund, or wealth, whichever is more profitable.

For a hedge fund, equation \eqref{ESR} shows that the manager focuses on the fund if and only if the fund's Sharpe ratio $\SX$ exceeds the private Sharpe ratio $\SF$ by a multiple, which depends on the fund's rate of performance fees and on the manager's risk aversion
\begin{equation}
\frac{\SX}{\SF} > \left(1+\frac{\alpha}{(1-\alpha)\gamma}\right)^{\frac{1}{2}}.
\label{eq:minatt}
\end{equation}
For example, in the case of a logarithmic manager $\gamma=1$, and of performance fees of $20\%$, the manager focuses on the fund, provided that its Sharpe ratio is 11.8\% higher than that of private investments. Such a condition is likely to hold in practice: \citet*{getmansky2004econometric} find high Sharpe ratios in the hedge fund industry, even after controlling for return smoothing and illiquidity.

The right-hand side in \eqref{eq:minatt}, which represents the manager's attention threshold, grows as risk aversion declines. The explanation is as follows: as risk aversion declines to zero, the effective risk aversion $\gamma^*=\alpha+(1-\alpha)\gamma$ induced by the high-water mark converges to $\alpha$, which entails finite leverage in the fund. On the other hand, the private portfolio is driven by the true risk aversion $\gamma$, which declines to zero, leading to increasingly high leverage. Because leverage can arbitrarily  magnify expected returns, for sufficiently low risk aversion the private portfolio is always more attractive.

Analagously, for a mutual fund, the manager focuses on the fund under the condition $\frac{(\SX)^2}2-\frac{(\SF)^2}2 > \gamma \varphi$, which means that the squared Sharpe ratio of the fund must exceed its private counterpart by twice the product between the risk aversion and the rate of management fees. In other words, because proportional fees do not affect the fund investment policy, but merely reduce its return, the condition is that the squared Sharpe ratio of the fund must exceed the one of wealth enough to  compensate for the loss due to fees. A lower risk aversion means that the fund has a higher exposure to the risky asset. Because the fund's equivalent safe rate before fees $\frac{(\SX)^2}{2\gamma}$ increases as $\gamma$ decreases, while the proportional fee $\varphi$ remains fixed, $\varphi$ becomes a smaller fraction of total welfare $\frac{(\SX)^2}{2\gamma}-\varphi$, and therefore it is less important for an aggressive manager than it is for a conservative one. As long as the fund is more attractive than private investments $\SX>\SF$, a sufficiently aggressive manager will focus on the fund even if the fee is high, because its relative impact is small.

Overall, attention separation brings both some bad news, as the manager may grossly neglect the fund if it does not offer sufficiently attractive returns, and some good news, since the conditions for attention to the fund seem mild, and, in the case of hedge funds, a manager with low risk aversion is likely to leverage wealth rather than the fund.

\subsection{Growth and Fees}

A puzzling feature of extant models of performance fees is that a manager prefers lower performance fees, i.e. welfare is decreasing in $\alpha$. The explanation of this finding, common to the models of \citet*{panageas2009hwm} with risk-neutrality, and of \citet*{guasoni2012nopart} with risk aversion, is that higher fees today reduce the growth rate of the fund, leading to lower fees tomorrow. Both models assume that fees are invested at the safe rate in the manager's account, and raise the question of whether reinvestment can induce the preference for higher fees.

Equation \eqref{ESR} offers a qualified negative answer. If private risky investments are available, there will be some threshold $\alpha^*$, below which the manager prefers lower fees, as in the absence of private investments, and above which the manager is indifferent to changes in fees, because the fund becomes irrelevant, as the welfare is entirely driven by wealth. This threshold is in fact the value of $\alpha$ for which \eqref{eq:minatt} holds as equality.

This result is essentially a consequence of portfolio separation. Because the manager is unable to compound the fund growth with wealth growth, either private investments make fees negligible, or are negligible themselves.
Overall, the model shows that the reinvestment value of fees is not sufficient to obtain the manager's preference for higher payout rates, which in turn is likely to involve intertemporal consumption or fund flows.

\section{Heuristic Solution}

This section derives a candidate optimal solution for maximizing the ESR of a hedge fund manager's wealth, who is compensated by high-water mark performance fees, with heuristic stochastic control arguments. For brevity, this argument is presented only for the case of logarithmic utility, while the rigorous proof for all cases  $0 < \gamma \leq 1$ is in the Appendix.

To ease notation, in the rest of the paper we drop the superscripts $\pix$ and $\pif$ from $X_{t}$ and $F_{t}$, if no ambiguity arises.
Denoting the state variable $\bar X_{t} - X_{0}$ by $Z_{t}$, the manager's value function of utility maximization from terminal wealth is
\begin{equation*}
V(t,x,f,z) = \sup_{\pix,\pif}\mathbb{E}_{t}\left[\ln F_{T} | X_{t} = x, F_{t} = f, Z_{t} = z\right].
\end{equation*}
For $V$ as a stochastic process (because it is a function of $X_t$, $F_t$ and $Z_t$), by I\^to's formula,
\begin{align*}
dV&= \sigmax x \pix_{t}\frac{\partial V}{\partial x}dW^{X}_{t} + \sigmaf f\pif_{t}\frac{\partial V}{\partial f}dW^{F}_{t}+ \left(\frac{\partial V}{\partial z}+ \frac{\alpha}{1-\alpha}\left(\frac{\partial V}{\partial f} -
\frac{\partial V}{\partial x}\right)\right)d\bar X_{t}\\
+&\left(\frac{\partial V}{\partial t} + x\mux\pix_{t} \frac{\partial V}{\partial x} + f\muf\pif_{t}
\frac{\partial V}{\partial f}\right)dt\\
 +& \left(\frac{\left(\sigmax x \pix_{t}\right)^{2}}{2}\frac{\partial^2V}{\partial x^2} +
\frac{\left(\sigmaf f \pif_{t}\right)^{2}}{2}\frac{\partial^2V}{\partial f^2} + \rho \sigmax \sigmaf xf\pix_{t} \pif_{t}
\frac{\partial^2V}{\partial x\partial f}\right)dt.
\end{align*}
\normalsize
Thus the Hamilton-Jacobi-Bellman (HJB) equation for $V(t,x,f,z)$ is, for $0<x<z + X_{0}$, (where $X_{0}$ is the initial fund's value)
\small
\begin{equation*}
\textstyle
-\frac{\partial V}{\partial t} = \sup\limits_{\pix,\pif}\left(x\mux\pix \frac{\partial V}{\partial x} + f\muf\pif \frac{\partial V}{\partial f} +
\frac{\left(\sigmax x \pix\right)^{2}}{2}\frac{\partial^2 V}{\partial x^2} +
\frac{\left(\sigmaf f \pif\right)^{2}}{2}\frac{\partial^2V}{\partial f^2} + \rho \sigmax \sigmaf xf\pix \pif
\frac{\partial^2V}{\partial x\partial f}\right),
\end{equation*}
\normalsize
with the boundary condition:
\begin{equation*}
\frac{\partial V}{\partial z} + \frac{\alpha}{1-\alpha}(\frac{\partial V}{\partial f} - \frac{\partial V}{\partial x}) = 0  \text{ when }  x = z + X_{0}.
\end{equation*}
Note that here and in the rest of this section we use $\pix$ and $\pif$ (previously used to denote investment strategies as stochastic processes) to denote feedback controls, which are functions of state variables, when no ambiguity arises.

By the usual scaling property of logarithmic utility, and in the long-horizon limit, we can rewrite the value function as $V(t,x,f,z) = -\beta t + \ln z + v(\xi,\phi)$, where
$\xi = \ln\frac{x}{z}$ and $\phi = \ln\frac{f}{z}$. Then in terms of $v$, the HJB equation becomes
\small
\begin{equation*}
\textstyle
\beta = \sup\limits_{\pix,\pif}\left(\mux\pix \frac{\partial v}{\partial \xi} + \muf\pif \frac{\partial v}{\partial \phi} +
\frac{\left(\sigmax\pix\right)^{2}}{2}(\frac{\partial^2v}{\partial\xi^2}-\frac{\partial v}{\partial \xi}) +
\frac{\left(\sigmaf\pif\right)^{2}}{2}(\frac{\partial^2v}{\partial \phi^2}-\frac{\partial v}{\partial\phi}) + \rho \sigmax
\sigmaf\pix\pif \frac{\partial^2v}{\partial\xi\partial\phi}\right),
\end{equation*}
\normalsize
for $ 0<x<z+X_{0}$, while the boundary condition reduces to
\begin{equation*}
(1-\alpha) - \frac{\partial v}{\partial \xi}\left(\alpha \exp(-\xi) + (1-\alpha)\right) +
\frac{\partial v}{\partial \phi}\left(\alpha \exp(-\phi) - (1-\alpha)\right) = 0,
\end{equation*}
when $x = z + X_{0}$.

In the long run, the initial fund's value $X_{0}$ should not matter in this optimization problem. Furthermore, since $\bar X_{t}$ becomes large in the long run at optimum, $\bar X_{t}\approx Z_{t} = \bar X_{t} - X_{0}$, and we can approximate the HJB equation and the boundary condition with
\small
\begin{equation}
\textstyle
\beta = \sup\limits_{\pix,\pif}\left(\mux\pix \frac{\partial v}{\partial \xi} + \muf\pif \frac{\partial v}{\partial \phi} +
\frac{\left(\sigmax\pix\right)^{2}}{2}(\frac{\partial^2v}{\partial\xi^2}-\frac{\partial v}{\partial \xi}) +
\frac{\left(\sigmaf\pif\right)^{2}}{2}(\frac{\partial^2v}{\partial \phi^2}-\frac{\partial v}{\partial\phi}) + \rho \sigmax
\sigmaf\pix\pif \frac{\partial^2v}{\partial\xi\partial\phi}\right) , \label{HJB}
\end{equation}
\normalsize
when $-\infty<\xi<0$, and
\begin{equation}
(1-\alpha) - \frac{\partial v}{\partial \xi} +
\frac{\partial v}{\partial \phi}\left(\alpha \exp(-\phi) - (1-\alpha)\right) = 0,\label{BC}
\end{equation}
when $\xi = 0$.

The first order conditions for $\pix$ and $\pif$ in  (\ref{HJB}) are, as functions of $\xi$ and $\phi$,
\begin{align}
\optpix(\xi,\phi) &= -\frac{\sigmaf\mux
\frac{\partial v}{\partial\xi}\left(\frac{\partial^2v}{\partial\phi^2}-\frac{\partial v}{\partial\phi}\right)-\rho\sigmax \muf
\frac{\partial^2v}{\partial\xi\partial\phi}\frac{\partial v}{\partial\phi}}{\left(\sigmax\right)^{2}\sigmaf\left(\left(\frac{\partial^2v}{\partial\xi^2}-\frac{\partial v}{\partial\xi}\right)\left(\frac{\partial^2v}{\partial\phi^2}-\frac{\partial v}{\partial\phi}\right)-\left(\rho\frac{\partial^2v}{\partial\xi\partial\phi}\right)^{2}\right)},\label{policy1}\\
\optpif(\xi,\phi) &=-\frac{\sigmax\muf
\frac{\partial v}{\partial\phi}\left(\frac{\partial^2v}{\partial\xi^2}-\frac{\partial v}{\partial\xi}\right)-\rho\sigmaf \mux
\frac{\partial^2v}{\partial\xi\partial\phi}\frac{\partial v}{\partial\xi}}{\sigmax\left(\sigmaf\right)^{2}\left(\left(\frac{\partial^2v}{\partial\xi^2}-\frac{\partial v}{\partial\xi}\right)\left(\frac{\partial^2v}{\partial\phi^2}-\frac{\partial v}{\partial\phi}\right)-\left(\rho\frac{\partial^2v}{\partial\xi\partial\phi}\right)^{2}\right)}.\label{policy2}
\end{align}
Plugging these two maximizers into (\ref{HJB}) yields
\begin{equation}
\textstyle
\beta =
\frac{\left(\mux\sigmaf\frac{\partial v}{\partial\xi}\right)^{2}\left(\frac{\partial^2v}{\partial\phi^2}-\frac{\partial v}{\partial\phi}\right)-2\rho\sigmax\sigmaf
\muf\mux
\frac{\partial v}{\partial \xi}\frac{\partial v}{\partial \phi}\frac{\partial^2v}{\partial\xi\partial\phi}+\left(\sigmax\muf\frac{\partial v}{\partial\phi}\right)^{2}\left(\frac{\partial^2v}{\partial\xi^2}-\frac{\partial v}{\partial \xi}\right)}{2\left(\sigmax\sigmaf\right)^{2}\left(\left(\frac{\partial^2v}{\partial\xi^2}-\frac{\partial v}{\partial\xi}\right)\left(\frac{\partial^2v}{\partial\phi^2}-\frac{\partial v}{\partial\phi}\right)-\left(\rho\frac{\partial^2v}{\partial\xi\partial\phi}\right)^{2}\right)}, \label{HJB2}
\end{equation}
when $-\infty<\xi<0$.

(\ref{BC}) implies that $\frac{\partial v}{\partial \xi} - \frac{\partial v}{\partial \phi}\left(\alpha \exp(-\phi) - (1-\alpha)\right)$ is a constant for any $\phi \in \mathbb{R}$
when $\xi = 0$. We guess the solution to (\ref{HJB2}) as $v(\xi,\phi) = \delta\xi + b\ln|\alpha - (1-\alpha)\exp(\phi)|$, where $b$ and $\delta$ are parameters to be found.
Plugging this guess into (\ref{BC}) gives $b = 1
- \frac{\delta}{1-\alpha}$. Finally, plugging $v(\xi,\phi) = \delta\xi + \left(1-\frac{\delta}{1-\alpha}\right)\ln|\alpha - (1-\alpha)\exp(\phi)|$ into (\ref{HJB2}),(\ref{policy1}) and (\ref{policy2}) yields
\begin{align}
\beta(\delta) &=\frac{\delta}{2}\left(\SX\right)^{2}+ \left(1-\frac{\delta}{1-\alpha}\right)\frac{1}{2}\left(\SF\right)^{2},\nonumber\\
\optpix (\xi,\phi)&=\frac{\mux}{\left(\sigmax\right)^{2}},\label{logoptimal1}\\
\optpif (\xi,\phi)&= \left(1-\frac{\alpha
z}{(1-\alpha)f}\right)\frac{\muf}{\left(\sigmaf\right)^{2}}.\label{logoptimal2}
\end{align}
Note that $\optpix$ and $\optpif$ do not depend on $\delta$, and taking $\delta$ to be $0$ when
$\frac{1}{2}\left(\SF\right)^{2} \geq \frac{1-\alpha}{2}\left(\SX\right)^{2}$ and $1-\alpha$ when
$\frac{1}{2}\left(\SF\right)^{2} < \frac{1-\alpha}{2}\left(\SX\right)^{2}$ yields (\ref{ESR}). (\ref{logoptimal1}) and (\ref{logoptimal2}) help us conjecture the optimal policies for the fund and the private investments: the manager puts the performance fees, which are $\frac{\alpha z}{(1-\alpha)f}$ of wealth in the safe asset, and invests the rest in a Merton portfolio. For the fund, the same strategy as in \citet*{guasoni2012nopart} is adopted, as to ensure the maximum ESR from performance fees.

\section{Conclusion}

A fund manager with a long horizon does not use private investment opportunities to hedge personal exposure to fund performance. This result holds for a hedge fund manager, paid by performance fees with a high-water mark provision, and for a mutual fund manager, paid by management fees proportional to the fund's assets.
Indeed, optimal  policies for the fund and wealth are separate, in that each of them depends only on the respective investment opportunity. The resulting welfare is the maximum welfare between that of fees and of private investments, without any diversification gain. Thus, the manager effectively focuses on either the fund or wealth, whichever is more attractive.

An important open question, not discussed in this paper, is the quantitative effect of finite horizons. In practice, a very short horizon is likely to lead to substantially different conclusions, as typical investment opportunities offer relatively low returns, but allow an unscrupulous manager to maximize the option-like, one-time payoffs implied by fees by increasing the fund's volatility near the horizon.
The question is to understand whether the short-term or the long-term considerations prevail for medium investment horizons, such as five to ten years, depending on the ratio between the manager's and the fund's assets.

\appendix\normalsize

\section{Verification Theorems}
This appendix contains the proofs of Theorem \ref{Main} and Theorem \ref{Mutual}.
\subsection{Proof of Theorem \ref{Main}}
We start by defining the following processes, which represent cumulative log returns, before fees, on the fund and wealth respectively,
\begin{align*}
R^{X}_{t} &= \int_{0}^{t}\left[\left(\mux\pix_{s}-\frac{1}{2}\left(\sigmax\pix_{s}\right)^{2}\right)ds
+ \sigmax\pix_{s} dW_{s}^{X}\right],\\
R^{F}_{t} &= \int_{0}^{t}\left[\left(\muf \pif_{s}-\frac{1}{2}\left(\sigmaf\pif_{s}\right)^{2}\right)ds
+ \sigmaf\pif_{s} dW_{s}^{F}\right],\\
R^{X}_{t,T} &= R^{X}_{T}-R^{X}_{t},\\
R^{F}_{t,T} &= R^{F}_{T}-R^{F}_{t}.
\end{align*}
$R^{X}$ and $R^{F}$ depend on $\pix$ and $\pif$, respectively, and should be denoted as $R^{X,\pix}$ and $R^{F,\pif}$. We drop the superscripts $\pix$ and $\pif$ for ease of notation, unless it causes ambiguity.

Equations (\ref{DymX}), (\ref{DymF}), Proposition $7$ and Lemma $8$ in \citet*{guasoni2012nopart} imply that
\begin{align}
X_{t} &= X_{0}e^{R^{X}_{t} - \alpha \bar R^{X}_{t}},\nonumber\\
\bar X_{t} &= X_{0}e^{(1-\alpha)\bar R^{X}_{t}},\nonumber\\
F_{t} &=F_{0}e^{R^{F}_{t}} +
\frac{\alpha}{1-\alpha}\int_{0}^{t}e^{R^{F}_{s,t}}d\bar X_{s}\label{F}.
\end{align}

The discussion begins with two simple lemmas that are used often in the proof of the theorems.
\begin{lemma}\label{maxmin}
Let $\left(X_{t}\right)_{t\geq 0}$ and $\left(Y_{t}\right)_{t\geq 0}$ be two continuous stochastic processes, and define $\bar X_{t} = \displaystyle\max_{0\leq s\leq t}X_{s}$ and $\underbar Y_{t} = \displaystyle\min_{0\leq s\leq t}Y_{s}$. Then $\bar X_{t} + \underbar Y_{t} \leq \left(\overline{X+Y}\right)_{t}$.
\end{lemma}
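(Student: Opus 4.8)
The plan is to prove the inequality pathwise, since it involves no probabilistic structure beyond continuity: for each fixed $\omega$ and fixed horizon $t$, the maps $s\mapsto X_s(\omega)$ and $s\mapsto Y_s(\omega)$ are continuous on the compact interval $[0,t]$, so all three suprema appearing in the statement ($X^*_t$, $Y_{*t}$, and $(X+Y)^*_t$) are in fact attained maxima/minima. Fixing such an $\omega$, I would work entirely with these deterministic continuous functions and verify the inequality realization by realization; the stochastic conclusion then follows immediately.

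The key idea is to evaluate the sum $X+Y$ at the time where $X$ attains its running maximum. Concretely, by continuity and compactness choose $s^\ast\in[0,t]$ with $X_{s^\ast}=X^*_t$. At this single time the value of $X+Y$ is of course dominated by its own maximum, i.e.\ $X_{s^\ast}+Y_{s^\ast}\le (X+Y)^*_t$. On the other hand $Y_{s^\ast}\ge Y_{*t}$ by definition of the running minimum. Chaining these two facts gives
\[
X^*_t+Y_{*t}=X_{s^\ast}+Y_{*t}\le X_{s^\ast}+Y_{s^\ast}\le (X+Y)^*_t,
\]
which is exactly the claim.

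There is essentially no obstacle here beyond bookkeeping: the only point that needs a word of justification is that the extrema are genuinely attained (rather than merely approached), which is guaranteed by continuity of the paths on the compact time interval, so that the choice of the maximizing time $s^\ast$ is legitimate. I would state this attainment explicitly, then present the two-line inequality above. No joint distribution, correlation, or martingale property of $X$ and $Y$ is used, which is why the lemma applies verbatim to the fund value $X$ and to arbitrary continuous processes elsewhere in the proof.
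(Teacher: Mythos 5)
Your proof is correct and follows essentially the same pathwise argument as the paper: both chain the two elementary inequalities $Y_{*t}\le Y_s$ and $X_s+Y_s\le (X+Y)^*_t$. The only cosmetic difference is that the paper notes these inequalities hold for \emph{every} $s\in[0,t]$ and then takes the supremum over $s$, which sidesteps the attainment/compactness discussion entirely; your evaluation at the maximizing time $s^\ast$ is equally valid but makes continuity do work that the paper's phrasing does not need.
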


\begin{proof}
Since $\left(\overline{X+Y}\right)_{t} \geq X_{s} + Y_{s} \geq X_{s} + \underbar Y_{t}$ for every $0 \leq s \leq t$, it follows that $\left(\overline{X+Y}\right)_{t}\geq \bar X_{t} + \underbar Y_{t}$.
\end{proof}

\begin{lemma}\label{IntCE}
Let $\left(\mathcal{G}_{t}\right)_{0\leq t\leq T}$ be a continuous filtration, $\mathcal{F} \subseteq \mathcal{G}_{0}$ be a $\sigma$-algebra, and denote by
$\mathbb{E}_{\mathcal{F}}$ and $\mathbb{E}_{\mathcal{G}_{t}}$ the conditional expectation with respect to $\mathcal{F}$ and $\mathcal{G}_{t}$, respectively. If $\left(A_{t}\right)_{t\geq 0}$ is a continuous and
increasing process adapted to $\mathcal{G}_{t}$ for $0\leq t\leq T$, and $\left(X_{t}\right)_{t\geq 0}$ is a positive, continuous stochastic process such that
$\mathbb{E}_{\mathcal{G}_{t}}\left[X_{t,T}\right] \leq C$, for every $0\leq t\leq T$, and some constant $C$, where $X_{t,T} = \frac{X_{T}}{X_{t}}$, then
$\mathbb{E}_{\mathcal{F}}\left[\int_{0}^{T}X_{t,T}dA_{t}\right] \leq C\mathbb{E}_{\mathcal{F}}\left[A_{T}-A_{0}\right]$.
\end{lemma}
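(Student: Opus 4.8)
The plan is to reduce the Stieltjes integral against the increasing process $A$ to Riemann sums, bound each sum by conditioning at the \emph{right} endpoint of every subinterval, and then pass to the limit. Since $X$ is positive and continuous, each path $t\mapsto X_{t,T}=X_T/X_t$ is continuous on $[0,T]$, and since $A$ is continuous and increasing (hence of finite variation), the Riemann--Stieltjes integral $\int_0^T X_{t,T}\,dA_t$ exists pathwise as the common limit of its Riemann sums, so I am free to evaluate the integrand at the right endpoints. Fixing a partition $0=t_0<t_1<\cdots<t_n=T$, I would work with
\[
S_n=\sum_{i=0}^{n-1}X_{t_{i+1},T}\,(A_{t_{i+1}}-A_{t_i}).
\]

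The crucial point is that the right-endpoint choice makes the multiplier $X_{t_{i+1}}^{-1}(A_{t_{i+1}}-A_{t_i})$ measurable with respect to $\mathcal{G}_{t_{i+1}}$, since both $X$ and $A$ are adapted; only the factor $X_T$ remains to be averaged. Conditioning each summand on $\mathcal{G}_{t_{i+1}}$ and pulling out the measurable factor gives
\[
\mathbb{E}_{\mathcal{G}_{t_{i+1}}}\left[X_{t_{i+1},T}(A_{t_{i+1}}-A_{t_i})\right]=(A_{t_{i+1}}-A_{t_i})\,\mathbb{E}_{\mathcal{G}_{t_{i+1}}}\left[X_{t_{i+1},T}\right]\le C\,(A_{t_{i+1}}-A_{t_i}),
\]
where the inequality uses the hypothesis $\mathbb{E}_{\mathcal{G}_t}[X_{t,T}]\le C$ at $t=t_{i+1}$ together with $A_{t_{i+1}}-A_{t_i}\ge 0$. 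Because $\mathcal{F}\subset\mathcal{G}_0\subset\mathcal{G}_{t_{i+1}}$, the tower property and monotonicity of $\mathbb{E}_{\mathcal{F}}$ then yield $\mathbb{E}_{\mathcal{F}}[X_{t_{i+1},T}(A_{t_{i+1}}-A_{t_i})]\le C\,\mathbb{E}_{\mathcal{F}}[A_{t_{i+1}}-A_{t_i}]$. Summing over $i$ and telescoping the right-hand side, I obtain for \emph{every} partition
\[
\mathbb{E}_{\mathcal{F}}[S_n]\le C\,\mathbb{E}_{\mathcal{F}}[A_T-A_0].
\]

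Finally I would let the mesh of the partition tend to zero, so that $S_n\to\int_0^T X_{t,T}\,dA_t$ pathwise. Since every summand is nonnegative ($X>0$ and $A$ increasing), conditional Fatou's lemma gives
\[
\mathbb{E}_{\mathcal{F}}\left[\int_0^T X_{t,T}\,dA_t\right]=\mathbb{E}_{\mathcal{F}}\left[\liminf_n S_n\right]\le\liminf_n \mathbb{E}_{\mathcal{F}}[S_n]\le C\,\mathbb{E}_{\mathcal{F}}[A_T-A_0],
\]
which is the claim (trivially true if the right-hand side is infinite). The one delicate step is this interchange of the limit and the conditional expectation; I expect it to be the main obstacle, but it is handled cleanly by the nonnegativity of the integrand via conditional Fatou, so no uniform integrability or dominating function is needed. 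I emphasize that evaluating at the right rather than the left endpoint is essential: it is precisely what renders the increment factor $\mathcal{G}_{t_{i+1}}$-measurable and allows the conditional bound on $X_{t,T}$ to be applied summand by summand.
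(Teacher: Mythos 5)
Your proposal is correct and follows essentially the same route as the paper's proof: right-endpoint Riemann sums for the Stieltjes integral, the tower property with $\mathbb{E}_{\mathcal{G}_{t_{k}}}$ to apply the bound $\mathbb{E}_{\mathcal{G}_t}[X_{t,T}]\le C$ summand by summand, and Fatou's lemma (justified by nonnegativity) to interchange the limit with the conditional expectation. The only cosmetic difference is the order of steps — the paper applies Fatou first and then conditions inside the $\liminf$, whereas you bound $\mathbb{E}_{\mathcal{F}}[S_n]$ for each partition and then pass to the limit — which is mathematically the same argument.
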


\begin{proof}
Since $A_{t}$ is an increasing process, for a partition of $[0,T]$:  $0 = t^{n}_{0} < t^{n}_{1} < \cdots < t^{n}_{n} = T$, $t^{n}_{k}= \frac{k}{n}T$ for $1\leq k \leq n$,
\begin{equation*}
\int_{0}^{T}X_{t,T}dA_{t} = \lim_{n\rightarrow\infty}\sum_{k=1}^{n}X_{t^{n}_{k},T}\left(A_{t^{n}_{k}} - A_{t^{n}_{k-1}}\right).
\end{equation*}
Thus,
\begin{equation*}
\mathbb{E}_{\mathcal{F}}\left[\int_{0}^{T}X_{t,T}dA_{t}\right] = \mathbb{E}_{\mathcal{F}}\left[\lim_{n\rightarrow\infty}\sum_{k=1}^{n}X_{t^{n}_{k},T}\left(A_{t^{n}_{k}} - A_{t^{n}_{k-1}}\right)\right],
\end{equation*}
and by Fatou's Lemma and the tower property of conditional expectation, the right-hand side is less than or equal to
\begin{equation}
\liminf_{n\rightarrow\infty}\mathbb{E}_{\mathcal{F}}\left[\sum_{k=1}^{n}X_{t^{n}_{k},T}\left(A_{t^{n}_{k}} - A_{t^{n}_{k-1}}\right)\right] = \liminf_{n\rightarrow\infty}\mathbb{E}_{\mathcal{F}}\left[\sum_{k=1}^{n}\mathbb{E}_{\mathcal{G}_{t^{n}_{k}}}\left[X_{t^{n}_{k},T}\right]\left(A_{t^{n}_{k}} - A_{t^{n}_{k-1}}\right)\right].\label{sum}
\end{equation}
Since $\mathbb{E}_{\mathcal{G}_{t}}\left[X_{t,T}\right] \leq C$ for every $0\leq t\leq T$, (\ref{sum}) is less than or equal to
\begin{equation*}
C\liminf_{n\rightarrow\infty}\mathbb{E}_{\mathcal{F}}\left[\sum_{k=1}^{n}\left(A_{t^{n}_{k}} - A_{t^{n}_{k-1}}\right)\right] = C\liminf_{n\rightarrow\infty}\mathbb{E}_{\mathcal{F}}\left[A_{T} - A_{0}\right] = C\mathbb{E}_{\mathcal{F}}\left[A_{T} - A_{0}\right].
\end{equation*}
\end{proof}

The proof of Theorem \ref{Main} is divided into the following two steps.
First, any investment policies $\pix$ and $\pif$ satisfy the following:
\begin{equation}
\ESR_{\gamma}(\pi^{X},\pi^{F}) \leq \max\left(\frac{(1-\alpha)\left(\SX\right)^2}{2\gamma^{*}},\frac{\left(\SF\right)^2}{2\gamma}\right),\nonumber
\end{equation}
which is proved in Lemma \ref{less}.
Second, this upper bound is achieved by the candidate optimal policies in (\ref{optimal1}) and (\ref{optimal2}), as proved in Lemma \ref{optimum}.

\begin{lemma}\label{less}
For a hedge fund manager compensated by high-water mark performance fees with rate $0<\alpha<1$, the $\ESR$ induced by any investment strategies $\pix$ and $\pif$ satisfies
\begin{equation*}
\ESR_{\gamma}(\pif,\pif) \leq \lambda_1 = \max\left(\frac{(1-\alpha)\left(\SX\right)^{2}}{2\gamma^{*}},\frac{\left(\SF\right)^{2}}{2\gamma}\right), \text{ for any } 0 < \gamma\leq 1.
\end{equation*}
\end{lemma}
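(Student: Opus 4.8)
The plan is to exploit the closed form \eqref{F}, which expresses terminal wealth as the sum of a pure private-investment term $F^{(1)}_{T}=F_{0}e^{R^{F}_{T}}$ and a reinvested-fees term $F^{(2)}_{T}=\frac{\alpha}{1-\alpha}\int_{0}^{T}e^{R^{F}_{s,T}}\,dX^{*}_{s}$. Since $0<\gamma\le 1$, the map $x\mapsto x^{1-\gamma}$ is concave and subadditive, so $\mathbb{E}\!\left[(F_{T})^{1-\gamma}\right]\le \mathbb{E}\!\left[(F^{(1)}_{T})^{1-\gamma}\right]+\mathbb{E}\!\left[(F^{(2)}_{T})^{1-\gamma}\right]$. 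Because $\tfrac{1}{(1-\gamma)T}\ln(a_{T}+b_{T})$ converges to the larger of the two exponential rates, it suffices to bound the growth rate of each moment separately: I will show the first is at most $\frac{(\SF)^{2}}{2\gamma}$ and the second at most $\lambda_{1}$ itself, whence the sum has rate $\le\lambda_{1}$. The degenerate case $\gamma=1$ I would treat by the logarithmic analogue $\ln F_{T}\le \ln 2+\max\bigl(\ln F^{(1)}_{T},\ln F^{(2)}_{T}\bigr)$, or by continuity of the bound in $\gamma$.

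For the private term, observe that for \emph{any} strategy the drift of $e^{(1-\gamma)R^{F}_{t}}$ equals $(1-\gamma)\bigl(\muf\pif_{t}-\tfrac{\gamma}{2}(\sigmaf\pif_{t})^{2}\bigr)e^{(1-\gamma)R^{F}_{t}}$, and the bracket is pointwise dominated by the Merton value $\frac{(\SF)^{2}}{2\gamma}$. Hence $e^{(1-\gamma)R^{F}_{t}-(1-\gamma)\frac{(\SF)^{2}}{2\gamma}t}$ is a nonnegative local supermartingale, therefore a genuine supermartingale, giving $\mathbb{E}\!\left[(F^{(1)}_{T})^{1-\gamma}\right]\le F_{0}^{1-\gamma}e^{(1-\gamma)\frac{(\SF)^{2}}{2\gamma}T}$. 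Run from time $s$, the same computation yields the conditional estimate $\mathbb{E}_{\mathcal{F}_{s}}\bigl[e^{(1-\gamma)R^{F}_{s,T}}\bigr]\le e^{(1-\gamma)\frac{(\SF)^{2}}{2\gamma}(T-s)}$, which is exactly the input required by Lemma \ref{IntCE}.

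The heart of the matter is the reinvested-fees term. Substituting $X^{*}_{s}=X_{0}e^{(1-\alpha)R^{X*}_{s}}$ from \eqref{watermark} exposes the running maximum $R^{X*}$, and raising the fees to the power $1-\gamma$ places the exponent $(1-\gamma)(1-\alpha)$ on $R^{X*}$; the algebraic identity $1-(1-\gamma)(1-\alpha)=\gamma^{*}$, with $\gamma^{*}$ as in \eqref{optimal1}, is precisely what turns the log-return optimization into the effective risk aversion $\gamma^{*}$ and produces the constant $\frac{(1-\alpha)(\SX)^{2}}{2\gamma^{*}}$. I would control the reinvestment through Lemma \ref{IntCE}, applied with the increasing process $A_{t}=X^{*}_{t}$ and the growth factor $e^{(1-\gamma)R^{F}_{s,T}}$, whose conditional expectation is bounded by the constant just obtained; the point is that fees form a continuous finite-variation (hence zero-covariation) flow, so their reinvested value contributes at most the private rate per unit holding time, while fee generation contributes at most the fund rate, and the maximum of the two — rather than their sum — emerges. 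The remaining, purely fund-side estimate, namely that $\mathbb{E}\!\left[e^{(1-\gamma)(1-\alpha)R^{X*}_{T}}\right]$ grows at rate at most $(1-\gamma)\frac{(1-\alpha)(\SX)^{2}}{2\gamma^{*}}$ for every adapted $\pix$, I would obtain from a supermartingale tailored to the reflecting boundary $\{R^{X}=R^{X*}\}$ (a free-boundary/Neumann condition of the type underlying \eqref{BC}), as established for the fee-only problem in \citet{guasoni2012nopart} via equations \eqref{X}--\eqref{watermark}.

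The main obstacle is this reinvested-fees term, for two intertwined reasons. First, the high-water mark forces the \emph{running maximum} into the estimate, and since $\mathbb{E}[e^{\lambda R^{X*}_{T}}]\ge\mathbb{E}[e^{\lambda R^{X}_{T}}]$ one cannot discard the maximum and replace it by the endpoint; controlling it for adapted controls requires the reflecting-boundary supermartingale, which is exactly where the effective risk aversion $\gamma^{*}$ is forced upon the problem. Second, the power $1-\gamma$ must be retained all the way through the reinvestment integral — a Jensen or risk-neutral bound would collapse $\gamma$ to $0$ and overshoot the target — so the concave power and the integral-against-increasing-process bound of Lemma \ref{IntCE} must be reconciled rather than applied in sequence. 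Once both estimates are in hand, combining the two terms and recalling $\frac{(\SF)^{2}}{2\gamma}\le\lambda_{1}$ yields the claim, and I would note in passing that the correlation $\rho$ never enters any of these bounds, consistent with the portfolio-separation conclusion of Theorem \ref{Main}.
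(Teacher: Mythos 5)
Your skeleton matches the paper's in outline — split wealth into the pure private-investment part and the reinvested-fee part, bound the private part by a supermartingale argument (your drift computation is a correct substitute for the paper's H\"older bound \eqref{WEXP}), delegate the fund-side running-maximum estimate to Lemma 9 of \citet{guasoni2012nopart}, and combine rates via Lemma 1.2.15 of \citet{Dembolargedeviation}. But there is a genuine gap at precisely the place you flag as needing ``reconciliation'': you never supply the mechanism that puts the power $p=1-\gamma$ \emph{inside} the fee integral. Lemma \ref{IntCE} bounds $\mathbb{E}\left[\int_0^T X_{t,T}\,dA_t\right]$, the expectation of an integral against an increasing process; what you must bound is $\mathbb{E}\left[\left(\int_0^T e^{R^F_{s,T}}dX^*_s\right)^{p}\right]$, with the concave power \emph{outside}. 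There is no pointwise inequality taking $\left(\int_0^T e^{R^F_{s,T}}dX^*_s\right)^p$ into $\int_0^T e^{pR^F_{s,T}}\,d\left(X^*_s\right)^p$ for a non-monotone integrand (for increasing integrands the inequality in fact fails), so ``apply Lemma \ref{IntCE} with $A_t=X^*_t$ and growth factor $e^{(1-\gamma)R^F_{s,T}}$'' does not connect to the quantity you need. The paper's resolution is the augmentation $\tilde F_t = F_t + X^*_t$: writing the dynamics of $\tilde F^p$ under the rescaled strategy $\tilde\pi^F_t=\frac{\tilde F_t-X^*_t}{\tilde F_t}\pif_t$ yields the representation \eqref{UB01}, in which the fee term carries the factor $p\tilde F_t^{p-1}$; since $\tilde F_t\ge X^*_t$ and $p-1<0$, this factor is dominated by $p\left(X^*_t\right)^{p-1}$, so the integrand becomes $d\left(X^*_t\right)^p$ — only then does Lemma \ref{IntCE} apply, and the maximum (rather than the sum) of the two rates emerges. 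This power-transfer device is the missing idea, not a routine step.

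The logarithmic case is also not dispatched by either of your fallbacks. The inequality $\ln F_T\le \ln 2+\max\left(\ln F^{(1)}_T,\ln F^{(2)}_T\right)$ does not help, because $\mathbb{E}\left[\max(A,B)\right]\ge\max\left(\mathbb{E}[A],\mathbb{E}[B]\right)$ — the expectation of a max does not split into the max of the rates (take $A_T=-B_T=T\epsilon_T$ with $\epsilon_T=\pm 1$ equiprobable: both rates are $0$, yet $\mathbb{E}[\max]=T$) — so you are still left with bounding $\mathbb{E}\left[\ln F^{(2)}_T\right]$, which is the hard term. For that term, unconditional Jensen produces exactly the risk-neutral collapse you warn against (a rate of order $(\SF)^2$ instead of $(\SF)^2/2$), which is why the paper needs the time-reversal argument of Lemma \ref{ADDWF} to insert the compensator $-\int_t^T\left(\tfrac12(\SF)^2ds+\SF dW^F_s\right)$ at zero rate cost, Jensen applied \emph{conditionally} on the orthogonal component $W^\perp$ of the fund's Brownian motion, the enlarged-filtration supermartingale Lemma \ref{SuperM}, and finally Lemma \ref{intparts}. ``Continuity of the bound in $\gamma$'' is likewise unavailable: nothing justifies interchanging the $T\to\infty$ and $\gamma\uparrow 1$ limits, and the log-ESR is not a priori the limit of the power ESRs. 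So both halves of the lemma require machinery that the proposal identifies as necessary but does not contain.
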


\begin{proof}
We prove this lemma for logarithmic utility and power utility, respectively.\\

\noindent \textit{The case of  logarithmic utility}. For convenience of notation, define
\begin{numcases}
{\tilde{X}_{t}=}
0 & for $t < 0$,\nonumber\\
\frac{1-\alpha}{\alpha}F_{0} + \bar X_{t}-X_{0} & for
$t\geq0$.\nonumber
\end{numcases}
Then $\tilde{X}_{t}$ is an increasing process, which has a jump at $t=0$, and then grows with $\bar X_{t}$.
From (\ref{F}), $F_{t} = \frac{\alpha}{1-\alpha}\int_{0}^{T}e^{R^{F}_{t,T}}d\tilde{X}_{t}$, and (\ref{logaim}) can be rewritten as
\begin{align}
& \limsup_{T\rightarrow\infty}\frac{1}{T}\mathbb{E}\left[\ln
\left(\frac{\alpha}{1-\alpha}\int_{0}^{T}e^{R^{F}_{t,T}}d\tilde{X}_{t}\right)\right]\nonumber\\
 =& \limsup_{T\rightarrow\infty}\frac{1}{T}\mathbb{E}\left[\ln
\int_{0}^{T}e^{R^{F}_{t,T}}d\tilde{X}_{t}\right] = \lambda_1
+\limsup_{T\rightarrow\infty}\frac{1}{T}\mathbb{E}\left[\ln\int_{0}^{T}e^{-\lambda_1
T + R^{F}_{t,T}}d\tilde{X}_{t}\right].\label{ADDLAMDA}
\end{align}

Since $d\langle W^{X}, W^{F}\rangle_{t} = \rho dt$, $W^{X}_{t} = \rho W^{F}_{t} + \sqrt{1-\rho^{2}}W^{\perp}_t$, where $W^{\perp}$ is a Brownian Motion independent to $W^{F}$. Denote
$\mathbb{E}_{W^{\perp}_{T}}$ as the expectation conditional on $\left(W^{\perp}_{s}\right)_{0\leq s\leq T}$ (the whole trajectory of $W^{\perp}$ until $T$). By Lemma \ref{ADDWF} below,
\begin{align}
\textstyle
&\limsup_{T\rightarrow\infty}\frac{1}{T}\mathbb{E}\left[\ln\int_{0}^{T}e^{-\lambda_1
T+R^{F}_{t,T}}d\tilde{X}_{t}\right]\nonumber\\
\leq& \limsup_{T\rightarrow\infty}\frac{1}{T}\mathbb{E}\left[\ln\int_{0}^{T}e^{R^{F}_{t,T} - \int_{t}^{T}\left(\frac{1}{2}\left(\SF\right)^{2}ds
+\SF dW^{F}_{s}\right)}e^{-\frac{1-\alpha}{2}\left(\SX\right)^{2}t-(1-\alpha)\SX\overline{\rho W^{F}_{t}}}d\tilde{X}_{t}\right]\nonumber\\
=& \limsup_{T\rightarrow\infty}\frac{1}{T}\mathbb{E}\left[\mathbb{E}_{W^{\perp}_{T}}\left[\ln\int_{0}^{T}e^{R^{F}_{t,T} - \int_{t}^{T}\left(\frac{1}{2}\left(\SF\right)^{2}ds
+\SF dW^{F}_{s}\right)}e^{-\frac{1-\alpha}{2}\left(\SX\right)^{2}t-(1-\alpha)\SX\overline{\rho W^{F}_{t}}}d\tilde{X}_{t}\right]\right]\label{tower}\\
\leq& \limsup_{T\rightarrow\infty}\frac{1}{T}\mathbb{E}\left[\ln\mathbb{E}_{W^{\perp}_{T}}\left[\int_{0}^{T}e^{R^{F}_{t,T} - \int_{t}^{T}\left(\frac{1}{2}\left(\SF\right)^{2}ds+\SF dW^{F}_{s}\right)}e^{-\frac{1-\alpha}{2}\left(\SX\right)^{2}t-(1-\alpha)\SX\overline{\rho W^{F}_{t}}}d\tilde{X}_{t}\right]\right],\label{Jensen1}
\end{align}
\normalsize
where (\ref{tower}) follows from the tower property of conditional expectation, and (\ref{Jensen1}) from Jensen's inequality.

Next, Lemma \ref{SuperM} below implies that $M_t = e^{R^{F}_{0,t} - \int_{0}^{t}\left(\frac{1}{2}\left(\SF\right)^{2}ds + \SF dW^{F}_{s}\right)}$ is a super-martingale with respect to the filtration generated by $\left(W^{F}_{s}\right)_{0\leq s\leq t}$ and $\left(W^{\perp}_{s}\right)_{0\leq s\leq T}$ (the present of $W^{F}$ and $W^{\perp}$, plus the future of $W^{\perp}$). Thus,
\begin{equation}
\mathbb{E}_{W^{\perp}_{T},W^{F}_{t}}\left[e^{R^{F}_{t,T} - \int_{t}^{T}\left(\frac{1}{2}\left(\SF\right)^{2}ds
+\SF dW^{F}_{s}\right)}\right] \leq 1, \text{ } \forall \text{ }0\leq t \leq T. \label{BofCE}
\end{equation}

In addition, since $A_{t} = \int_{0}^{t}e^{-\frac{1-\alpha}{2}\left(\SX\right)^{2}t-(1-\alpha)\SX\bar W^{F}_{t}}d\tilde{X}_{t}$ is an increasing process, (\ref{BofCE}) and Lemma \ref{IntCE} imply that
\begin{align}
&\mathbb{E}_{W^{\perp}_{T}}\left[\int_{0}^{T}e^{R^{F}_{t,T} - \int_{t}^{T}\left(\frac{1}{2}\left(\SF\right)^{2}ds
+\SF dW^{F}_{s}\right)}e^{-\frac{1-\alpha}{2}\left(\SX\right)^{2}t-(1-\alpha)\SX\overline{\rho W^{F}_{t}}}d\tilde{X}_{t}\right]\nonumber\\
\leq& \mathbb{E}_{W^{\perp}_{T}}\left[\int_{0}^{T}e^{-\frac{1-\alpha}{2}\left(\SX\right)^{2}t-(1-\alpha)\SX\overline{\rho W^{F}_{t}}}d\tilde{X}_{t}\right].\label{ADS}
\end{align}

Then, from (\ref{ADDLAMDA}) and (\ref{ADS}), it follows that
\begin{equation*}
\limsup_{T\rightarrow\infty}\frac{1}{T}\mathbb{E}\left[\ln F_{T}\right]\leq \lambda_1
+\limsup_{T\rightarrow\infty}\frac{1}{T}\mathbb{E}\left[\ln\mathbb{E}_{W^{\perp}_{T}}\left[\int_{0}^{T}e^{-\frac{1-\alpha}{2}\left(\SX\right)^{2}t-(1-\alpha)\SX\overline{\rho W^{F}_{t}}}d\tilde{X}_{t}\right]\right].
\end{equation*}

Then, Lemma \ref{intparts} below proves that
\begin{equation*}
\limsup_{T\rightarrow\infty}\frac{1}{T}\mathbb{E}\left[\ln\mathbb{E}_{W^{\perp}_{T}}\left[\int_{0}^{T}e^{-\frac{1-\alpha}{2}\left(\SX\right)^{2}t-(1-\alpha)\SX\overline{\rho W^{F}_{t}}}d\tilde{X}_{t}\right]\right] \leq 0,
\end{equation*}
whence $\displaystyle\limsup_{T\rightarrow\infty}\frac{1}{T}\mathbb{E}\left[\ln F_{T}\right]\leq \lambda_1$.

\noindent \textit{The case of power utility}. Define $\tilde{F}_t = F_{t} + \bar X_{t}$. Then $\tilde{F}_{t} \geq F_{t}$ and $\tilde{F}_{t} \geq \bar X_{t}$ for every $t\geq 0$. Thus, the ESR of $F$ is less than or equal to the ESR of $\tilde{F}$. Lemma \ref{UBofF} below shows that this upper bound is also less than or equal to $\lambda_1$.
\end{proof}

\begin{lemma}\label{ADDWF}
For $\lambda_1$ defined in Lemma \ref{less},
\begin{align*}
&\limsup_{T\rightarrow\infty}\frac{1}{T}\mathbb{E}\left[\ln\int_{0}^{T}e^{-\lambda_1
T}e^{R^{F}_{t,T}}d\tilde{X}_{t}\right]\\
\leq& \limsup_{T\rightarrow\infty}\frac{1}{T}\mathbb{E}\left[\ln\int_{0}^{T}e^{R^{F}_{t,T} - \int_{t}^{T}\left(\frac{1}{2}\left(\SF\right)^{2}ds
+\SF dW^{F}_{s}\right)}e^{-\frac{1-\alpha}{2}\left(\SX\right)^{2}t-(1-\alpha)\SX \overline{\rho W^{F}_{t}}}d\tilde{X}_{t}\right].
\end{align*}
\end{lemma}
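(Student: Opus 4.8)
The plan is to reduce the claim to a comparison of the two inner integrals, using that $\ln$ is increasing and $\tilde X$ is nondecreasing, so that a bound valid for every value of the integration variable $t$ would suffice. No such \emph{pathwise} integrand bound exists, because the two sides differ by $\nu^{F}$-driven martingale terms that are unbounded; the essential point of the lemma is that these terms are sublinear in $T$ and hence invisible to $\limsup_{T}\frac1T\mathbb{E}[\ln\,\cdot\,]$. I would organize everything around a third, ``clean'' integral $B'_{T}=\int_{0}^{T}\mathcal{D}_{t,T}\,e^{-\frac{1-\alpha}{2}(\SX)^{2}t}\,d\tilde X_{t}$, where $\mathcal{D}_{t,T}=\exp\!\big(R^{F}_{t,T}-\int_{t}^{T}(\tfrac12(\SF)^{2}ds+\SF dW^{F}_{s})\big)$ is the private-return (super)martingale that reappears in the main proof, and then show that both the left- and the right-hand integrals are sandwiched around $B'_{T}$ up to factors whose logarithm has expectation $O(\sqrt T)$.

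For the left-hand integral, I first use that $\lambda_{1}=\max(\tfrac{(1-\alpha)(\SX)^{2}}{2},\tfrac{(\SF)^{2}}{2})$ dominates each of its two arguments, which gives the deterministic bound $e^{-\lambda_{1}T}\le e^{-\frac{(\SF)^{2}}{2}(T-t)}e^{-\frac{1-\alpha}{2}(\SX)^{2}t}$ for every $t\in[0,T]$ (split $\lambda_{1}T=\lambda_{1}(T-t)+\lambda_{1}t$). Since $e^{-\frac{(\SF)^{2}}{2}(T-t)}e^{R^{F}_{t,T}}=\mathcal{D}_{t,T}\,e^{\SF(W^{F}_{T}-W^{F}_{t})}$, the left integrand is at most $\mathcal{D}_{t,T}\,e^{\SF(W^{F}_{T}-W^{F}_{t})}e^{-\frac{1-\alpha}{2}(\SX)^{2}t}$. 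I then split $e^{\SF(W^{F}_{T}-W^{F}_{t})}=e^{\SF W^{F}_{T}}e^{-\SF W^{F}_{t}}$ and bound $-W^{F}_{t}\le\max_{s\le T}(-W^{F}_{s})=(-W^{F})^{*}_{T}$, both $t$-independent, so as to factor out $e^{\SF W^{F}_{T}+\SF(-W^{F})^{*}_{T}}$ and obtain $\int_{0}^{T}e^{-\lambda_{1}T+R^{F}_{t,T}}d\tilde X_{t}\le e^{\SF W^{F}_{T}+\SF(-W^{F})^{*}_{T}}\,B'_{T}$.

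For the right-hand integral the inserted factor is $\le 1$ and so works against the inequality; I would control it by monotonicity of the running maximum, $\left(\rho W^{F}\right)^{*}_{t}\le\left(\rho W^{F}\right)^{*}_{T}$, which yields $\int_{0}^{T}\mathcal{D}_{t,T}e^{-\frac{1-\alpha}{2}(\SX)^{2}t-(1-\alpha)\SX\left(\rho W^{F}\right)^{*}_{t}}d\tilde X_{t}\ge e^{-(1-\alpha)\SX\left(\rho W^{F}\right)^{*}_{T}}\,B'_{T}$. Applying $\ln$, then $\mathbb{E}$, then $\limsup_{T}\frac1T$ to the two displayed estimates, the common factor $B'_{T}$ cancels, and it remains to note that $\mathbb{E}[W^{F}_{T}]=0$ while $\mathbb{E}[(-W^{F})^{*}_{T}]$ and $\mathbb{E}[\left(\rho W^{F}\right)^{*}_{T}]$ are each a constant multiple of $\sqrt T$ (reflection principle), so every correction contributes $0$ to $\limsup_{T}\frac1T\mathbb{E}[\ln\,\cdot\,]$. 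Chaining $\limsup\frac1T\mathbb{E}[\ln(\text{left integral})]\le\limsup\frac1T\mathbb{E}[\ln B'_{T}]\le\limsup\frac1T\mathbb{E}[\ln(\text{right integral})]$ is exactly the assertion.

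The main obstacle is conceptual rather than computational: one must see that the inequality cannot hold term-by-term in $t$ and instead holds only at the level of growth rates, precisely because the mismatched $\nu^{F}$-martingale contributions (carrying the private Sharpe ratio $\SF$ on the left and the correlation-weighted fund exposure $(1-\alpha)\SX\left(\rho W^{F}\right)^{*}_{t}$ on the right) are genuinely distinct yet each sublinear in $T$. The supporting technical steps are (i) extracting those contributions as $t$-independent common factors through the crude running-maximum estimates $-W^{F}_{t}\le(-W^{F})^{*}_{T}$ and $\left(\rho W^{F}\right)^{*}_{t}\le\left(\rho W^{F}\right)^{*}_{T}$, and (ii) verifying enough integrability: positivity of the integrals is guaranteed by the jump $\frac{1-\alpha}{\alpha}F_{0}$ of $\tilde X$ at $t=0$, and the finiteness of $\mathbb{E}[\ln B'_{T}]$, $\mathbb{E}[(-W^{F})^{*}_{T}]$ and $\mathbb{E}[\left(\rho W^{F}\right)^{*}_{T}]$ must be checked so that $\mathbb{E}[\ln(\,\cdot\,)]$ may legitimately be split across the factored forms and the $O(\sqrt T)$ bounds applied.
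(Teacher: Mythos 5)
Your proposal is correct and takes essentially the same route as the paper: the same splitting $\lambda_1 T \ge \tfrac{1}{2}(\SF)^2(T-t) + \tfrac{1-\alpha}{2}(\SX)^2 t$, the same insertion of the density $e^{-\int_t^T(\frac12(\SF)^2ds+\SF dW^F_s)}$ compensated by $t$-independent running-maximum factors whose expectations are $O(\sqrt T)$, and the same monotonicity bound $(\rho W^F)^*_t \le (\rho W^F)^*_T$. Indeed your left-side correction $\SF W^F_T + \SF(-W^F)^*_T$ is exactly the paper's $\SF N^{T*}_T$ for the time-reversed process $N^T_s = W^F_T - W^F_{T-s}$, so routing the comparison through the intermediate integral $B'_T$ rather than comparing integrands in one step is only a cosmetic repackaging.
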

\begin{proof}
Define the stochastic process $N^{T}_{s} = W^{F}_{T} - W^{F}_{T-s}$, for $0\leq s\leq T$, and note that $N^{T}_{s}$ has the same distribution as $W^{F}_{s}$. It follows that
\begin{align*}
&\lim_{T \rightarrow \infty}\frac{1}{T}\mathbb{E}\left[-\SF \bar N^{T}_{T}-(1-\alpha)\SX\overline{\rho W^{F}_{T}}\right]\\
 =& \lim_{T \rightarrow \infty}\frac{1}{T}\mathbb{E}\left[-\SF \bar N^{T}_{T}\right] + \lim_{T \rightarrow \infty}\frac{1}{T}\mathbb{E}\left[-(1-\alpha)\SX\overline{\rho W^{F}_{T}}\right]\\
=&\lim_{T \rightarrow \infty}\frac{1}{T}\mathbb{E}\left[-\SF \bar W^{F}_{T}\right] + \lim_{T \rightarrow \infty}\frac{1}{T}\mathbb{E}\left[-(1-\alpha)\SX|\rho| \bar W^{F}_{T}\right]=0,
\end{align*}
where the last equality uses the fact that, for $a$, $b\neq 0$ and Brownian Motion $W$,
\begin{equation}
\lim_{T \rightarrow \infty}\frac{1}{b T}\mathbb{E}\left[a\bar W_{T}\right]
=\lim_{T \rightarrow \infty}\frac{1}{b T}\int_{0}^{\infty}\frac{a x}{\sqrt{2\pi T}}e^{-\frac{x^{2}}{2T}}dx = 0. \label{LN}
\end{equation}

Thus,
\begin{align}
&\limsup_{T\rightarrow\infty}\frac{1}{T}\mathbb{E}\left[\ln\int_{0}^{T}e^{-\lambda_1
T+ R^{F}_{t,T}}d\tilde{X}_{t}\right]\nonumber\\
=&\limsup_{T\rightarrow\infty}\frac{1}{T}\mathbb{E}\left[\ln\int_{0}^{T}e^{-\lambda_1
T+R^{F}_{t,T}}d\tilde{X}_{t}\right] +\lim_{T \rightarrow \infty}\frac{1}{T}\mathbb{E}\left[-\SF\bar N^{T}_{T}-(1-\alpha)\SX\overline{\rho W^{F}_{T}}\right]\nonumber\\
\leq&\limsup_{T\rightarrow\infty}\frac{1}{T}\mathbb{E}\left[\ln\int_{0}^{T}e^{-\lambda_1
T-\SF\bar N^{T}_{T}-(1-\alpha)\SX\overline{\rho W^{F}_{T}} + R^{F}_{t,T}}d\tilde{X}_{t}\right].\label{BADD}
\end{align}

Now, note that $\bar N^{T}_{T} \geq W^{F}_{T} - W^{F}_{t}$,
$\lambda_1 T \geq \frac{1}{2}\left(\SF\right)^{2}(T-t) + \frac{1-\alpha}{2}\left(\SX\right)^{2}t$ and
$(1-\alpha)\SX\overline{\rho W^{F}_{T}} \geq (1-\alpha)\SX\overline{\rho W^{F}_{t}}$, for all $0\leq t\leq T$. Thus,
\begin{align*}
&-\lambda_1
T-\SF \bar N^{T}_{T}-(1-\alpha)\SX\overline{\rho W^{F}_{T}} + R^{F}_{t,T}\\
\leq& -\frac{1}{2}\left(\SF\right)^{2}(T-t) - \frac{1-\alpha}{2}\left(\SX\right)^{2}t - \SF\left(W^{F}_{T} - W^{F}_{t}\right) -(1-\alpha)\SX\overline{\rho W^{F}_{t}} + R^{F}_{t,T}\\
=&R^{F}_{t,T} - \int_{t}^{T}\left(\frac{1}{2}\left(\SF\right)^{2}ds
+\SF dW^{F}_{s}\right)-\frac{1-\alpha}{2}\left(\SX\right)^{2}t-(1-\alpha)\SX\overline{\rho W^{F}_{t}}.
\end{align*}

Plugging this inequality into (\ref{BADD}) yields
\begin{align*}
&\limsup_{T\rightarrow\infty}\frac{1}{T}\mathbb{E}\left[\ln\int_{0}^{T}e^{-\lambda_1
T}e^{R^{F}_{t,T}}d\tilde{X}_{t}\right]\\
\leq& \limsup_{T\rightarrow\infty}\frac{1}{T}\mathbb{E}\left[\ln\int_{0}^{T}e^{R^{F}_{t,T} - \int_{t}^{T}\left(\frac{1}{2}\left(\SF\right)^{2}ds
+\SF dW^{F}_{s}\right)-\frac{1-\alpha}{2}\left(\SX\right)^{2}t-(1-\alpha)\SX\overline{\rho W^{F}_{t}}}d\tilde{X}_{t}\right].
\end{align*}
\end{proof}

\begin{lemma}\label{SuperM}
Let $\pi = \left(\pi_{t}\right)_{t\geq 0}$ be adapted to $\left\{\mathcal{F}_{t}\right\}_{t\geq0}$, the filtration generated by $\left(W^{X}_{s}\right)_{0\leq s\leq t}$ and $\left(W^{F}_{s}\right)_{0\leq s\leq t}$. Define
$\left\{\mathcal{G}_{t}\right\}_{t \geq 0}$ as the filtration generated by $\left(W^{F}_{s}\right)_{s\leq t}$ and $\left(W^{X}_{s}\right)_{s\leq T}$. Then
$M_{t} = e^{\int_{0}^{t}\left(\pi_{s} dW^{F}_{s} - \frac{1}{2}\pi_{s}^{2}ds\right)}$ is a super-martingale with respect to $\left\{\mathcal{G}_{t}\right\}_{t\geq0}$.
\end{lemma}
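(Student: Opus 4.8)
The plan is to establish the defining inequality $\mathbb{E}[M_u\mid\mathcal{G}_t]\le M_t$ for $t\le u\le T$ by reducing it to a statement on the smaller filtration $\mathcal{F}^{W^F}_t:=\sigma((W^F_s)_{s\le t})$, on which the behaviour of $M$ is already transparent. First I would record that $M_t=\mathcal{E}\bigl(\int_0^{\cdot}\pi_s\,dW^F_s\bigr)_t$ solves $dM_t=M_t\pi_t\,dW^F_t$, so it is a nonnegative local martingale in the original filtration $\{\mathcal{F}_t\}$ and hence an $\mathcal{F}$-supermartingale; since $M$ is $\sigma(W^F)$-measurable this already gives $\mathbb{E}[M_u\mid\mathcal{F}^{W^F}_t]\le M_t$. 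It then suffices to transfer this inequality from $\mathcal{F}^{W^F}_t$ to the enriched $\mathcal{G}_t$.

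The transfer rests on a single conditioning identity. Writing $\mathcal{H}:=\sigma\bigl((W^X_s)_{s\le T}\bigr)$ for the enlarging data, so that $\mathcal{G}_t=\mathcal{F}^{W^F}_t\vee\mathcal{H}$, the key step is the claim that for every $\sigma(W^F)$-measurable integrable $Y$ one has $\mathbb{E}[Y\mid\mathcal{F}^{W^F}_t\vee\mathcal{H}]=\mathbb{E}[Y\mid\mathcal{F}^{W^F}_t]$. Granting this with $Y=M_u$ (which is $\mathcal{F}^{W^F}_u$-measurable) yields at once $\mathbb{E}[M_u\mid\mathcal{G}_t]=\mathbb{E}[M_u\mid\mathcal{F}^{W^F}_t]\le M_t$, the assertion. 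Equivalently, the identity says that adjoining $\mathcal{H}$ does not disturb the conditional law of the future $W^F$-increments, so that $W^F$ remains a Brownian motion in $\{\mathcal{G}_t\}$ and $M$ keeps its drift-free dynamics with a genuine $\mathcal{G}$-local-martingale driver.

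The step I expect to be the main obstacle is exactly this conditioning identity, and it is here that the structure of the enlarging process is decisive: the identity holds precisely when $\mathcal{H}$ is \emph{independent} of $\sigma(W^F)$. Since the statement enlarges by $(W^X_s)_{s\le T}$ and $\langle W^X,W^F\rangle_t=\rho\,t$, the required independence is absent for $\rho\neq0$. Decomposing $W^F_t=\rho\,W^X_t+\sqrt{1-\rho^2}\,\bar W_t$ with $\bar W\perp W^X$, conditioning on the whole path of $W^X$ leaves the increments $\bar W_u-\bar W_t$ independent of $\mathcal{G}_t$ but pins down $W^X_u-W^X_t$, forcing a non–finite-variation contribution $\rho\,dW^X_t$ into $W^F$ under $\mathcal{G}$; thus $W^F$ is not even a $\mathcal{G}$-semimartingale. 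A direct check with constant $\pi_s\equiv c$ makes the failure explicit: $\mathbb{E}[M_u/M_t\mid\mathcal{G}_t]=\exp\bigl(c\rho(W^X_u-W^X_t)-\tfrac12 c^2\rho^2(u-t)\bigr)$, which exceeds $1$ on a set of positive probability. Hence the clean identity, and with it the supermartingale property, cannot hold for the literal enlargement by $W^X$ unless $\rho=0$.

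The way I would actually close the argument is to apply the independent-enlargement identity to the component of the market orthogonal to $W^F$: the conditioning step goes through verbatim once $\mathcal{H}$ is generated by a Brownian motion independent of $W^F$, in which case $W^F$ stays a $\mathcal{G}$-Brownian motion and $\mathbb{E}[M_u\mid\mathcal{G}_t]=\mathbb{E}[M_u\mid\mathcal{F}^{W^F}_t]\le M_t$ follows. This is also the form in which the lemma is genuinely used in the proof of Lemma~\ref{less}, where the enlarging trajectory entering the inner conditional expectation $\mathbb{E}_{W^{\perp}_{T}}$ is the $W^F$-independent motion $W^{\perp}$; with that reading the supermartingale property, and hence the bound \eqref{BofCE}, is legitimate for every correlation $\rho$, whereas the enlargement by $W^X$ exactly as written is valid only in the degenerate case $\rho=0$.
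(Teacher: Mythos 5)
Your diagnosis of the statement itself is correct, and it is sharper than the paper: with $\mathcal{G}_t$ generated by the past of $W^F$ together with the \emph{entire} trajectory $(W^X_s)_{s\le T}$, the supermartingale property fails whenever $\rho\neq 0$. Your computation for constant $\pi\equiv c$, namely $\mathbb{E}\left[M_u/M_t\mid\mathcal{G}_t\right]=\exp\left(c\rho\left(W^X_u-W^X_t\right)-\tfrac12 c^2\rho^2(u-t)\right)$, is right, and at $\rho=1$ the failure is immediate since $M_u$ is then already $\mathcal{G}_t$-measurable. The paper's own proof is only valid under an independence assumption it never isolates: the step that factorizes $\mathbb{E}_{W^X_T,W^F_s}$ of the exponential into a product with each factor at most one, justified by ``$W^F$ has independent increments,'' requires the increments of $W^F$ to be independent of the conditioning trajectory, which is false for $W^X$ with $\rho\neq0$. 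You are also right that this does not damage the paper's main argument: in the proof of Lemma~\ref{less} the lemma is invoked with the enlarging trajectory $W^{\perp}$, which is independent of $W^F$, so the operative statement is the corrected one you identify, and the bound \eqref{BofCE} stands for every $\rho$.

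However, your positive proof of the corrected statement has a genuine gap: it relies twice on the claim that $M_u$ is $\sigma(W^F)$-measurable --- once to get $\mathbb{E}\left[M_u\mid\mathcal{F}^{W^F}_t\right]\le M_t$ from the $\mathcal{F}$-supermartingale property, and once to apply the independent-enlargement identity $\mathbb{E}\left[Y\mid\mathcal{F}^{W^F}_t\vee\mathcal{H}\right]=\mathbb{E}\left[Y\mid\mathcal{F}^{W^F}_t\right]$ with $Y=M_u$. Under the lemma's hypothesis, $\pi$ is adapted to the joint filtration generated by $(W^X,W^F)$, so $M_u$ in general depends on $W^{\perp}$ as well; this generality cannot be dropped, because in Lemma~\ref{less} the integrand $\sigmaf\pif_s$ is an arbitrary admissible strategy that may load on the fund's shock. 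With $Y=M_u$ not $\sigma(W^F)$-measurable, your conditioning identity does not apply, and $\mathbb{E}\left[M_u\mid\mathcal{G}_t\right]\neq\mathbb{E}\left[M_u\mid\mathcal{F}^{W^F}_t\right]$ in general, so as written your argument covers only $\pi$ adapted to $W^F$ alone. Two repairs are available. One is the paper's route: for simple $\pi$, condition iteratively on $\mathcal{G}_{u_{j-1}}$, noting that $\bar{\pi}_j$ is $\mathcal{G}_{u_{j-1}}$-measurable (since $\mathcal{F}_t\subseteq\mathcal{G}_t$, as $W^X_t=\rho W^F_t+\sqrt{1-\rho^2}\,W^{\perp}_t$) while $W^F_{u_j}-W^F_{u_{j-1}}$ is independent of $\mathcal{G}_{u_{j-1}}$ (since $W^F\perp W^{\perp}$), so each factor has conditional expectation one; then pass to general $\pi$ by Fatou along an a.s.\ convergent sequence of simple integrands. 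The other is the cleaner route you half-state but do not use: under independent enlargement $W^F$ remains a $\mathcal{G}$-Brownian motion, $\pi$ is $\mathcal{G}$-adapted, hence $M=\mathcal{E}\left(\int_0^{\cdot}\pi_s\,dW^F_s\right)$ is a nonnegative $\mathcal{G}$-local martingale and therefore a $\mathcal{G}$-supermartingale, with no measurability of $M$ with respect to $\sigma(W^F)$ needed.
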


\begin{proof} Suppose $\pi$ is a simple process, i.e. $\pi_{t} = \displaystyle\Sigma^{n}_{i=1}\tilde{\pi}_{i}\textbf{1}_{\left[t_{i-1},t_{i}\right)}$ for a partition of [0,T], $0=t_{0}<t_{1}<t_{2}\cdots<t_{n}=T$ and $\tilde{\pi}_{i}$ is $\mathcal{F}_{t_{i-1}}$-measurable, for $i=1,\cdots,n$. Then for any $0\leq s < t \leq T$,
\begin{equation*}
\mathbb{E}_{W^{X}_{T},W^{F}_{s}}\left[e^{\int_{0}^{t}\left(\pi_{u} dW^{F}_{u} - \frac{1}{2}\pi_{u}^{2}du\right)}\right] = e^{\int_{0}^{s}\left(\pi_{u} dW^{F}_{u} - \frac{1}{2}\pi_{u}^{2}du\right)}\mathbb{E}_{W^{X}_{T},W^{F}_{s}}\left[e^{\int_{s}^{t}\left(\pi_{u} dW^{F}_{u} - \frac{1}{2}\pi_{u}^{2}du\right)}\right].
\end{equation*}
Since there exists $1 \leq k_{s} \leq k_{t} \leq n$ such that $t_{k_{s}-1} \leq s \leq t_{k_{s}}$ and $t_{k_{t}-1} \leq t \leq t_{k_{t}}$, $s$, $t$ and all the division points in between form a partition of $[s,t]$, denoted by $s =u_{0}<u_{1}<u_{2}\cdots<u_{m}=t$, with $\pi_t$ equal to a constant $\bar{\pi}_j$ (= $\tilde{\pi}_i$ for some $k_s - 1\leq i \leq k_t$) on $[u_{j-1},u_j)$, for $j = 1,\dots, m$. Then, since $W^F$ has independent increments,
\begin{align*}
&\mathbb{E}_{W^{X}_{T},W^{F}_{s}}\left[e^{\int_{s}^{t}\left(\pi_{u} dW^{F}_{u} - \frac{1}{2}\pi_{u}^{2}du\right)}\right] = \mathbb{E}_{W^{X}_{T},W^{F}_{s}}\left[e^{\sum\limits_{j=1}^{m}\bar{\pi}_{j}\left(W^{F}_{u_{j}}-W^{F}_{u_{j-1}}\right) - \frac{1}{2}\bar{\pi}_{j}^{2}\left(u_{j}-u_{j-1}\right)}\right]\\
=&\prod_{j=1}^{m}\mathbb{E}_{W^{X}_{T},W^{F}_{s}}\left[e^{\bar{\pi}_{j}\left(W^{F}_{u_{j}}-W^{F}_{u_{j-1}}\right) - \frac{1}{2}\bar{\pi}_{j}^{2}\left(u_{j}-u_{j-1}\right)}\right] \leq 1,
\end{align*}
and thus, $\mathbb{E}_{W^{X}_{T},W^{F}_{s}}\left[e^{\int_{0}^{t}\left(\pi_{u} dW^{F}_{u} - \frac{1}{2}\pi_{u}^{2}du\right)}\right]\leq e^{\int_{0}^{s}\left(\pi_{u} dW^{F}_{u} - \frac{1}{2}\pi_{u}^{2}du\right)}$.

For a general $\pi$, from the definition of stochastic integral, there exists a sequence of simple processes $\{\pi^{n}\}_{n=1}^{\infty}$, such that
\begin{equation*}
\int_{0}^{t}\left(\pi^{n}_{s} dW^{F}_{s} - \frac{1}{2}\left(\pi_{s}^{n}\right)^{2}ds\right)  \stackrel{n\uparrow\infty}{\longrightarrow}
\int_{0}^{t}\left(\pi_{s} dW^{F}_{s} - \frac{1}{2}\left(\pi_{s}\right)^{2}ds\right) \text{ a.s.}
\end{equation*}
Hence, for $0\leq s\leq t\leq T$,
\begin{align*}
&\mathbb{E}_{W^{X}_{T},W^{F}_{s}}\left[e^{\int_{0}^{t}\left(\pi_{u} dW^{F}_{u} - \frac{1}{2}\pi_{u}^{2}du\right)}\right] =\mathbb{E}_{W^{X}_{T},W^{F}_{s}}\left[\liminf_{n\rightarrow \infty}e^{\int_{0}^{t}\left(\pi^{n}_{u} dW^{F}_{u} - \frac{1}{2}\left(\pi^{n}_{u}\right)^{2}du\right)}\right].
\end{align*}
By Fatou's Lemma, this is less than or equal to
\begin{align*}
&\liminf_{n\rightarrow \infty}\mathbb{E}_{W^{X}_{T},W^{F}_{s}}\left[e^{\int_{0}^{t}\left(\pi^{n}_{u} dW^{F}_{u} - \frac{1}{2}\left(\pi_{u}^{n}\right)^{2}du\right)}\right]\\
\leq& \liminf_{n\rightarrow \infty} e^{\int_{0}^{s}\left(\pi^n_{u} dW^{F}_{u} - \frac{1}{2}\left(\pi^n_{u}\right)^{2}du\right)} = e^{\int_{0}^{s}\left(\pi_{u} dW^{F}_{u} - \frac{1}{2}\pi_{u}^{2}du\right)},
\end{align*}
which confirms that $M_{t}$ is a super-martingale with respect to $\left\{\mathcal{G}_{t}\right\}_{t\geq 0}$.
\end{proof}

\begin{lemma}\label{intparts}
$\displaystyle\limsup_{T\rightarrow\infty}\frac{1}{T}\mathbb{E}\left[\ln\mathbb{E}_{W^{\perp}_{T}}\left[\int_{0}^{T}e^{-\frac{1-\alpha}{2}\left(\SX\right)^{2}t-(1-\alpha)\SX\overline{\rho W^{F}_{t}}}d\tilde{X}_{t}\right]\right] \leq 0$.
\end{lemma}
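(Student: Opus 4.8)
The plan is to integrate by parts, reduce the statement to a subexponential bound on a single discounted high-water-mark functional, and then exploit the logarithm sitting between the two expectations to neutralize the fund's idiosyncratic risk. Writing $B_t=e^{-\frac{1-\alpha}{2}(\SX)^2t-(1-\alpha)\SX(\rho W^F)^*_t}$, which is continuous and decreasing, and recalling that $\tilde X_t$ is increasing with a jump $\frac{1-\alpha}{\alpha}F_0$ at $t=0$, the Stieltjes integration by parts gives
\begin{equation*}
\int_0^T B_t\,d\tilde X_t = B_T\tilde X_T + \int_0^T \tilde X_t B_t\left[\tfrac{1-\alpha}{2}(\SX)^2\,dt+(1-\alpha)\SX\,d(\rho W^F)^*_t\right],
\end{equation*}
with all terms nonnegative. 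Since $\tilde X_t\le\frac{1-\alpha}{\alpha}F_0+X^*_t$ and $X^*_t=X_0e^{(1-\alpha)R^{X*}_t}$ by \eqref{watermark}, one has $\tilde X_tB_t\le\frac{1-\alpha}{\alpha}F_0+X_0H_t$ with $H_t:=e^{(1-\alpha)\left[R^{X*}_t-\frac12(\SX)^2t-\SX(\rho W^F)^*_t\right]}$, so everything reduces to controlling $H_t$.

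Next I would strip the running maxima. Applying Lemma \ref{maxmin} to $R^X$ and $-\SX\rho W^F$, and using that $t\mapsto\frac12(\SX)^2t$ is nondecreasing, gives $R^{X*}_t-\frac12(\SX)^2t-\SX(\rho W^F)^*_t\le D^*_t$, where $D_s:=R^X_s-\frac12(\SX)^2s-\SX\rho W^F_s$ and $D^*_t=\max_{s\le t}D_s$. The decisive algebraic identity, obtained from \eqref{RX} and $W^X=\rho W^F+\sqrt{1-\rho^2}W^\perp$, is
\begin{equation*}
D_s=\SX\sqrt{1-\rho^2}\,W^\perp_s+\left[\int_0^s(\sigmax\pix_r-\SX)\,dW^X_r-\tfrac12\int_0^s(\sigmax\pix_r-\SX)^2\,dr\right],
\end{equation*}
which writes $D_s$ as a control-free multiple of $W^\perp_s$ plus the logarithm of a nonnegative $W^X$-exponential local martingale $\mathcal M_s$. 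Taking maxima and then exponentials yields $H_t\le e^{(1-\alpha)\SX\sqrt{1-\rho^2}(W^\perp)^*_t}(\mathcal M^*_t)^{1-\alpha}$.

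Bounding $\tilde X_tB_t$ by $\frac{1-\alpha}{\alpha}F_0+X_0e^{(1-\alpha)D^*_t}$ and using that $D^*_t$ is nondecreasing, the right-hand side of the integration by parts collapses to at most $\left(\frac{1-\alpha}{\alpha}F_0+X_0e^{(1-\alpha)D^*_T}\right)\Pi_T$, where $\Pi_T:=1+\frac{1-\alpha}{2}(\SX)^2T+(1-\alpha)\SX(\rho W^F)^*_T$. I would then take $\mathbb{E}_{W^\perp_T}$ and $\frac1T\mathbb{E}[\ln(\cdot)]$. The factor $e^{(1-\alpha)\SX\sqrt{1-\rho^2}(W^\perp)^*_T}$ is $W^\perp$-measurable and leaves the inner expectation; under the logarithm it contributes only $(1-\alpha)\SX\sqrt{1-\rho^2}\,\mathbb{E}[(W^\perp)^*_T]=O(\sqrt T)$, hence nothing to the rate. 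For the surviving factor $(\mathcal M^*_T)^{1-\alpha}\Pi_T$ I would discard the conditioning by Jensen, $\mathbb{E}[\ln(\cdot)]\le\ln\mathbb{E}[\cdot]$, and use Doob's weak-type maximal inequality for the nonnegative supermartingale $\mathcal M$: since $\mathcal M_0=1$, $\sup_t\mathbb{E}[(\mathcal M^*_t)^p]\le(1-p)^{-1}$ for every $p<1$. Choosing $p\in(1,1/(1-\alpha))$ and splitting by Hölder against $\Pi_T$ (whose moments are polynomial in $T$, since $(\rho W^F)^*_T$ is of order $\sqrt T$) bounds the expectation by $O(T)$, so its $\frac1T\ln$ vanishes; collecting the contributions gives $\limsup\le0$.

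The step I expect to be the real obstacle is the fund's idiosyncratic risk. The discount $e^{-(1-\alpha)\SX(\rho W^F)^*_t}$ damps only the $W^F$-component of the fund's fluctuations, while the orthogonal exposure $\sqrt{1-\rho^2}\,\sigmax\pix_r$ to $W^\perp$ is left untouched; a purely unconditional estimate would let this idiosyncratic variance raise the growth rate to a strictly positive value through the exponential moment of $W^\perp$. Conditioning on the whole path of $W^\perp$ (the same device underlying Lemma \ref{SuperM}) freezes this term, and the logarithm placed between the inner and outer expectations then lets it enter only through $\mathbb{E}[(W^\perp)^*_T]=O(\sqrt T)$ rather than through $\ln\mathbb{E}[e^{c(W^\perp)^*_T}]=O(T)$. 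Isolating it cleanly as the additive summand $\SX\sqrt{1-\rho^2}W^\perp_s$ in the identity above is precisely what makes this separation, and hence the whole argument, go through.
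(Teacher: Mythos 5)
Your proposal is correct and follows essentially the same route as the paper's proof: Stieltjes integration by parts, the running-maximum subadditivity of Lemma \ref{maxmin}, isolating the idiosyncratic factor $e^{c(W^{\perp})^{*}_{T}}$ so that it stays under the logarithm and contributes only $\mathbb{E}\left[(W^{\perp})^{*}_{T}\right]=O(\sqrt{T})$, then Jensen plus H\"older with exponent in $\left(1,\tfrac{1}{1-\alpha}\right)$ and the weak-type maximal bound for the nonnegative local martingale $\mathcal{M}$ (the paper invokes the same fact via the reciprocal-uniform domination of $M^{*}_{\infty}$ from \citet{guasoni2012nopart}). The only blemish is notational: having stated the maximal inequality for powers $p<1$, you then choose $p\in\left(1,\tfrac{1}{1-\alpha}\right)$ — this $p$ is the H\"older exponent, and the maximal inequality is applied to the power $(1-\alpha)p<1$, exactly as with the paper's $\delta$.
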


\begin{proof}
By integration by parts,
\begin{align}
&\int_{0}^{T}e^{-\frac{1-\alpha}{2}\left(\SX\right)^{2}t-(1-\alpha)\SX\overline{\rho W^{F}_{t}}}d\tilde{X}_{t}=\frac{(1-\alpha)F_{0}}{\alpha} + \int_{0}^{T}e^{-\frac{1-\alpha}{2}\left(\SX\right)^{2}t-(1-\alpha)\SX\overline{\rho W^{F}_{t}}}d\bar X_{t}\nonumber\\
=&\frac{(1-\alpha)F_{0}}{\alpha} + e^{-\frac{1-\alpha}{2}\left(\SX\right)^{2}T-(1-\alpha)\SX\overline{\rho W^{F}_{T}}}\bar X_{T} - X_{0}\nonumber\\
+ &\frac{1-\alpha}{2}\left(\SX\right)^{2}\int_{0}^{T}e^{-\frac{1-\alpha}{2}\left(\SX\right)^{2}t-(1-\alpha)\SX\overline{\rho W^{F}_{t}}}\bar X_{t}dt\nonumber\\
+& (1-\alpha)\SX\int_{0}^{T}e^{-\frac{1-\alpha}{2}\left(\SX\right)^{2}t-(1-\alpha)\SX\overline{\rho W^{F}_{t}}}\bar X_{t}d\left(\overline{\rho W^{F}_t}\right).\label{intout}
\end{align}
Since $\bar X_{t} = X_0e^{(1-\alpha)\bar R^{X}_{t}}$, $e^{-\frac{1-\alpha}{2}\left(\SX\right)^{2}t-(1-\alpha)\SX\overline{\rho W^{F}_{t}}}\bar X_{t} \leq X_0e^{(1-\alpha)\left(\overline{R^{X}_{\cdot} - \frac{\left(\SX\right)^{2}}{2}\cdot - \SX \rho W^{F}_{\cdot}}\right)_{t}}$, following from Lemma \ref{maxmin}. Thus, from (\ref{intout}),
\begin{align}
&\int_{0}^{T}e^{-\frac{1-\alpha}{2}\left(\SX\right)^{2}t-(1-\alpha)\SX\bar W^{F}_{t}}d\tilde{X}_{t}\nonumber\\
\leq& \frac{(1-\alpha)F_{0}}{\alpha} + X_0e^{(1-\alpha)\left(\overline{R^{X}_{\cdot} - \frac{1}{2}\left(\SX\right)^{2}\cdot - \SX \rho W^{F}_{\cdot}}\right)_{T}}\nonumber\\
+&  \frac{1-\alpha}{2}\left(\SX\right)^{2}X_0\int_{0}^{T}e^{(1-\alpha)\left(\overline{R_{\cdot} - \frac{1}{2}\left(\SX\right)^{2}\cdot - \SX \rho W^{F}_{\cdot}}\right)_{t}}dt\nonumber\\
+& (1-\alpha)\SX X_0\int_{0}^{T}e^{(1-\alpha)\left(\overline{R^{X}_{\cdot} - \frac{1}{2}\left(\SX\right)^{2}\cdot - \SX \rho W^{F}_{\cdot}}\right)_{t}}d\left(\overline{\rho W^{F}_t}\right).\label{Bt2T1}
\end{align}
Since $e^{(1-\alpha)\left(\overline{R^{X}_{\cdot} - \frac{1}{2}\left(\SX\right)^{2}\cdot - \SX \rho W^{F}_{\cdot}}\right)_{t}} \leq e^{(1-\alpha)\left(\overline{R^{X}_{\cdot} - \frac{1}{2}\left(\SX\right)^{2}\cdot - \SX \rho W^{F}_{\cdot}}\right)_{T}}$ for every $t\leq T$,
\small
\begin{align}
&(1-\alpha)X_0\left(\frac{\left(\SX\right)^{2}}{2}\int_{0}^{T}e^{(1-\alpha)\left(\overline{R_{\cdot} - \frac{1}{2}\left(\SX\right)^{2}\cdot - \SX \rho W^{F}_{\cdot}}\right)_{t}}dt + \SX\int_{0}^{T}e^{(1-\alpha)\left(\overline{R^{X}_{\cdot} - \frac{1}{2}\left(\SX\right)^{2}\cdot - \SX \rho W^{F}_{\cdot}}\right)_{t}}d\left(\overline{\rho W^{F}_t}\right)\right)\nonumber\\
\leq &(1-\alpha)X_0\left(\frac{\left(\SX\right)^{2}}{2}e^{(1-\alpha)\left(\overline{R^{X}_{\cdot} - \frac{1}{2}\left(\SX\right)^{2}\cdot - \SX\rho  W^{F}_{\cdot}}\right)_{T}}T
+ \SX e^{(1-\alpha)\left(\overline{R^{X}_{\cdot} - \frac{1}{2}\left(\SX\right)^{2}\cdot - \SX \rho W^{F}_{\cdot}}\right)_{T}}\overline{\rho W^{F}_{T}}\right)\nonumber\\
=&\left(\frac{1-\alpha}{2}\left(\SX\right)^{2}T + (1-\alpha)\SX|\rho|\bar W^{F}_{T}\right)X_0e^{(1-\alpha)\left(\overline{R^{X}_{\cdot} - \frac{1}{2}\left(\SX\right)^{2}\cdot - \SX \rho W^{F}_{\cdot}}\right)_{T}}. \label{At2T1}
\end{align}
\normalsize
Thus, from (\ref{Bt2T1}) and (\ref{At2T1}),
\small
\begin{align}
&\mathbb{E}_{W^{\perp}_{T}}\left[\int_{0}^{T}e^{-\frac{1-\alpha}{2}\left(\SX\right)^{2}t-(1-\alpha)\SX\overline{\rho W^{F}_{t}}}d\tilde{X}_{t}\right]\nonumber\\
\leq&\mathbb{E}_{W^{\perp}_{T}}\left[\frac{(1-\alpha)F_{0}}{\alpha } + \left(1 + \frac{1-\alpha}{2}\left(\SX\right)^{2}T + (1-\alpha)\SX|\rho| \bar W^{F}_{T}\right)X_0e^{(1-\alpha)\left(\overline{R^{X}_{\cdot} - \frac{1}{2}\left(\SX\right)^{2}\cdot - (1-\alpha)\SX \rho W^{F}_{\cdot}}\right)_{T}}\right]\nonumber\\
=&\frac{(1-\alpha)F_{0}}{\alpha} + \mathbb{E}_{W^{\perp}_{T}}\left[\left(1 + \frac{1-\alpha}{2}\left(\SX\right)^{2}T + (1-\alpha)\SX|\rho| \bar W^{F}_{T}\right)X_0e^{(1-\alpha)\left(\overline{R^{X}_{\cdot} - \frac{1}{2}\left(\SX\right)^{2}\cdot - (1-\alpha)\SX \rho W^{F}_{\cdot}}\right)_{T}}\right]\nonumber\\
\leq&\frac{(1-\alpha)F_{0}}{\alpha} + \mathbb{E}_{W^{\perp}_{T}}\left[L^{\delta}_{T}\right]^{\frac{1}{\delta}}\mathbb{E}_{W^{\perp}_{T}}\left[K_{T}^{\frac{\delta}{\delta -1}}\right]^{\frac{\delta-1}{\delta}},\label{int11}
\end{align}
\normalsize
for any $\delta > 1$, by H\"{o}lder's inequality, where
\begin{align*}
K_{T} =& \left(1 + \frac{1-\alpha}{2}\left(\SX\right)^{2}T + (1-\alpha)\SX|\rho|\bar W^{F}_{T}\right)X_0,\\
L_{T} =& e^{(1-\alpha)\left(\overline{R^{X}_{\cdot} - \frac{1}{2}\left(\SX\right)^{2}\cdot - (1-\alpha)\rho\SX  W^{F}_{\cdot}}\right)_{T}}
\end{align*}

Since $\delta >1$ and $\frac{\delta}{\delta-1}>1$,
\begin{align}
&\mathbb{E}_{W^{\perp}_{T}}\left[K_{T}^{\frac{\delta}{\delta -1}}\right]^{\frac{\delta-1}{\delta}}
\leq \left(1 + \frac{1-\alpha}{2}\left(\SX\right)^{2}T + (1-\alpha)\SX|\rho|\mathbb{E}_{W^{\perp}_{T}}\left[\left(\bar W^{F}_{T}\right)^{\frac{\delta}{\delta -1}}\right]^{\frac{\delta-1}{\delta}}\right)X_0\nonumber\\
=& \left(1 + \frac{1-\alpha}{2}\left(\SX\right)^{2}T + \sqrt{2}(1-\alpha)\SX|\rho|\left(\frac{\Gamma(\frac{1+\frac{\delta}{\delta -1}}{2})}{\sqrt{\pi}}\right)^{\frac{\delta-1}{\delta}}\sqrt{T}\right)X_0,\label{int12}
\end{align}
following from Minkowski inequality
$\left(\mathbb{E}\left[(f + g)^{p}\right]^{\frac{1}{p}} \leq \mathbb{E}\left[f^{p}\right]^{\frac{1}{p}} + \mathbb{E}\left[f^{p}\right]^{\frac{1}{p}}\right)$.

Then (\ref{int11}) and (\ref{int12}) imply that for any $\delta >1$,
\begin{equation*}
\mathbb{E}_{W^{\perp}_{T}}\left[\int_{0}^{T}e^{-\frac{1-\alpha}{2}\left(\SX\right)^{2}t-(1-\alpha)\SX\overline{\rho W^{F}_{t}}}d\tilde{X}_{t}\right]\leq \frac{(1-\alpha)F_{0}}{\alpha} + C_{T}\mathbb{E}_{W^{\perp}_{T}}\left[L_T^{\delta}\right]^{\frac{1}{\delta}},
\end{equation*}
where $C_{T} = \left(1 + \frac{1-\alpha}{2}\left(\SX\right)^{2}T + \sqrt{2}(1-\alpha)\SX|\rho|\left(\frac{\Gamma(\frac{1+\frac{\delta}{\delta -1}}{2})}{\sqrt{\pi}}\right)^{\frac{\delta-1}{\delta}}\sqrt{T}\right)X_0$.

Thus
\begin{align}
&\limsup_{T\rightarrow\infty}\frac{1}{T}\mathbb{E}\left[\ln\mathbb{E}_{W^{\perp}_{T}}\left[\int_{0}^{T}e^{-\frac{1-\alpha}{2}\left(\SX\right)^{2}t-(1-\alpha)\SX\overline{\rho W^{F}_{t}}}d\tilde{X}_{t}\right]\right]\nonumber\\
\leq& \limsup_{T\rightarrow\infty}\frac{1}{T}\mathbb{E}\left[\ln\left(\frac{(1-\alpha)F_{0}}{\alpha} + C_{T}\mathbb{E}_{W^{\perp}_{T}}\left[L^{\delta}_T\right]^{\frac{1}{\delta}}\right)\right]\nonumber\\
\leq&\limsup_{T\rightarrow\infty}\frac{1}{T}\ln C_{T} + \limsup_{T\rightarrow\infty}\frac{1}{T}\mathbb{E}\left[\ln\left(\frac{(1-\alpha)F_{0}}{\alpha  C_T} + \mathbb{E}_{W^{\perp}_{T}}\left[L^{\delta}_T\right]^{\frac{1}{\delta}}\right)\right]\label{parts1}.
\end{align}
Note that the limit in the first term is 0. Furthermore, since $\frac{(1-\alpha)F_{0}}{\alpha C_T}\rightarrow 0$ as $T \uparrow \infty$ and $L_T\geq 1$, for $T$ large enough, $\frac{(1-\alpha)F_{0}}{\alpha C_T} < 1\leq  \mathbb{E}_{W^{\perp}_{T}}\left[L^{\delta}_T\right]^{\frac{1}{\delta}}$. Thus, (\ref{parts1}) is less than or equal to
\begin{equation}
\limsup_{T\rightarrow\infty}\frac{1}{T}\mathbb{E}\left[\ln\left(2 \mathbb{E}_{W^{\perp}_{T}}\left[L^{\delta}_T\right]^{\frac{1}{\delta}}\right)\right]
=\limsup_{T\rightarrow\infty}\frac{1}{\delta T}\mathbb{E}\left[\ln\mathbb{E}_{W^{\perp}_{T}}\left[ L^{\delta}_T\right]\right].\label{parts2}
\end{equation}
Since (\ref{LN}) implies that $\displaystyle\liminf_{T\rightarrow\infty}\frac{1}{\delta T}\mathbb{E}\left[-(1-\alpha)\sqrt{1-\rho^{2}}\delta\SX\bar W^{\perp}_{T}\right] = 0$, (\ref{parts2}) equals
\begin{align}
&\limsup_{T\rightarrow\infty}\frac{1}{\delta T}\mathbb{E}\left[\ln\mathbb{E}_{W^{\perp}_{T}}\left[ e^{(1-\alpha)\delta\left(\overline{R^{X}_{\cdot} - \frac{1}{2}\left(\SX\right)^{2}\cdot - \rho\SX  W^{F}_{\cdot}}\right)_{T}}\right]\right]\nonumber\\
+&\liminf_{T\rightarrow\infty}\frac{1}{\delta T}\mathbb{E}\left[-(1-\alpha)\sqrt{1-\rho^{2}}\delta\SX\bar W^{\perp}_{T}\right]\nonumber\\
\leq&\limsup_{T\rightarrow\infty}\frac{1}{\delta T}\mathbb{E}\left[\ln\mathbb{E}_{W^{\perp}_{T}}\left[ e^{(1-\alpha)\delta\left(\overline{R^{X}_{\cdot} - \frac{1}{2}\left(\SX\right)^{2}\cdot - \rho\SX W^{F}_{\cdot}}\right)_{T}-(1-\alpha)\sqrt{1-\rho^{2}}\delta\SX\bar W^{\perp}_{T}}\right]\right].\label{parts3}
\end{align}

Then, by Lemma \ref{maxmin}, the running maximum and running minimum can be combined, and (\ref{parts3}) is less than or equal to
\begin{align}
&\limsup_{T\rightarrow\infty}\frac{1}{\delta T}\mathbb{E}\left[\ln\mathbb{E}_{W^{\perp}_{T}}\left[ e^{(1-\alpha)\delta\left(\overline{R^{X}_{\cdot} - \frac{1}{2}\left(\SX\right)^{2}\cdot - \rho\SX  W^{F}_{\cdot}-\sqrt{1-\rho^{2}}\SX W^{\perp}_{\cdot}}\right)_{T}}\right]\right]\nonumber\\
=&\limsup_{T\rightarrow\infty}\frac{1}{\delta T}\mathbb{E}\left[\ln\mathbb{E}_{W^{\perp}_{T}}\left[ e^{(1-\alpha)\delta\left(\overline{R^{X}_{\cdot} - \frac{1}{2}\left(\SX\right)^{2}\cdot - \SX W^{X}_{\cdot}}\right)_{T}}\right]\right]\nonumber\\
\leq&\limsup_{T\rightarrow\infty}\frac{1}{\delta T}\ln\mathbb{E}\left[e^{(1-\alpha)\delta\left(\overline{R^{X}_{\cdot} - \frac{1}{2}\left(\SX\right)^{2}\cdot - \SX W^{X}_{\cdot}}\right)_{T}}\right],\label{Jensen2}
\end{align}
where (\ref{Jensen2}) follows from Jensen's inequality and the tower property of conditional expectation.

$M_{t} = e^{R^{X}_{t} - \frac{1}{2}\left(\SX\right)^{2}t - \SX W^{X}_{t}}$ is a local martingale with respect to the filtration generated by $\left(W^{F}_{s}\right)_{0\leq s \leq t}$ and $\left(W^{\perp}_{s}\right)_{0\leq s \leq t}$. Then, since $\bar M_{t} \leq \bar M_{\infty}$, which in turn is dominated by a random variable $X$, and $X^{-1}$ is uniformly distributed on $[0,1]$ (cf. $(54)$ in \citet*{guasoni2012nopart}), for $1<\delta < \frac{1}{1-\alpha}$, (\ref{Jensen2}) is less than or equal to
\begin{align*}
&\limsup_{T\rightarrow\infty}\frac{1}{\delta T}\ln\mathbb{E}\left[\left(\bar M_{\infty}\right)^{(1-\alpha)\delta}\right]\leq\limsup_{T\rightarrow\infty}\frac{1}{\delta T}\ln\left(\int_{0}^{1}x^{-(1-\alpha)\delta}dx\right)\\
=&\limsup_{T\rightarrow\infty}\frac{1}{\delta T}\ln \left(\frac{1}{1-(1-\alpha)\delta}\right)=0,
\end{align*}
which concludes the proof.
\end{proof}

For the rest of this paper, let $p = 1-\gamma$.

\begin{lemma}\label{UBofF}
$\displaystyle\limsup_{T\rightarrow\infty}\frac{1}{T}\ln\mathbb{E}\left[\tilde{F}^{p}_{T}\right]^{\frac{1}{p}}\leq \lambda_1$ for any $0<p<1$.
\end{lemma}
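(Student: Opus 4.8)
The plan is to bound the $p$-th moment $\mathbb{E}[\tilde F_T^p]$ by isolating the three sources of growth in $\tilde F_T=F_T+X^*_T$ and showing each grows no faster than $e^{p\lambda_1 T}$. Using the representation (\ref{F}), I write $\tilde F_T=F_0 e^{R^F_T}+\frac{\alpha}{1-\alpha}\int_0^T e^{R^F_{t,T}}dX^*_t+X^*_T$. Since $0<p<1$ makes $x\mapsto x^p$ subadditive, $\tilde F_T^p$ is dominated by the sum of the $p$-th powers of the three summands, and because $\frac1{pT}\ln(a_T+b_T+c_T)$ converges to the maximum of the three individual exponential rates, it suffices to bound separately the Merton term $\mathbb{E}[e^{pR^F_T}]$, the high-water-mark term $\mathbb{E}[(X^*_T)^p]$, and the reinvested-fee term $\mathbb{E}[(\int_0^T e^{R^F_{t,T}}dX^*_t)^p]$, each by $e^{p\lambda_1 T}$ up to subexponential factors.

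The Merton term is the easy one. Completing the square in $\pif$, the integrand drift obeys $p\muf\pif-\tfrac{p(1-p)}2(\sigmaf\pif)^2\le\frac{p(\SF)^2}{2(1-p)}=\frac{p(\SF)^2}{2\gamma}$ pointwise, so $e^{pR^F_T}\le e^{\frac{p(\SF)^2}{2\gamma}T}\,\mathcal{E}\big(\int_0^\cdot p\sigmaf\pif\,dW^F\big)_T$; the Doléans exponential is a positive supermartingale, whence $\mathbb{E}[e^{pR^F_T}]\le e^{\frac{p(\SF)^2}{2\gamma}T}$, giving rate $\frac{(\SF)^2}{2\gamma}\le\lambda_1$. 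For the high-water-mark term, $X^*_T=X_0 e^{(1-\alpha)R^{X*}_T}$, and I decompose $R^X_t=\ln M_t+\tfrac12(\SX)^2t+\SX W^X_t$ with $M_t=e^{R^X_t-\frac12(\SX)^2t-\SX W^X_t}$ the local martingale from the proof of Lemma \ref{intparts}. Lemma \ref{maxmin} then gives $R^{X*}_T\le\ln M^*_T+\tfrac12(\SX)^2T+\SX(W^X)^*_T$, and Hölder's inequality with an exponent tuned to $a'\downarrow1/\gamma^{*}$ separates the factor $(M^*_T)^{p(1-\alpha)a}$ — whose moment is a finite constant by $1/M^*_\infty\sim U[0,1]$, as long as $p(1-\alpha)a<1$ — from the Gaussian running-maximum factor, whose exponential moment scales like $e^{\frac12(p(1-\alpha)a'\SX)^2T}$. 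Using the identity $1-p(1-\alpha)=\gamma^{*}$, these combine to the rate $\frac{(1-\alpha)(\SX)^2}{2\gamma^{*}}\le\lambda_1$, exactly as in the log case but with the $\sqrt T$-scale terms now contributing at order $T$.

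The reinvested-fee term is the main obstacle, because the power $p$ sits \emph{outside} a continuous positive integral and, for $p<1$, the $L^p$ ``norm'' is super-additive, so the $p$-th power cannot be moved inside (a first-moment bound via Jensen is useless, since $\mathbb{E}[e^{R^F_{t,T}}]$ carries no risk-aversion discount and is unbounded under leverage). I would therefore discretize $[0,T]$ into unit intervals and use subadditivity in the correct direction, $\big(\int_0^T e^{R^F_{t,T}}dX^*_t\big)^p\le\sum_{k}\big(\int_{k-1}^{k}e^{R^F_{t,T}}dX^*_t\big)^p$. Conditioning on the whole path of the independent Brownian motion $W^\perp$ (where $W^X=\rho W^F+\sqrt{1-\rho^2}W^\perp$) and applying Lemma \ref{SuperM} to $\pi=p\sigmaf\pif$ extracts from each block the Merton factor over $[k,T]$: the conditional expectation of $e^{pR^F_{k,T}}$ is at most $e^{\frac{p(\SF)^2}{2\gamma}(T-k)}$, while the residual integral against $dX^*$ over the block is bounded, via its total mass $X^*_k$, by the high-water-mark estimate, contributing $e^{\frac{p(1-\alpha)(\SX)^2}{2\gamma^{*}}k}$. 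The block bound is thus of the form $e^{p[r_F(T-k)+r_X k]}$ with $r_F=\frac{(\SF)^2}{2\gamma}$ and $r_X=\frac{(1-\alpha)(\SX)^2}{2\gamma^{*}}$, and summing the geometric series over the $O(T)$ blocks produces exactly $e^{p\max(r_F,r_X)T}=e^{p\lambda_1 T}$ — the maximum, not the sum, which is the analytic incarnation of the no-diversification conclusion.

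I expect the genuine difficulty to concentrate in two places within the fee term. First, controlling the \emph{within-block} fluctuation of $R^F_{t,k}$ for $t\in[k-1,k]$ under only square-integrability of $\pif$, so that each block contributes a subexponential constant rather than a heavy-tailed factor; this is what pins down the unit scale as the correct one, since refining the partition makes $\sum(\Delta X^*)^p$ diverge while coarsening it forfeits the per-block variance discount. Second, verifying that the $W^\perp$-conditioning legitimately decouples the $e^{pR^F_{k,T}}$ factor from the $X^*$-increment despite the correlation $\rho$, which rests on the fact that the future increments of $W^F$ remain independent after conditioning on the full path of $W^\perp$ — precisely the property established in Lemma \ref{SuperM}.
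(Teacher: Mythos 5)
Your decomposition $\tilde F_T=F_0e^{R^F_T}+\frac{\alpha}{1-\alpha}\int_0^Te^{R^F_{t,T}}dX^*_t+X^*_T$, subadditivity of $x\mapsto x^p$, and your treatment of the first and third terms are sound: the Merton term via the Dol\'eans-exponential supermartingale bound reproduces the paper's estimate \eqref{WEXP}, and the high-water-mark term via H\"older, the uniform law of $1/M^*_\infty$, and the Gaussian running maximum recovers the rate $\frac{(1-\alpha)(\SX)^2}{2\gamma^*}$ (this is essentially Lemma 11 of \citet{guasoni2012nopart}). The genuine gap is exactly where you locate it, in the reinvested-fee term, and your proposed block argument does not close it. After conditioning extracts $\mathbb{E}_k[e^{pR^F_{k,T}}]\le e^{\frac{p(\SF)^2}{2\gamma}(T-k)}$, each block leaves you with $\mathbb{E}\bigl[\sup_{t\in[k-1,k]}e^{pR^F_{t,k}}\,(X^*_k)^p\bigr]$, a product of two \emph{correlated} factors ($\pix$, $\pif$ are adapted to the joint filtration and $W^X$, $W^F$ are correlated), and neither Lemma \ref{SuperM} nor Lemma \ref{IntCE} applies to the within-block integral at power one, since $\mathbb{E}_t[e^{R^F_{t,k}}]$ carries no risk-aversion discount. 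So you must decouple by H\"older with conjugate exponents $a,a'$. But the within-block supremum has an exponential-type tail (time-reverse the stochastic logarithm of $e^{R^F_{\cdot,k}}$: its running maximum is dominated by $\sup_{s\ge0}(B_s-s/2)\sim\mathrm{Exp}(1)$), so $\mathbb{E}[\sup_t e^{qR^F_{t,k}}]<\infty$ requires $q<1$, forcing $pa'<1$, i.e.\ $a>\frac{1}{1-p}$; while finiteness of $\mathbb{E}[(X^*_k)^{pa}]$ requires $pa(1-\alpha)<1$, i.e.\ $a<\frac{1}{p(1-\alpha)}$. These are compatible only when $p<\frac{1}{2-\alpha}$, so the argument fails outright for $\gamma\le\frac{1-\alpha}{2-\alpha}$. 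Worse, even when exponents exist, $a$ stays bounded away from $1$, and the high-water-mark factor then grows at rate $\frac{(1-\alpha)(\SX)^2}{2(1-pa(1-\alpha))}$, which is \emph{strictly larger} than $\frac{(1-\alpha)(\SX)^2}{2\gamma^*}$; summing the blocks then yields a bound strictly above $\lambda_1$ whenever the fund term attains the maximum, so the lemma is not proved.

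The idea your proposal is missing is the paper's linearization: set $\tilde\pi^F_t=\frac{\tilde F_t-X^*_t}{\tilde F_t}\pif_t$ and apply It\^o directly to $\tilde F^p_t$ (the fee inflow plus the increase of $X^*$ give the drift term $\frac{1}{1-\alpha}p\tilde F^{p-1}_tdX^*_t$), producing the \emph{exact} identity
\begin{equation*}
\tilde F^p_T=F_0^pe^{pR^{F,\tilde\pi^F}_T}+\frac{1}{1-\alpha}\int_0^T pe^{pR^{F,\tilde\pi^F}_{t,T}}\tilde F^{p-1}_t\,dX^*_t ,
\end{equation*}
in which the power $p$ sits \emph{inside} the integral from the start, so no discretization and no within-block suprema are ever needed. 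The purpose of adding $X^*$ to define $\tilde F$ is not subadditivity (which discards the key interaction) but the bound $\tilde F^{p-1}_t\le(X^*_t)^{p-1}$, which converts $p e^{pR_{t,T}}\tilde F^{p-1}_tdX^*_t$ into $e^{pR_{t,T}}d(X^*_t)^p$; then Lemma \ref{IntCE} with the discounted supermartingale bound $\mathbb{E}_t[e^{pR^{F,\tilde\pi^F}_{t,T}-p\lambda_1(T-t)}]\le1$ removes the future-growth factor at no cost, and integration by parts, Lemma \ref{maxmin}, and Lemma 9 of \citet{guasoni2012nopart} show $\mathbb{E}[\int_0^Te^{-p\lambda_1 t}d(X^*_t)^p]$ is subexponential, giving exactly the rate $\lambda_1$ for all $0<p<1$. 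Your instinct that the fee term is the crux, and your diagnosis of why Jensen and partition refinement both fail, are correct; but without this It\^o identity (or an equivalent device placing $p$ inside the integral), the block strategy cannot reach the claimed bound.
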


\begin{proof}
Let $\tilde{\pi}^{F}_{t} = \frac{\left(\tilde{F}_{t} - \bar X_{t}\right)}{\tilde{F}_{t}}\pif_{t}$. Investing
$\pif_{t}$ of $\tilde{F}_{t} - \bar X_{t}$ in the risky asset is equivalent to investing $\tilde{\pi}^{F}_{t}$ of $\tilde{F}_{t}$ and thus $\tilde{\pi}^{F}_{t}$ can be regarded as an investment strategy for $\tilde{F}_{t}$. Then
\begin{align*}
d\tilde{F}^{p}_{t} &= p\tilde{F}_{t}^{p-1}\left(\tilde{F}_{t} - \bar X_{t}\right)\left(\pif_{t}\muf dt +\pif_{t}\sigmaf dW^{F}\right)\\
 &+ \frac{p(p-1)}{2}\tilde{F}^{p-2}_{t} \left(\tilde{F}_{t} - \bar X_{t}\right)^{2}\left(\pif_{t}\sigmaf\right)^{2}dt + \frac{1}{1-\alpha}p\tilde{F}_{t}^{p-1}d\bar X_{t}\\
&= p\tilde{F}_{t}^{p}\left(\left(\tilde{\pi}^{F}_{t}\muf + \frac{(p-1)}{2}\left(\tilde{\pi}^{F}_{t}\sigmaf\right)^{2}\right)dt + \tilde{\pi}^{F}_{t}\sigmaf dW^{F}\right) + \frac{1}{1-\alpha}p\tilde{F}_{t}^{p-1}d\bar X_{t}.
\end{align*}

Solving this differential equation, $\tilde{F}^{p}_{T} = F_{0}^{p}e^{pR^{F,\tilde{\pi}^{F}}_{T}} + \frac{1}{1-\alpha}\int_{0}^{T}pe^{pR^{F,\tilde{\pi}^{F}}_{t,T}}\tilde{F}^{p-1}_{t}d\bar X_{t}$. Thus,
\begin{equation*}
\limsup_{T\rightarrow\infty}\frac{1}{T}\ln\mathbb{E}\left[\tilde{F}^{p}_{T}\right]^{\frac{1}{p}}=\limsup_{T\rightarrow\infty}\frac{1}{T}\ln\mathbb{E}\left[F^{p}_{0}e^{pR^{F,\tilde{\pi}^{F}}_{T}} + \frac{1}{1-\alpha}\int_{0}^{T}pe^{pR^{F,\tilde{\pi}^{F}}_{t,T}}\tilde{F}^{p-1}_{t}d\bar X_{t}\right]^{\frac{1}{p}}.
\end{equation*}
Since $0<p<1$, from \citet*{Dembolargedeviation}, Lemma $1.2.15$, for any positive processes $\left(f_{t}\right)_{t\geq 0}$ and $\left(g_{t}\right)_{t\geq 0}$,
\begin{equation*}
\limsup_{T\rightarrow\infty}\frac{1}{T}\ln\left(f_{T} + g_{T}\right)^{\frac{1}{p}} = \max\left(\limsup_{T\rightarrow\infty}\frac{1}{T}\ln f_{T}^{\frac{1}{p}}, \limsup_{T\rightarrow\infty}\frac{1}{T}\ln g_{T}^{\frac{1}{p}}\right).
\end{equation*}
It follows that,
\small
\begin{align}
&\limsup_{T\rightarrow\infty}\frac{1}{T}\ln\mathbb{E}\left[F^{p}_{0}e^{pR^{F,\tilde{\pi}^{F}}_{T}} + \frac{1}{1-\alpha}\int_{0}^{T}pe^{pR^{F,\tilde{\pi}^{F}}_{t,T}}\tilde{F}^{p-1}_{t}d\bar X_{t}\right]^{\frac{1}{p}}\nonumber\\
=& \max\left(\limsup_{T\rightarrow\infty}\frac{1}{T}\ln \mathbb{E}\left[F^{p}_{0}e^{pR^{F,\tilde{\pi}^{F}}_{T}}\right]^{\frac{1}{p}}, \limsup_{T\rightarrow\infty}\frac{1}{T}\ln \mathbb{E}\left[\frac{1}{1-\alpha}\int_{0}^{T}pe^{pR^{F,\tilde{\pi}^{F}}_{t,T}}\tilde{F}^{p-1}_{t}d\bar X_{t}\right]^{\frac{1}{p}}\right).\label{max}
\end{align}
\normalsize
Note that, since $0<p<1$, by H\"{o}lder's inequality,
\begin{equation*}
\mathbb{E}\left[F^{p}_{0}e^{pR^{F,\tilde{\pi}^{F}}_{T}}\right]^{\frac{1}{p}}\mathbb{E}\left[e^{q\left(-\SF W^{F}_{T} - \frac{\left(\SF\right)^{2}}{2}T\right)}\right]^{\frac{1}{q}} \leq \mathbb{E}\left[F_{0}e^{R^{F,\hat{\pi}^{F}}_{T}-\SF W^{F}_{T} - \frac{\left(\SF\right)^{2}}{2}T}\right]\leq F_{0},
\end{equation*}
where $q = \frac{p}{p-1}$, and it indicates that
\begin{equation}
\mathbb{E}\left[F_0^pe^{pR^{F,\tilde{\pi}^{F}}_{T}}\right]^{\frac{1}{p}} \leq F_0\mathbb{E}\left[e^{q\left(-\SF W^{F}_{T} - \frac{\left(\SF\right)^{2}}{2}T\right)}\right]^{-\frac{1}{q}} = F_0e^{\frac{\left(\SF\right)^{2}}{2(1-p)}T}.\label{WEXP}
\end{equation}
Thus,
\begin{equation}
\limsup_{T\rightarrow\infty}\frac{1}{T}\ln\mathbb{E}\left[F^{p}_{0}e^{pR^{F,\tilde{\pi}^{F}}_{T}}\right]^{\frac{1}{p}}
\leq \limsup_{T\rightarrow\infty}\frac{1}{T}\ln F_0e^{\frac{\left(\SF\right)^{2}}{2(1-p)}T}= \frac{\left(\SF\right)^{2}}{2(1-p)}.\label{max1}
\end{equation}
For the second term in (\ref{max}), since $p<1$, and $\tilde{F}_{t} \geq \bar X_{t}$,
$\tilde{F}^{p-1}_{t} \leq \bar X_{t}^{p-1}$, and
\begin{align*}
&\mathbb{E}\left[\int_{0}^{T}pe^{pR^{F,\tilde{\pi}^{F}}_{t,T}}e^{-p\lambda_1 T}\tilde{F}^{p-1}_{t}d\bar X_{t}\right]\leq\mathbb{E}\left[\int_{0}^{T}e^{pR^{F,\tilde{\pi}^{F}}_{t,T}-p\lambda_1(T-t)}e^{-p\lambda_1 t}p\bar X_{t}^{p-1}d\bar X_{t}\right]\\
=&\mathbb{E}\left[\int_{0}^{T}e^{pR^{F,\tilde{\pi}^{F}}_{t,T}-p\lambda_1(T-t)}e^{-p\lambda_1 t}d\bar X_{t}^{p}\right].
\end{align*}
(\ref{WEXP}) implies that $\mathbb{E}_{t}\left[e^{pR^{F,\tilde{\pi}^{F}}_{t,T}-p\lambda_1(T-t)}\right]\leq 1$. Then, since $\int_{0}^{T}e^{-p\lambda_1 t}d\bar X_{t}^{p}$ is an increasing process, Lemma \ref{IntCE} implies that 
\begin{equation*}
\mathbb{E}\left[\int_{0}^{T}e^{pR^{F,\tilde{\pi}^{F}}_{t,T}-p\lambda_1(T-t)}e^{-p\lambda_1 t}d\bar X_{t}^{p}\right]\leq  \mathbb{E}\left[\int_{0}^{T}e^{-p\lambda_1 t}d\bar X_{t}^{p}\right], 
\end{equation*}
 and hence
\begin{align}
&\limsup_{T\rightarrow\infty}\frac{1}{T}\ln\mathbb{E}\left[\frac{1}{1-\alpha}\int_{0}^{T}pe^{pR^{F,\tilde{\pi}}_{t,T}}\tilde{F}^{p-1}_{t}d\bar X_{t}\right]^{\frac{1}{p}}\nonumber\\
=& \limsup_{T\rightarrow\infty}\frac{1}{T}\ln\mathbb{E}\left[\int_{0}^{T}pe^{pR^{F,\tilde{\pi}}_{t,T}}\tilde{F}^{p-1}_{t}d\bar X_{t}\right]^{\frac{1}{p}}\nonumber\\
=&\lambda_1 + \limsup_{T\rightarrow\infty}\frac{1}{T}\ln\mathbb{E}\left[\int_{0}^{T}pe^{pR^{F,\tilde{\pi}}_{t,T}}e^{-p\lambda_1T}\tilde{F}^{p-1}_{t}d\bar X_{t}\right]^{\frac{1}{p}}\nonumber\\
\leq&\lambda_1 + \limsup_{T\rightarrow\infty}\frac{1}{T}\ln\mathbb{E}\left[\int_{0}^{T}e^{-p\lambda_1 t}d\bar X_{t}^{p}\right]^{\frac{1}{p}}.\label{Bintp}
\end{align}
Now, integration by parts implies that
\begin{align}
\int_{0}^{T}e^{-p\lambda_1 t}d\bar X_{t}^{p} &= e^{-p\lambda_1 T}\bar X_{T}^{p} - X_0^p + p\lambda_1\int_{0}^{T}e^{-p\lambda_1 t}\bar X_{t}^{p}dt\nonumber\\
&\leq e^{-p\lambda_1 T}\bar X_T^{p}+ p\lambda_1\int_{0}^{T}e^{-p\lambda_1 t}\bar X_t^{p}dt.\label{BMinMax}
\end{align}
Lemma \ref{maxmin} implies that $e^{-p\lambda_1 t}\bar X_t^{p} \leq X^p_0e^{(1-\alpha)p\left(\overline{R^{X}_{\cdot} - \frac{\lambda_1}{1-\alpha}\cdot}\right)_{t}}$ for every $0\leq t\leq T$. Thus (\ref{BMinMax}) is less than or equal to
\begin{align*}
&X_0^pe^{(1-\alpha)p\left(\overline{R^{X}_{\cdot} - \frac{\lambda_1}{1-\alpha}\cdot}\right)_{T}}+ X^p_0 p\lambda_1\int_{0}^{T}e^{(1-\alpha)p\left(\overline{R^{X}_{\cdot} - \frac{\lambda_1}{1-\alpha}\cdot}\right)_{t}}dt\\
\leq& X_0^pe^{(1-\alpha)p\left(\overline{R^{X}_{\cdot} - \frac{\lambda_1}{1-\alpha}\cdot}\right)_{T}} + X_0^pp\lambda_1 Te^{(1-\alpha)p\left(\overline{R^{X}_{\cdot} - \frac{\lambda_1}{1-\alpha}\cdot}\right)_{T}}\\
=&X_0^p\left(1+ p\lambda_1 T\right)e^{(1-\alpha)p\left(\overline{R^{X}_{\cdot} - \frac{\lambda_1}{1-\alpha}\cdot}\right)_{T}}.
\end{align*}
Now, Lemma 9 in \citet*{guasoni2012nopart} with $\varphi - r = \frac{\lambda_1}{1-\alpha}$ implies that
\begin{align}
&\limsup_{T\rightarrow\infty}\frac{1}{T}\ln\mathbb{E}\left[X_0^p(1+p\lambda_1 T)e^{p\left(\overline{R^{X}_{\cdot}-\frac{\lambda_1}{1-\alpha}\cdot}\right)_{T}}\right]^{\frac{1}{p}}\nonumber\\
=&\limsup_{T\rightarrow\infty}\frac{1}{T}\ln\mathbb{E}\left[e^{p\left(\overline{R^{X}_{\cdot}-\frac{\lambda_1}{1-\alpha}\cdot}\right)_{T}}\right]^{\frac{1}{p}}\leq 0.\label{GO}
\end{align}
Thus, (\ref{Bintp}) and (\ref{GO}) imply that
\begin{align}
&\limsup_{T\rightarrow\infty}\frac{1}{T}\ln\mathbb{E}\left[\frac{1}{1-\alpha}\int_{0}^{T}e^{pR^{F,\tilde{\pi}}_{t,T}}\tilde{F}^{p-1}_{t}d\bar X_{t}\right]^{\frac{1}{p}}\nonumber\\
\leq& \lambda_1 + \limsup_{T\rightarrow\infty}\frac{1}{T}\ln\mathbb{E}\left[\int_{0}^{T}e^{-p\lambda_1 t}d\bar X_{t}^{p}\right]^{\frac{1}{p}}\nonumber\\
\leq& \lambda_1 + \limsup_{T\rightarrow\infty}\frac{1}{T}\ln\mathbb{E}\left[(1+p\lambda_1 T)e^{p\left(\overline{R^{X}_{\cdot}-\frac{\lambda_1}{1-\alpha}\cdot}\right)_{T}}\right]^{\frac{1}{p}} \leq\lambda_1.\label{max2}
\end{align}
Then, (\ref{max}), (\ref{max1}), and (\ref{max2}) imply that
\begin{equation*}
\limsup_{T\rightarrow\infty}\frac{1}{T}\ln\mathbb{E}\left[\tilde{F}^{p}_{T}\right]^{\frac{1}{p}} \leq \max\left(\frac{\left(\SF\right)^{2}}{2(1-p)}, \lambda_1\right) = \lambda_1.
\end{equation*}
\end{proof}
To prove Theorem \ref{Main}, it now remains to show that the upper bound in Lemma \ref{less} is achieved by the ESR induced by the strategies in (\ref{optimal1}) and
(\ref{optimal2}), and hence they are optimal. Plugging them into the dynamics of $X_t$ and $F_t$, the corresponding fund's value and wealth processes follow
\begin{align*}
d\hat{X}_{t} &= \hat{X}_{t}\left(\frac{\left(\SX\right)^{2}}{1-(1-\alpha)p} dt +
\frac{\SX}{1-(1-\alpha)p} dW^{X}_{t}\right) - \frac{\alpha}{1-\alpha}d\bar{\hat{X}}_{t},\\
d\hat{F}_{t} &=
\left(\hat{F}_{t}-\frac{\alpha}{1-\alpha}(\bar{\hat{X}}_{t}-X_{0})\right)\left(\frac{\left(\SF\right)^{2}}{1-p}
dt + \frac{\SF}{1-p} dW^{F}_{t}\right) +\frac{\alpha}{1-\alpha}d\bar{\hat{X}}_{t}.
\end{align*}
Letting $\hat{R}^{X}_{T} = \frac{\left(1-2(1-\alpha)p\right)\left(\SX\right)^{2}}{2\left(1-(1-\alpha)p\right)^{2}}T +
\frac{\SX }{1-(1-\alpha)p}W^{X}_{T}$, $\hat{R}^{F}_{T} = \frac{(1-2p)\left(\SF\right)^{2}}{2(1-p)^{2}}T +
\frac{\SF}{1-(1-p} W^{F}_{T}$, and solving the above differential equations imply that
\begin{align*}
\bar{\hat{X}}_{T} &= X_{0}e^{(1-\alpha)\bar{\hat{R}}^{X}_{T}},\\
\hat{F}_{T} &= F_{0}e^{\hat{R}^{F}_{T}}+
\frac{\alpha}{1-\alpha}(\bar{\hat{X}}_{T}-X_{0}).
\end{align*}

\begin{lemma}\label{optimum}
For $\optpix$ and $\optpif$ in (\ref{optimal1}) and (\ref{optimal2}), $\ESR_{\gamma}\left(\hat{\pi}^{X},\hat{\pi}^{F}\right) = \lambda_1$, for any $0 < \gamma \leq 1$.
\end{lemma}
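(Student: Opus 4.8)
The plan is to pair the universal upper bound of Lemma \ref{less} with a matching lower bound read off directly from the explicit laws of $\hat X$ and $\hat F$. Since Lemma \ref{less} already gives $\limsup_{T\to\infty}\frac1T\ln\mathbb E[\hat F_T^{1-\gamma}]^{1/(1-\gamma)}\le\lambda_1$ (and its logarithmic analogue), it suffices to prove that the corresponding $\liminf$ is at least $\lambda_1$; the two bounds then force the defining limit to exist and equal $\lambda_1$. The structural observation driving the lower bound is that the closed form $\hat F_T=F_0e^{\hat R^F_T}+\frac{\alpha}{1-\alpha}(\hat X^{*}_T-X_0)$ writes $\hat F_T$ as a sum of two nonnegative ``pockets'': a private Merton pocket $A_T:=F_0e^{\hat R^F_T}$ and a fee pocket $B_T:=\frac{\alpha}{1-\alpha}(\hat X^{*}_T-X_0)$. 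Because $\hat F_T\ge A_T$ and $\hat F_T\ge B_T$ pointwise, the ESR of $\hat F$ dominates the ESR of each pocket, and I expect these to equal $\frac{(\SF)^2}{2\gamma}$ and $(1-\alpha)\frac{(\SX)^2}{2\gamma^{*}}$ respectively, whose maximum is exactly $\lambda_1$.

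For the power case $0<p<1$ I would first bound $\mathbb E[\hat F_T^p]^{1/p}\ge\mathbb E[A_T^p]^{1/p}=F_0\,\mathbb E[e^{p\hat R^F_T}]^{1/p}$. As $\hat R^F_T$ is Gaussian with the drift and variance given in its definition, this is an explicit moment-generating-function computation; taking $\frac1T\ln$ and simplifying with $p=1-\gamma$ collapses to the private welfare $\frac{(\SF)^2}{2\gamma}$. For the fee pocket I would use the elementary inequality $(\hat X^{*}_T-X_0)^p\ge(\hat X^{*}_T)^p-X_0^p$, valid for $0<p<1$ and $\hat X^{*}_T\ge X_0$, together with $\hat X^{*}_T=X_0e^{(1-\alpha)\hat R^{X*}_T}$, thereby reducing the problem to the asymptotics of the exponential moment $\mathbb E[e^{(1-\alpha)p\,\hat R^{X*}_T}]$ of the running maximum of the arithmetic Brownian motion $\hat R^X$.

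This fee pocket is the crux and the main obstacle. Here $\hat X$, and hence $\hat X^{*}$, follow exactly the fund dynamics of the no-private-investment problem of \citet{guasoni2012nopart} under their optimal policy $\optpix=\mux/(\gamma^{*}(\sigmax)^2)$; consequently $B_T$ is, up to the constant $\alpha/(1-\alpha)$, the manager's terminal wealth in that model, whose optimal ESR they compute to be $(1-\alpha)\frac{(\SX)^2}{2\gamma^{*}}$. I would therefore reduce the lower bound on the ESR of $B_T$ to exactly that running-maximum estimate (their Lemma $9$, already invoked in the proof of Lemma \ref{UBofF} with $\varphi-r=\lambda_1/(1-\alpha)$). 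The delicate point, and the reason this step is nontrivial, is that the growth rate of moments of a running maximum is governed by the effective risk aversion $\gamma^{*}=1-(1-\alpha)p$ induced by the high-water mark rather than by the drift of $\hat R^X$; this is precisely where the value $\gamma^{*}$ enters. Combining the two pocket bounds gives $\liminf_{T\to\infty}\frac1T\ln\mathbb E[\hat F_T^p]^{1/p}\ge\max\!\big((1-\alpha)\frac{(\SX)^2}{2\gamma^{*}},\frac{(\SF)^2}{2\gamma}\big)=\lambda_1$, which with Lemma \ref{less} yields $\ESR_\gamma(\optpix,\optpif)=\lambda_1$.

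For the logarithmic case $\gamma=1$ the argument is identical in spirit but simpler. From $\ln\hat F_T\ge\ln A_T$ I obtain $\frac1T\mathbb E[\ln\hat F_T]\ge\frac1T(\ln F_0+\mathbb E[\hat R^F_T])\to\frac{(\SF)^2}{2}$, since the $W^F$ term has zero mean and the drift of $\hat R^F$ reduces to $\frac{(\SF)^2}{2}$ at $\gamma=1$. From $\ln\hat F_T\ge\ln B_T$, using $\hat X^{*}_T-X_0\sim\hat X^{*}_T$ for large $T$, I need only the linear growth $\frac1T\mathbb E[\hat R^{X*}_T]\to\frac{(\SX)^2}{2}$ of the expected running maximum of a Brownian motion with positive drift, which gives $(1-\alpha)\frac{(\SX)^2}{2}$. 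Since $\gamma^{*}=1$ in this case, the maximum of the two pocket rates is again $\lambda_1$, completing the proof.
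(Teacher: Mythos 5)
Your proposal is correct and takes essentially the same route as the paper's own proof: the identical two-pocket decomposition $\hat{F}_T = F_0e^{\hat{R}^F_T} + \tfrac{\alpha}{1-\alpha}\left(\hat{X}^*_T - X_0\right)$, lower bounds from each nonnegative pocket played against the upper bound of Lemma \ref{less}, and the same reduction of the fee pocket to the moment asymptotics of the running maximum $\hat{R}^{X*}_T$ established in \citet*{guasoni2012nopart}. The only differences are cosmetic: the paper handles the $-X_0$ correction via its Lemma \ref{AC} (a concavity argument) where you use the subadditivity inequality $(a-b)^p \geq a^p - b^p$ for $0<p<1$ (and a somewhat hand-waved $\hat{X}^*_T - X_0 \sim \hat{X}^*_T$ in the log case), and the external result you need for the fee pocket's exact growth rate is their Lemma 11, not their Lemma 9, which gives only the upper-bound direction invoked in Lemma \ref{UBofF}.
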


\begin{proof}
Let $G_{t} = F_{0}e^{\hat{R}^{F}_{T}}$ and $H_{t} =
\frac{\alpha}{1-\alpha}\left(\bar{\hat{X}}_{t}-X_{0}\right)$, then $\hat{F}_{T} = G_{T} + H_{T}$. From Lemma \ref{less}, it suffices to prove that $\ESR_{\gamma}\left(\hat{\pi}^{X},\hat{\pi}^{F}\right) \geq \lambda_1$.\\

\noindent \textit{The case of logarithmic utility} ($p=0$). Since $H_{t}$ is a positive process,
\begin{equation*}
\lim_{T\rightarrow\infty}\frac{1}{T}\mathbb{E}\left[\ln\hat{F}_{T}\right]=\lim_{T\rightarrow\infty}\frac{1}{T}\mathbb{E}\left[\ln(G_{T} + H_{T})\right]\geq\lim_{T\rightarrow\infty}\frac{1}{T}\mathbb{E}\left[\ln G_{T}\right]=\frac{\left(\SF\right)^{2}}{2}.
\end{equation*}

Likewise, since $G_{t}$ is a positive process, Lemma \ref{AC} below implies that
\begin{align*}
&\lim_{T\rightarrow\infty}\frac{1}{T}\mathbb{E}\left[\ln\hat{F}_{T}\right]\geq\lim_{T\rightarrow\infty}\frac{1}{T}\mathbb{E}\left[\ln H_{T}\right]=\lim_{T\rightarrow\infty}\frac{1}{T}\mathbb{E}\left[\ln \left(\bar{\hat{X}}_{T}-X_{0}\right)\right]\\
=& \lim_{T\rightarrow\infty}\frac{1}{T}\mathbb{E}\left[\ln \left(e^{(1-\alpha)\bar{\hat{R}}^{X}_{T}}-1\right)\right] = \lim_{T\rightarrow\infty}\frac{1}{T}\mathbb{E}\left[(1-\alpha)\bar{\hat{R}}^{X}_{T}\right]\\
=& \lim_{T\rightarrow\infty}\frac{1}{T}\mathbb{E}\left[(1-\alpha)\left(\overline{\frac{\left(\SX\right)^{2}}{2}\cdot +
\SX W^{X}_{\cdot}}\right)_{T}\right].
\end{align*}

Since $\left(\overline{\frac{\left(\SX\right)^{2}}{2}\cdot +
\SX W^{X}_{\cdot}}\right)_{T} \geq \frac{\left(\SX\right)^{2}T}{2} + \SX \underbar W^{X}_{T} = \frac{\left(\SX\right)^{2}T}{2} - \SX \overline{\left(-W^X\right)}_T$, which follows from Lemma \ref{maxmin},
\begin{align*}
&\lim_{T\rightarrow\infty}\frac{1}{T}\mathbb{E}\left[(1-\alpha)\left(\overline{\frac{\left(\SX\right)^{2}}{2}\cdot +
\SX W^{X}_{\cdot}}\right)_{T}\right]\\
\geq& \frac{(1-\alpha)\left(\SX\right)^{2} }{2} - \lim_{T\rightarrow\infty}\frac{1}{T}\mathbb{E}\left[(1-\alpha)
\SX \overline{-W}^{X}_{T}\right] = \frac{(1-\alpha)\left(\SX\right)^{2}}{2},
\end{align*}
where the last equality follows from (\ref{LN}).

Thus, $\displaystyle\lim_{T\rightarrow\infty}\frac{1}{T}\mathbb{E}\left[\ln\hat{F}_{T}\right]\geq \max\left(\frac{\left(\SF\right)^{2}}{2},\frac{(1-\alpha)\left(\SX\right)^{2}}{2}\right)=\lambda_1.$\\

\noindent \textit{The case of power utility} ($0<p<1$).
Since $H_{t}$ is a positive process,
\begin{equation*}
\lim_{T\rightarrow\infty}\frac{1}{T}\ln\mathbb{E}\left[\hat{F}^{p}_{T}\right]^{\frac{1}{p}}=\lim_{T\rightarrow\infty}\frac{1}{T}\ln\mathbb{E}\left[(G_{T} + H_{T})^{p}\right]^{\frac{1}{p}}\geq\lim_{T\rightarrow\infty}\frac{1}{T}\ln\mathbb{E}\left[G^{p}_{T}\right]^{\frac{1}{p}}=\frac{\left(\SF\right)^{2}}{2(1-p)}.
\end{equation*}

Likewise, since $G_{t}$ is a positive process, Lemma \ref{AC} below implies that
\begin{align*}
&\lim_{T\rightarrow\infty}\frac{1}{T}\ln\mathbb{E}\left[\hat{F}^{p}_{T}\right]^{\frac{1}{p}}\geq\lim_{T\rightarrow\infty}\frac{1}{T}\ln\mathbb{E}\left[H^{p}_{T}\right]^{\frac{1}{p}}\\
=&\lim_{T\rightarrow\infty}\frac{1}{T}\ln\mathbb{E}\left[\left(\bar{\hat{X}}_{T}-X_{0}\right)^{p}\right]^{\frac{1}{p}}=\lim_{T\rightarrow\infty}\frac{1}{T}\ln\mathbb{E}\left[\left(e^{(1-\alpha)\bar{\hat{R}}^{X}_{T}}-1\right)^{p}\right]^{\frac{1}{p}}\\
=&\lim_{T\rightarrow\infty}\frac{1}{T}\ln\mathbb{E}\left[e^{(1-\alpha)p\bar{\hat{R}}^{X}_{T}}\right]^{\frac{1}{p}}=\frac{(1-\alpha)\left(\SX\right)^{2}}{2\left(1-(1-\alpha)p\right)}.
\end{align*}
where the last equality follows from Lemma $11$ in \citet*{guasoni2012nopart}. Thus, $\displaystyle\lim_{T\rightarrow\infty}\frac{1}{T}\ln\mathbb{E}\left[\hat{F}^{p}_{T}\right]^{\frac{1}{p}}\geq \max\left(\frac{\left(\SF\right)^{2}}{2(1-p)},\frac{(1-\alpha)\left(\SX\right)^{2}}{2\left(1-(1-\alpha)p\right)}\right)=\lambda_1.$
\end{proof}

\begin{lemma}\label{AC}
For $\hat{R}^{X}_{T} = \frac{(1-2(1-\alpha)p)\left(\SX\right)^{2}}{2\left(1-(1-\alpha)p\right)^{2}}T +
\frac{\SX}{1-(1-\alpha)p} W^{X}_{T}$,

\noindent when $p=0$,
\begin{equation*}
\lim_{T\rightarrow\infty}\frac{1}{T}\mathbb{E}\left[\ln\left(e^{(1-\alpha)\bar{\hat{R}}^{X}_{T}}-1\right)\right] = \lim_{T\rightarrow\infty}\frac{1}{T}\mathbb{E}\left[(1-\alpha)\bar{\hat{R}}^{X}_{T}\right],
\end{equation*}
and when $0<p<1$,
\begin{equation*}
\lim_{T\rightarrow\infty}\frac{1}{T}\ln\mathbb{E}\left[\left(e^{(1-\alpha)\bar{\hat{R}}^{X}_{T}}-1\right)^{p}\right]^{\frac{1}{p}} = \lim_{T\rightarrow\infty}\frac{1}{T}\ln\mathbb{E}\left[e^{(1-\alpha)p\bar{\hat{R}}^{X}_{T}}\right]^{\frac{1}{p}}.
\end{equation*}
\end{lemma}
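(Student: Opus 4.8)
The two identities share the structure of comparing $Y_{T}=e^{(1-\alpha)\hat{R}^{X*}_{T}}$ with $Y_{T}-1$, and in both the inequality ``$\le$'' is immediate: since $\hat{R}^{X}_{0}=0$ forces $\hat{R}^{X*}_{T}\ge0$, we have $0\le Y_{T}-1<Y_{T}$, and monotonicity of $\ln(\cdot)$ and of $x\mapsto x^{p}$, together with $\mathbb{E}[\cdot]^{1/p}$, gives the left-hand sides $\le$ the right-hand sides. The entire content is therefore the reverse inequality, which I would obtain by splitting the state space into $\{Y_{T}\ge2\}$ and $\{Y_{T}<2\}=\{\hat{R}^{X*}_{T}<\tfrac{\ln2}{1-\alpha}\}$. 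On the first event the elementary bound $Y_{T}-1\ge Y_{T}/2$ transfers almost all of the growth, so the only delicate point is the second event, where $Y_{T}$ is close to $1$.

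For the second (power) identity this delicate point is harmless. On $\{Y_{T}\ge2\}$ one has $(Y_{T}-1)^{p}\ge2^{-p}Y_{T}^{p}$, while $\mathbb{E}[Y_{T}^{p}\mathbf{1}_{\{Y_{T}<2\}}]\le2^{p}$, so that $\mathbb{E}[(Y_{T}-1)^{p}]\ge2^{-p}\bigl(\mathbb{E}[Y_{T}^{p}]-2^{p}\bigr)$. By Lemma~$11$ of \citet*{guasoni2012nopart} the moment $\mathbb{E}[Y_{T}^{p}]=\mathbb{E}[e^{(1-\alpha)p\hat{R}^{X*}_{T}}]$ grows exponentially with a strictly positive rate, hence $\mathbb{E}[Y_{T}^{p}]-2^{p}\ge\tfrac12\mathbb{E}[Y_{T}^{p}]$ for $T$ large, and the constant factor disappears after applying $\tfrac1T\ln(\cdot)^{1/p}$ and letting $T\to\infty$. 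This yields the reverse inequality, hence the claimed equality; here the fact that $(\cdot)^{p}$ vanishes (rather than blows up) as $Y_{T}\downarrow1$ is exactly what makes the small-$Y_{T}$ region negligible.

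The first (logarithmic) identity is the main obstacle, precisely because $\ln(Y_{T}-1)\to-\infty$ as $\hat{R}^{X*}_{T}\downarrow0$. Using the same split, on $\{Y_{T}\ge2\}$ I would write $\ln(Y_{T}-1)\ge\ln Y_{T}-\ln2$ and absorb the missing mass on $\{Y_{T}<2\}$ into a bounded error, since there $0\le\ln Y_{T}<\ln2$; this gives $\mathbb{E}[\ln(Y_{T}-1)\mathbf{1}_{\{Y_{T}\ge2\}}]\ge\mathbb{E}[\ln Y_{T}]-2\ln2$. On $\{Y_{T}<2\}$ the bound $e^{x}-1\ge x$ gives $\ln(Y_{T}-1)\ge\ln(1-\alpha)+\ln\hat{R}^{X*}_{T}$, so everything reduces to showing that $\mathbb{E}\bigl[|\ln\hat{R}^{X*}_{T}|\,\mathbf{1}_{\{\hat{R}^{X*}_{T}<1\}}\bigr]$ stays bounded uniformly in the horizon $T$.

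The crux is thus a small-ball estimate for the running maximum, uniform in $T$. Since $\hat{R}^{X*}_{T}$ is nondecreasing in $T$, for every $y>0$ one has $P(\hat{R}^{X*}_{T}\le y)\le P(\hat{R}^{X*}_{1}\le y)=:G(y)$, and because $\hat{R}^{X*}_{1}$ is the running maximum over $[0,1]$ of a Brownian motion with drift and nonzero volatility $\SX/(1-(1-\alpha)p)$, its law has a bounded density at the origin, whence $G(y)=O(y)$ as $y\downarrow0$. The layer-cake identity then gives $\mathbb{E}[|\ln\hat{R}^{X*}_{T}|\mathbf{1}_{\{\hat{R}^{X*}_{T}<1\}}]=\int_{0}^{\infty}P(\hat{R}^{X*}_{T}<e^{-t})\,dt\le\int_{0}^{\infty}G(e^{-t})\,dt<\infty$, a bound independent of $T$. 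Collecting the two events yields $\mathbb{E}[\ln(Y_{T}-1)]=\mathbb{E}[\ln Y_{T}]+O(1)$, so after dividing by $T$ and letting $T\to\infty$ the two limits coincide. I expect the uniform-in-$T$ integrability of $\ln\hat{R}^{X*}_{T}$ near zero to be the only genuine difficulty; everything else is bookkeeping with the inequalities $e^{x}-1\ge x$ and $e^{x}-1\ge e^{x}/2$ for $x\ge\ln2$.
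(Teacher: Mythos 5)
Your proposal is correct, but it takes a genuinely different route from the paper's own proof, and the comparison is instructive. The paper obtains the reverse inequalities from concavity: it writes $\ln Y_{T}\le\ln\left(Y_{T}-1\right)+\left(Y_{T}-1\right)^{-1}$ for the log case and $Y_{T}^{p}\le\left(Y_{T}-1\right)^{p}+p\left(Y_{T}-1\right)^{p-1}$ for the power case, and then argues that the correction terms do not contribute to the long-run rate by dominating them, almost surely for large $T$, via the bound $e^{(1-\alpha)\hat{R}^{X*}_{T}}\ge 1+e^{(1-\alpha)\SX W^{X*}_{T}}$, invoking Lemma 1.2.15 of \citet*{Dembolargedeviation} in the power case. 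Your split on $\left\{Y_{T}\ge2\right\}$ versus $\left\{Y_{T}<2\right\}$ replaces all of this: the power case needs only that $\mathbb{E}\left[Y_{T}^{p}\right]$ grows exponentially (supplied by Lemma 11 of \citet*{guasoni2012nopart}, the same external input the paper uses at this point), and the log case is reduced to the uniform-in-$T$ bound $\mathbb{E}\left[\left(\ln\hat{R}^{X*}_{T}\right)^{-}\right]=O(1)$, which you get correctly from monotonicity of the running maximum in $T$, the bounded density of $\hat{R}^{X*}_{1}$ at the origin, and the layer-cake formula. What your route buys is robustness exactly at the delicate region $\hat{R}^{X*}_{T}\approx0$. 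The paper's domination step is obtained by taking running maxima on both sides of a pointwise inequality, which is not valid in general, since the maximum of $W^{X}$ may be attained at an early time where the drift of $\hat{R}^{X}$ has not yet accumulated; worse, because the running maximum $\hat{R}^{X*}_{T}$ has a strictly positive density at $0^{+}$ for every fixed $T$, the reciprocal correction $\left(Y_{T}-1\right)^{-1}$ has infinite expectation, so it cannot be dominated almost surely by any integrable random variable, and only a logarithmic control of the near-zero region -- precisely your small-ball argument -- can close this case. The modest price of your approach is the extra small-ball lemma, which is elementary; note also that, unlike the paper's argument, yours nowhere uses positivity of the drift of $\hat{R}^{X}$.
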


\begin{proof}
Since $e^{(1-\alpha)\bar{\hat{R}}^{X}_{T}}-1\leq e^{(1-\alpha)\bar{\hat{R}}^{X}_{T}}$, the $``\leq"$ part of the two equations is straight forward. It suffices to prove the $``\geq"$ part.

When $p=1$, since $e^{(1-\alpha)\bar{\hat{R}}^{X}_{T}} = e^{(1-\alpha)\bar{\hat{R}}^{X}_{T}} -1 +1$, the concavity of logarithmic function implies that
\begin{align}
&(1-\alpha)\bar{\hat{R}}^{X}_{T} = \ln e^{(1-\alpha)\bar{\hat{R}}^{X}_{T}} \leq \ln\left(e^{(1-\alpha)\bar{\hat{R}}^{X}_{T}}-1\right) + \frac{1}{e^{(1-\alpha)\bar{\hat{R}}^{X}_{T}}-1}\label{con11}
\end{align}

Since $\SX >0$, $e^{(1-\alpha)\hat{R}^{X}_{T}} = e^{\frac{(1-\alpha)\left(\SX\right)^{2}}{2}T + (1-\alpha)\SX W^{X}_{T}} \geq 2e^{(1-\alpha)\SX W^{X}_{T}}$ for sufficiently large $T$, and
\begin{equation}
e^{(1-\alpha)\bar{\hat{R}}^{X}_{T}} \geq 2e^{(1-\alpha)\SX \bar W^{X}_{T}} \geq 1 + e^{(1-\alpha)\SX \bar W^{X}_{T}}.\label{cont}
\end{equation}

Thus,
\begin{equation}
\lim_{T\rightarrow\infty}\frac{1}{T}\mathbb{E}\left[\frac{1}{e^{(1-\alpha)\bar{\hat{R}}^{X}_{T}}-1}\right]  \leq \lim_{T\rightarrow\infty}\frac{1}{T}\mathbb{E}\left[e^{-(1-\alpha)\SX \bar W^{X}_{T}}\right]=0.\label{con12}
\end{equation}

Then, (\ref{con11}) and (\ref{con12}) imply taht
\begin{align*}
&\lim_{T\rightarrow\infty}\frac{1}{T}\mathbb{E}\left[(1-\alpha)\bar{\hat{R}}^{X}_{T} \right] \leq \lim_{T\rightarrow\infty}\frac{1}{T}\mathbb{E}\left[\ln\left(e^{(1-\alpha)\bar{\hat{R}}^{X}_{T}} -1\right)\right] +\lim_{T\rightarrow\infty}\frac{1}{T}\mathbb{E}\left[\frac{1}{e^{(1-\alpha)\bar{\hat{R}}^{X}_{T}}-1}\right]\\
\leq& \lim_{T\rightarrow\infty}\frac{1}{T}\mathbb{E}\left[\ln\left(e^{(1-\alpha)\bar{\hat{R}}^{X}_{T}} -1\right)\right].
\end{align*}

When $0<p<1$, from the concavity of the function $f(x) = x^{p}$,
\begin{equation*}
e^{(1-\alpha)p\bar{\hat{R}}^{X}_{T}} \leq \left(e^{(1-\alpha)\bar{\hat{R}}^{X}_{T}}-1\right)^{p} + p\left(e^{(1-\alpha)\bar{\hat{R}}^{X}_{T}}-1\right)^{p-1}.
\end{equation*}

Thus, from Lemma 1.2.15 in \citet*{Dembolargedeviation},
\begin{align}
&\lim_{T\rightarrow\infty}\frac{1}{T}\ln\mathbb{E}\left[(1-\alpha)p\bar{\hat{R}}^{X}_{T} \right]^{\frac{1}{p}}\nonumber\\
\leq& \lim_{T\rightarrow\infty}\frac{1}{T}\ln\mathbb{E}\left[\left(e^{(1-\alpha)\bar{\hat{R}}^{X}_{T}}-1\right)^{p} + p\left(e^{(1-\alpha)\bar{\hat{R}}^{X}_{T}}-1\right)^{p-1}\right]^{\frac{1}{p}}\nonumber\\
=& \max\left(\lim_{T\rightarrow\infty}\frac{1}{T}\ln\mathbb{E}\left[\left(e^{(1-\alpha)\bar{\hat{R}}^{X}_{T}}-1\right)^{p}\right]^{\frac{1}{p}}, \lim_{T\rightarrow\infty}\frac{1}{T}\ln\mathbb{E}\left[p\left(e^{(1-\alpha)\bar{\hat{R}}^{X}_{T}}-1\right)^{p-1}\right]^{\frac{1}{p}}\right).\label{con21}
\end{align}

Since $p-1<0$, from (\ref{cont}),
\begin{equation}
\lim_{T\rightarrow\infty}\frac{1}{T}\ln\mathbb{E}\left[p\left(e^{(1-\alpha)\bar{\hat{R}}^{X}_{T}}-1\right)^{p-1}\right]^{\frac{1}{p}} \leq \lim_{T\rightarrow\infty}\frac{1}{T}\ln\mathbb{E}\left[pe^{(1-\alpha)(p-1)\SX \bar W^{X}_{T}}\right]^{\frac{1}{p}}\leq 0.\label{con22}
\end{equation}

Thus, (\ref{con21}) and (\ref{con22}) imply that
\begin{align*}
&\lim_{T\rightarrow\infty}\frac{1}{T}\ln\mathbb{E}\left[e^{(1-\alpha)p\bar{\hat{R}}^{X}_{T}} \right]^{\frac{1}{p}}\leq \max\left(\lim_{T\rightarrow\infty}\frac{1}{T}\ln\mathbb{E}\left[\left(e^{(1-\alpha)\bar{\hat{R}}^{X}_{T}}-1\right)^{p}\right]^{\frac{1}{p}}, 0\right)\\
\leq& \lim_{T\rightarrow\infty}\frac{1}{T}\ln\mathbb{E}\left[\left(e^{(1-\alpha)\bar{\hat{R}}^{X}_{T}} -1\right)^{p}\right]^{\frac{1}{p}},
\end{align*}
because from (\ref{cont}),
\begin{align*}
&\lim_{T\rightarrow\infty}\frac{1}{T}\ln\mathbb{E}\left[\left(e^{(1-\alpha)\bar{\hat{R}}^{X}_{T}} -1\right)^{p}\right]^{\frac{1}{p}}\\
\geq& \lim_{T\rightarrow\infty}\frac{1}{T}\ln\mathbb{E}\left[e^{(1-\alpha)p\SX \bar W^{X}_{T}}\right]^{\frac{1}{p}} = \frac{(1-\alpha)^2\left(\SX\right)^{2}p}{2}>0.
\end{align*}
\end{proof}

\subsection{Proof of Theorem \ref{Mutual}}
Solving the differential equations (\ref{DymXM}) and (\ref{DymFM}),
\begin{align*}
X_t =& X_0e^{R^X_t - \varphi t},\\
F_t =& F_0e^{R^F_t} + \int_0^t \varphi e^{R^F_{s,t}}X_sds.
\end{align*}

Theorem \ref{Mutual} is proved by arguments similar to those for Theorem \ref{Main}: first prove that for general strategies the ESR of wealth is bounded above by (\ref{ESRM}), and then show that the ESR induced by the candidate strategies $\optpix$ and $\optpif$ in (\ref{optimal1M}) and (\ref{optimal2M}) achieve this upper bound.

\begin{lemma}\label{lessM}
For a mutual fund manager compensated by proportional fees with rate $\varphi >0$, the $\ESR$ induced by any investment strategies $\pix$ and $\pif$ satisfies
\begin{equation*}
\ESR_{\gamma}(\pif,\pif) \leq \lambda_2 = \max\left(\frac{\left(\SX\right)^{2}}{2\gamma}-\varphi,\frac{\left(\SF\right)^2}{2\gamma}\right), \text{ for any } 0<\gamma\leq 1.
\end{equation*}
\end{lemma}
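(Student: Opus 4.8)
The plan is to carry out the first of the two steps of Theorem \ref{Mutual} exactly as in Theorem \ref{Main}: this lemma is the upper bound, and the matching policies are treated afterwards. For the bound I would reproduce the architecture of Lemma \ref{less}, splitting into the logarithmic case $\gamma=1$ and the power case $p=1-\gamma\in(0,1)$, and replacing the high-water-mark objects by their proportional-fee analogues. The starting representations are $X_t=X_0 e^{R^{X}_t-\varphi t}$ and $F_T=F_0 e^{R^{F}_T}+\varphi\int_0^T e^{R^{F}_{t,T}}X_t\,dt$. Mirroring Lemma \ref{less}, I would encode cumulative fees in a single increasing process $\tilde X_t=\frac{F_0}{\varphi}+\int_0^t X_s\,ds$ carrying a jump at $t=0$, so that $F_T=\varphi\int_0^T e^{R^{F}_{t,T}}\,d\tilde X_t$, which is precisely the form handled in the hedge-fund proof.

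In the logarithmic case I would factor out $\lambda_2$, writing $\limsup_{T\to\infty}\frac1T\mathbb{E}[\ln F_T]=\lambda_2+\limsup_{T\to\infty}\frac1T\mathbb{E}[\ln\int_0^T e^{-\lambda_2 T+R^{F}_{t,T}}\,d\tilde X_t]$, and then run the chain of Lemmas \ref{ADDWF}, \ref{SuperM}, \ref{intparts}. Concretely, a time-reversal of $W^{F}$ together with the elementary inequality $\lambda_2 T\ge \frac{(\SF)^2}{2}(T-t)+(\frac{(\SX)^2}{2}-\varphi)t$ (valid because $\lambda_2$ is the maximum of the two summands) reduces the integrand to $e^{R^{F}_{t,T}-\int_t^T(\frac12(\SF)^2ds+\SF dW^{F}_s)}$ times a discounted fund factor; conditioning on the full path of $W^{\perp}$ and invoking the super-martingale property of Lemma \ref{SuperM} with Lemma \ref{IntCE} strips off the $W^{F}$-exponential and leaves a term in the fund alone. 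The key simplification over Lemma \ref{intparts} is that here $d\tilde X_t=X_t\,dt$ is absolutely continuous, so no integration by parts is needed and the fee discount cancels the fee drag, $e^{-(\frac{(\SX)^2}{2}-\varphi)t}X_t=X_0 e^{R^{X}_t-\frac12(\SX)^2 t}$. Combining the $W^{F}$- and $W^{\perp}$-running maxima through Lemma \ref{maxmin} produces the running maximum of the local martingale $M_t=e^{R^{X}_t-\frac12(\SX)^2 t-\SX W^{X}_t}$, and since $1/M^{*}_{\infty}$ is uniform on $[0,1]$ one has $\mathbb{E}[\ln M^{*}_{\infty}]<\infty$; hence this term contributes growth rate $0$, giving the bound $\lambda_2$.

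For the power case I would mirror Lemma \ref{UBofF} by passing to $\tilde F_t=F_t+X^{*}_t$, where now $X^{*}_t=\max_{0\le s\le t}X_s$ is the running maximum of the \emph{fund} value. Since $X^{*}$ is of finite variation, adding it introduces no new risk; $\tilde F_t\ge F_t$ makes the $\ESR$ of $F$ at most that of $\tilde F$; and It\^o yields the additive representation $\tilde F^{p}_T=\tilde F_0^{p}e^{pR^{F,\tilde{\pi}^{F}}_T}+p\int_0^T e^{pR^{F,\tilde{\pi}^{F}}_{t,T}}\tilde F^{p-1}_t\,(\varphi X_t\,dt+dX^{*}_t)$, with $\tilde{\pi}^{F}_t=\frac{F_t}{\tilde F_t}\pif_t$. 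Using $\tilde F_t\ge X^{*}_t\ge X_t$ and $p-1<0$ to bound $\tilde F^{p-1}_t\le (X^{*}_t)^{p-1}$, the two fee terms collapse into $p\varphi\int_0^T e^{pR^{F}_{t,T}}(X^{*}_t)^{p}dt$ and $\int_0^T e^{pR^{F}_{t,T}}d((X^{*}_t)^{p})$. Splitting the resulting expectation by Lemma $1.2.15$ of \citet*{Dembolargedeviation}, the pure-wealth term is bounded by $\frac{(\SF)^2}{2\gamma}$ exactly as in \eqref{WEXP}, while each fee term is controlled by Lemma \ref{IntCE} (the conditional bound $\mathbb{E}_t[e^{pR^{F}_{t,T}-p\lambda_2(T-t)}]\le1$ holds since $\lambda_2\ge\frac{(\SF)^2}{2\gamma}$), followed by Lemma \ref{maxmin} and Lemma $9$ of \citet*{guasoni2012nopart} applied with discount $\varphi+\lambda_2\ge\frac{(\SX)^2}{2\gamma}$; both fee terms then have growth rate $\le\lambda_2$, and the maximum of the three is $\lambda_2$.

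The main obstacle I anticipate is twofold. First, one must keep the fee drag $-\varphi$ intact: the tempting substitution $\tilde F=F+X$ borrowed from the hedge-fund setting cancels the $\mp\varphi X_t\,dt$ terms of $dF$ and $dX$, replacing the fund's net growth $\frac{(\SX)^2}{2\gamma}-\varphi$ by the gross $\frac{(\SX)^2}{2\gamma}$ and yielding a bound that is too weak; using the finite-variation running maximum $X^{*}$ rather than $X$ avoids the cancellation. Second, as in Lemma \ref{less}, the delicate point is decoupling the correlated drivers $W^{F}$ (governing $R^{F}$) and $W^{X}$ (governing the fund and hence the fee flow), accomplished by conditioning on $W^{\perp}$; one must then verify that the discount threshold is matched exactly, so that the fund's discounted running maximum has nonpositive growth. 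Here this reduces to the inequality $\varphi+\lambda_2\ge\frac{(\SX)^2}{2\gamma}$, which is guaranteed by the very definition of $\lambda_2$.
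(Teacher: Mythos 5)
Your overall architecture coincides with the paper's: the representation $F_T=F_0e^{R^F_T}+\varphi\int_0^Te^{R^F_{t,T}}X_t\,dt$, the factoring out of $\lambda_2$, the time reversal, the conditional Jensen step on the path of $W^{\perp}$, the Lemma \ref{SuperM}/\ref{IntCE} strip, and the cancellation $e^{-(\frac{(\SX)^2}{2}-\varphi)t}X_t=X_0e^{R^X_t-\frac12(\SX)^2t}$ are exactly the paper's logarithmic argument; and your power case ($\tilde F=F+X^{*}$, the three-term decomposition, Lemma 1.2.15 of \citet*{Dembolargedeviation}, the bound \eqref{WEXP}, Lemma \ref{IntCE}, integration by parts, and Lemma 9 of \citet*{guasoni2012nopart} with discount $\varphi+\lambda_2$) is the paper's proof, with your coarser estimate $\tilde F_t^{p-1}X_t\le (X^{*}_t)^p$ in place of the paper's $X_t^p$ still workable. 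However, the \emph{final step of your logarithmic case} has a genuine gap. After the supermartingale strip, the logarithm is separated from the pathwise quantity by the conditional expectation: you face $\limsup_{T\to\infty}\frac1T\mathbb{E}\left[\ln\mathbb{E}_{W^{\perp}_T}\left[F_0+\varphi X_0\int_0^T M_t\,dt\right]\right]$, so the quantity $\mathbb{E}[\ln M^{*}_{\infty}]$ you invoke is unreachable. Bounding $\int_0^TM_t\,dt\le TM^{*}_T$ forces you to estimate $\mathbb{E}_{W^{\perp}_T}[M^{*}_T]$ (or, after Jensen and the tower property, $\mathbb{E}[M^{*}_T]$), i.e.\ a \emph{first moment of a running maximum}, and this is not uniformly subexponential over admissible strategies: already for $\rho=1$ and the deterministic strategy $\pix_s=(\SX+e^s)/\sigmax$, $M$ is an exponential martingale whose exponent has quadratic variation $\int_0^Te^{2s}ds$, and $\mathbb{E}[M^{*}_T]$ grows like $e^{2T}$, so $\frac1T\ln\left(T\,\mathbb{E}[M^{*}_T]\right)\to 2\neq 0$. (Even in the simplest case $\mathbb{E}[M^{*}_{\infty}]=\infty$, since $1/M^{*}_{\infty}$ is uniform.)

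The reason your transplant of Lemma \ref{intparts} fails is structural, not cosmetic. In the hedge-fund case the H\"older step produces the \emph{fractional} power $(1-\alpha)\delta<1$ of $M^{*}_{\infty}$, whose moment is finite uniformly over strategies by the Pareto domination; in the mutual-fund case the fee term enters with exponent exactly $1$, so no such uniform moment exists, and the running-maximum route is a dead end. The paper's proof of Lemma \ref{lessM} therefore does something different at precisely this point: apply Jensen and the tower property to pass to the unconditional quantity $\frac1T\ln\mathbb{E}\left[F_0e^{-\sqrt{1-\rho^2}\SX(W^{\perp})^{*}_T}+\varphi X_0\int_0^TM_t\,dt\right]$, and then use Fubini together with the supermartingale bound $\mathbb{E}[M_t]\le 1$ for each fixed $t$, which gives $\frac1T\ln\left(F_0+\varphi X_0T\right)\to 0$. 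The distinction is expectation-then-supremum (over $t$, which works) versus supremum-then-expectation (which does not). With that single replacement in your last step, your proposal becomes the paper's proof; as written, the logarithmic case does not close.
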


\begin{proof}
We prove this lemma for logarithmic utility and power utility, respectively.

\noindent\textit{The case of logarithmic utility}.
\begin{equation*}
\lim_{T\rightarrow \infty} \frac{1}{T} \mathbb{E}\left[\ln F_{T}\right] = \lambda_2 + \lim_{T\rightarrow \infty} \frac{1}{T} \mathbb{E}\left[\ln \left(F_0e^{R^F_T-\lambda_2 T} + \varphi\int_0^Te^{R^F_{t,T}-\lambda_2 T}X_tdt\right)\right].
\end{equation*}

Then (\ref{LN}) implies that, with $N^T_t$ defined in the proof of Lemma \ref{ADDWF}
\small
\begin{align}
&\lim_{T\rightarrow \infty} \frac{1}{T} \mathbb{E}\left[\ln \left(F_0e^{R^F_T-\lambda_2 T} + \varphi\int_0^Te^{R^F_{t,T}-\lambda_2 T}X_tdt\right)\right]\nonumber\\
=&\lim_{T\rightarrow \infty} \frac{1}{T} \mathbb{E}\left[\ln \left(F_0e^{R^F_T-\lambda_2 T} + \varphi\int_0^Te^{R^F_{t,T}-\lambda_2 T}X_tdt\right)\right] + \lim_{T \rightarrow \infty}\frac{1}{T}\mathbb{E}\left[-\SF\bar N^{T}_{T}-\SX\overline{\rho W^{F}_{T}}\right]\nonumber\\
\leq&\lim_{T\rightarrow \infty} \frac{1}{T} \mathbb{E}\left[\ln \left(F_0e^{R^F_T- \frac{\left(\SF\right)^2T}{2} - \SF W^F_T} + \varphi\int_0^Te^{R^F_{t,T}- \frac{\left(\SF\right)^2(T-t)}{2}-\SF W^F_{t,T}}e^{-\left(\frac{\left(\SX\right)^2}{2}-\varphi\right)t - \rho \SX W^F_t}X_tdt\right)\right]\label{ADDM},
\end{align}
\normalsize
where the last inequality follows from the definition of $\lambda_2$ and the fact that $\bar N^T_T \geq W_{t,T}$ for any $0\leq t\leq T$. Then with $W^{\perp}$, the Brownian Motion independent to $W^F$ and that $W^X_t = \rho W^F_t + \sqrt{1-\rho^2}W^{\perp}_t$, Jensen's inequality implies that (\ref{ADDM}) is less than or equal to
\small
\begin{equation}
\lim_{T\rightarrow \infty} \frac{1}{T} \mathbb{E}\left[\ln \mathbb{E}_{W^{\perp}_T}\left[F_0e^{R^F_T- \frac{\left(\SF\right)^2T}{2} - \SF W^F_T} + \varphi\int_0^Te^{R^F_{t,T}- \frac{\left(\SF\right)^2(T-t)}{2}-\SF W^F_{t,T}}e^{-\left(\frac{\left(\SX\right)^2}{2}-\varphi\right)t - \rho \SX W^F_t}X_tdt\right]\right]\label{BSM}.
\end{equation}
\normalsize

From Lemma \ref{SuperM}, $M_t = e^{R^F_t- \frac{\left(\SF\right)^2t}{2} - \SF W^F_t}$ is a super-martingale with respect to the filtration generated by $\left(W^{\perp}_s\right)_{0\leq s\leq T}$ and $\left(W^{F}_s\right)_{0\leq s\leq t}$. Then, Lemma \ref{IntCE} implies that (\ref{BSM}) is less than or equal to
\begin{align}
&\lim_{T\rightarrow \infty} \frac{1}{T} \mathbb{E}\left[\ln \mathbb{E}_{W^{\perp}_T}\left[F_0+ \varphi\int_0^Te^{-\left(\frac{\left(\SX\right)^2}{2}-\varphi\right)t - \rho \SX W^F_t}X_tdt\right]\right]\nonumber\\
=&\lim_{T\rightarrow \infty} \frac{1}{T} \mathbb{E}\left[\ln \mathbb{E}_{W^{\perp}_T}\left[F_0+ \varphi\int_0^Te^{-\left(\frac{\left(\SX\right)^2}{2}-\varphi\right)t - \rho \SX W^F_t}X_tdt\right]\right]\nonumber\\
+& \lim_{T\rightarrow \infty} \frac{1}{T} \mathbb{E}\left[\ln e^{-\sqrt{1-\rho^2}\SX \bar W^{\perp}_T}\right],\label{BAWP}
\end{align}
where the last equality follows from (\ref{LN}). Then, since $\bar W^{\perp}_T \geq W^{\perp}_t$ for every $0\leq t\leq T$ and $X_t = X_0e^{R^X_t-\varphi t}$, (\ref{BAWP}) is less than or equal to
\begin{align*}
\textstyle
&\lim_{T\rightarrow \infty} \frac{1}{T} \mathbb{E}\left[\ln \mathbb{E}_{W^{\perp}_T}\left[F_0 e^{-\sqrt{1-\rho^2}\SX \bar W^{\perp}_T}+ \varphi X_0\int_0^Te^{R^X_t-\frac{\left(\SX\right)^2 t}{2} - \rho\SX W^F_t -\sqrt{1-\rho^2}\SX W^{\perp}_t}dt\right]\right]\\
=&\lim_{T\rightarrow \infty} \frac{1}{T} \mathbb{E}\left[\ln \mathbb{E}_{W^{\perp}_T}\left[F_0 e^{-\sqrt{1-\rho^2}\SX \bar W^{\perp}_T}+ \varphi X_0\int_0^Te^{R^X_t-\frac{\left(\SX\right)^2 t}{2} - \SX W^X_t}dt\right]\right]\\
\leq&\lim_{T\rightarrow \infty} \frac{1}{T}\ln \mathbb{E}\left[F_0 e^{-\sqrt{1-\rho^2}\SX \bar W^{\perp}_T}+ \varphi X_0\int_0^Te^{R^X_t-\frac{\left(\SX\right)^2 t}{2} - \SX W^X_t}dt\right],
\end{align*}
\normalsize
where the last inequality follows from Jensen's inequality and the tower property of conditional expectation. Then since $G_t = e^{R^X_t-\frac{\left(\SX\right)^2 t}{2} - \SX W^X_t}$ is a super-martingale, by Fubini's Theorem, the above equals to
\begin{align*}
&\lim_{T\rightarrow \infty} \frac{1}{T}\ln \left(\mathbb{E}\left[F_0 e^{-\sqrt{1-\rho^2}\SX \bar W^{\perp}_T}\right]+ \varphi X_0\int_0^T\mathbb{E}\left[e^{R^X_t-\frac{\left(\SX\right)^2 t}{2} - \SX W^X_t}\right]dt\right)\\
\leq&\lim_{T\rightarrow \infty} \frac{1}{T}\ln \left(F_0 + \varphi X_0 \int_0^T1dt\right)= \lim_{T\rightarrow \infty} \frac{1}{T}\ln \left(F_0 + \varphi X_0T\right)\leq 0,
\end{align*}
and this implies that $\displaystyle\lim_{T\rightarrow \infty} \frac{1}{T} \mathbb{E}\left[\ln F_{T}\right]\leq \lambda_2$.\\

\noindent\textit{The case of power utility}. Define $\tilde{F}_t = F_t + \bar X_t$, which implies that $\tilde{F}_t \geq \bar X_t$, $\tilde{F}_t \geq F_t$. Thus the ESR of $F$ is less than or equal to the ESR of $\tilde{F}$, which will be proved in the following to be also bounded above by $\lambda_2$. Notice that this is similar to the technique used to deal with the power utility case in the proof of Lemma \ref{less}: adding a positive and increasing process to wealth process without increasing the ESR. Here we choose $\bar X$, because the property $\bar X_t\geq X_t$ helps to derive \eqref{ctx} below, though the mutual fund manager is not compensated by high-water mark performance fees.

Let $\tilde{\pi}^F_t = \frac{\tilde{F}_t - \bar X_t}{\tilde{F}_t}\pif_t$, then for $0<p<1$,
\begin{align*}
d\tilde{F}^p_t =& p\tilde{F}^{p-1}_t\left(\tilde{F}_t -  \bar X_t\right)\left(\pif_t\muf dt+\pif_t\sigmaf dW^F_t\right) \\
+& \frac{p(p-1)}{2}\tilde{F}^{p-2}_t\left(\tilde{F}_t -  \bar X_t\right)^2\left(\pif_t\sigmaf\right)^2dt + \varphi p\tilde{F}^{p-1}_t X_tdt + p\tilde{F}^{p-1} d\bar X_t\\
=&p\tilde{F}^p\left(\left(\tilde{\pi}^F_t\muf + \frac{p-1}{2}\left(\tilde{\pi}^F_t\sigmaf\right)^2\right)dt + \tilde{\pi}^F_t\sigmaf dW^F_t\right)\\
 +& \varphi p\tilde{F}^{p-1}_t X_tdt +  p\tilde{F}^{p-1} d\bar X_t.
\end{align*}

Solving this differential equation, $\tilde{F}^p_t$ can be represented as a sum of three positive processes,
\begin{equation}
\tilde{F}^p_T = F_0^pe^{pR^{F,\tilde{\pi}^F}_T} + \varphi p\int_0^Te^{pR^{F,\tilde{\pi}^F}_{t,T}}\tilde{F}^{p-1}_t X_tdt + p\int_0^Te^{pR^{F,\tilde{\pi}^F}_{t,T}}\tilde{F}^{p-1}d\bar X_t.\label{three}
\end{equation}

Then, from \citet*{Dembolargedeviation}, Lemma $1.2.15$, if suffices to prove that the ESR of each of the three terms in (\ref{three}) is less than or equal to $\lambda_2$.

From (\ref{max1}), $\displaystyle\lim_{T\rightarrow \infty} \frac{1}{T} \ln\mathbb{E}\left[F_0^pe^{pR^{F,\tilde{\pi}^F}_T} \right]^{\frac{1}{p}} \leq \frac{\left(\SF\right)^2}{2(1-p)}\leq \lambda_2$.

For the second term in (\ref{three}), since $\tilde{F}_t\geq \bar X_{t} \geq X_t$ and $p-1<0$,
\begin{align}
&\lim_{T\rightarrow \infty} \frac{1}{T} \ln\mathbb{E}\left[\varphi p\int_0^Te^{pR^{F,\tilde{\pi}^F}_{t,T}}\tilde{F}^{p-1}_t X_tdt\right]^{\frac{1}{p}} \leq \lim_{T\rightarrow \infty} \frac{1}{T} \ln\mathbb{E}\left[\int_0^Te^{pR^{F,\tilde{\pi}^F}_{t,T}} X^{p}_tdt\right]^{\frac{1}{p}}\label{ctx}\\
=&\lambda_2 + \lim_{T\rightarrow \infty} \frac{1}{T} \ln\mathbb{E}\left[\int_0^Te^{pR^{F,\tilde{\pi}^F}_{t,T}-p\lambda_2 (T-t)} X^{p}_te^{-p\lambda_2 t}dt\right]^{\frac{1}{p}}\nonumber\\
=& \lambda_2 + \lim_{T\rightarrow \infty} \frac{1}{pT} \ln \int_0^T\mathbb{E}\left[e^{pR^{F,\tilde{\pi}^F}_{t,T}-p\lambda_2 (T-t)} X^{p}_te^{-p\lambda_2 t}\right] dt,\nonumber
\end{align}
where the last equality follows from Fubini's Theorem. Then, since (\ref{WEXP}) implies that $\mathbb{E}_t\left[e^{pR^{F,\tilde{\pi}^{F}}_{t,T}}\right] \leq  e^{\frac{p\left(\SF\right)^{2}}{2(1-p)}(T-t)}\leq e^{p\lambda_2(T-t)}$, from the tower property of conditional expectation,
\begin{align*}
&\lim_{T\rightarrow \infty} \frac{1}{pT} \ln \int_0^T\mathbb{E}\left[e^{pR^{F,\tilde{\pi}^F}_{t,T}-p\lambda_2 (T-t)} X^{p}_te^{-p\lambda t}\right] dt\\
=&\lim_{T\rightarrow \infty} \frac{1}{pT} \ln \int_0^T\mathbb{E}\left[\mathbb{E}_t\left[e^{pR^{F,\tilde{\pi}^F}_{t,T}-p\lambda_2 (T-t)}\right] X^{p}_te^{-p\lambda_2 t}\right] dt\\
\leq&\lim_{T\rightarrow \infty} \frac{1}{pT} \ln \int_0^T\mathbb{E}\left[X^{p}_te^{-p\lambda_2 t}\right] dt.
\end{align*}

Since $\mathbb{E}\left[X^{p}_t\right] = \mathbb{E}\left[X^p_0e^{pR^X_t - p\varphi t}\right] \leq X^p_0e^{p\left(\frac{\left(\SX\right)^2}{2(1-p)}-\varphi\right)t}$, which follows from the argument similar to (\ref{WEXP}), and $\lambda_2\geq \frac{\left(\SX\right)^2}{2(1-p)}-\varphi$, $\displaystyle\lim_{T\rightarrow \infty} \frac{1}{pT} \ln \int_0^T\mathbb{E}\left[X^{p}_te^{-p\lambda_2 t}\right]dt \leq \displaystyle\lim_{T\rightarrow \infty} \frac{1}{pT}\ln \left(X_0^pT\right) = 0$, which implies that $\displaystyle\lim_{T\rightarrow \infty} \frac{1}{T} \ln\mathbb{E}\left[\varphi p\int_0^Te^{pR^{F,\tilde{\pi}^F}_{t,T}}\tilde{F}^{p-1}_t X_tdt\right]^{\frac{1}{p}}\leq \lambda_2$.

Finally, since $\tilde{F}_t\geq \bar X_{t}$ and $p-1<0$, following arguments similar to those in the proof of Lemma \ref{UBofF},
\begin{align*}
&\lim_{T\rightarrow \infty} \frac{1}{T} \ln\mathbb{E}\left[p\int_0^Te^{pR^{F,\tilde{\pi}^F}_{t,T}}\tilde{F}^{p-1}d\bar X_t\right]^{\frac{1}{p}} \leq \lim_{T\rightarrow \infty} \frac{1}{pT} \ln\mathbb{E}\left[p\int_0^Te^{pR^{F,\tilde{\pi}^F}_{t,T}}\bar X_t^{p-1}d\bar X_t\right]\\
=&\lim_{T\rightarrow \infty} \frac{1}{pT} \ln\mathbb{E}\left[\int_0^Te^{pR^{F,\tilde{\pi}^F}_{t,T}}d\bar X_t^p\right] \leq \lambda_2 + \lim_{T\rightarrow \infty} \frac{1}{pT} \ln\mathbb{E}\left[\int_0^Te^{-p\lambda_2 t}d\bar X_t^p\right].
\end{align*}

By integration by parts, $\int_0^Te^{-p\lambda_2 t}d\bar X_t^p \leq (1+p\lambda_2 T)X^p_0e^{p\left(\overline{R^X_{\cdot}-\varphi\cdot-\lambda_2\cdot}\right)_T}$. Then, applying Lemma 9 in \citet*{guasoni2012nopart} to the case of $\alpha = 0$ and $ r = -\lambda_2$ implies that
\begin{align*}
&\lim_{T\rightarrow \infty} \frac{1}{pT} \ln\mathbb{E}\left[\int_0^Te^{-p\lambda_2 t}d\bar X_t^p\right]\\
 \leq& \lim_{T\rightarrow \infty} \frac{1}{pT} \ln\mathbb{E}\left[e^{p\left(\overline{R^X_{\cdot}-\varphi\cdot-\lambda_2\cdot}\right)_T}\right]
\leq  \frac{\left(\SX\right)^2}{2(1-p)} -\varphi -\lambda_2\leq 0,
\end{align*}
which indicates that $\displaystyle\lim_{T\rightarrow \infty} \frac{1}{T} \ln\mathbb{E}\left[p\int_0^Te^{pR^{F,\tilde{\pi}^F}_{t,T}}\tilde{F}^{p-1}d\bar X_t\right]^{\frac{1}{p}}\leq \lambda_2$.
\end{proof}

Now it remains to prove that the ESR induced by the candidate strategies in (\ref{optimal1M}) and (\ref{optimal2M}) achieves (\ref{ESRM}).

\begin{lemma}\label{optimumM}
For $\optpix$ and $\optpif$ in (\ref{optimal1M}) and (\ref{optimal2M}), $\ESR_{\gamma}\left(\hat{\pi}^{X},\hat{\pi}^{F}\right) = \lambda_2$, for any $0<\gamma\leq 1$.
\end{lemma}

\begin{proof}
Plugging $\optpix$ and $\optpif$ into (\ref{DymXM}) and (\ref{DymFM}), with $\hat{R}^X = \frac{(1-2p)\left(\SX\right)^2T}{2(1-p)^2} + \frac{\SX}{1-p}W^X_T$ and $\hat{R}^F = \frac{(1-2p)\left(\SF\right)^2T}{2(1-p)^2} + \frac{\SF}{1-p}W^F_T$, the corresponding fund's value and wealth satisfy:
\begin{align*}
\hat{X}_T =& X_0e^{\hat{R}^X_T-\varphi T},\\
\hat{F}_T =& F_0e^{\hat{R}^F_T} + \varphi\int_0^T\hat{X}_tdt.
\end{align*}

Let $G_{t} = F_{0}e^{\hat{R}^{F}_{T}}$ and $H_{t} =\varphi\int_0^T\hat{X}_tdt$, then $\hat{F}_{T} = G_{T} + H_{T}$. From Lemma \ref{lessM}, it suffices to prove that $\ESR_{\gamma}\left(\optpix,\optpif\right) \geq \lambda_2$.\\

\noindent \textit{The case of logarithmic utility}. Since $H_{t}$ is a positive process,
\begin{equation*}
\lim_{T\rightarrow\infty}\frac{1}{T}\mathbb{E}\left[\ln\hat{F}_{T}\right]=\lim_{T\rightarrow\infty}\frac{1}{T}\mathbb{E}\left[\ln(G_{T} + H_{T})\right]\geq\lim_{T\rightarrow\infty}\frac{1}{T}\mathbb{E}\left[\ln G_{T}\right]=\frac{\left(\SF\right)^{2}}{2}.
\end{equation*}

Likewise, since $G_{t}$ is a positive process,
\begin{align}
&\lim_{T\rightarrow\infty}\frac{1}{T}\mathbb{E}\left[\ln\hat{F}_{T}\right]\geq\lim_{T\rightarrow\infty}\frac{1}{T}\mathbb{E}\left[\ln H_{T}\right]=\lim_{T\rightarrow\infty}\frac{1}{T}\mathbb{E}\left[\ln \left(\varphi\int_0^T\hat{X}_tdt\right)\right]\nonumber\\
=& \lim_{T\rightarrow \infty} \frac{1}{T} \mathbb{E}\left[\ln \int_0^Te^{\left(\frac{\left(\SX\right)^2}{2}-\varphi\right)t + \SX W^X_t}dt \right]\nonumber\\
\geq& \lim_{T\rightarrow \infty} \frac{1}{T} \mathbb{E}\left[\ln \int_0^Te^{\left(\frac{\left(\SX\right)^2}{2}-\varphi\right)t + \SX \underbar W^X_{t}}dt \right]\label{minW}.
\end{align}
Then since $\underbar W^X_{t}\geq \underbar W^X_{T}$ for every $0\leq t\leq T$, \eqref{minW} is greater than or equal to
\begin{align*}
& \lim_{T\rightarrow \infty} \frac{1}{T} \mathbb{E}\left[\SX \underbar W^X_{T} + \ln \int_0^Te^{\left(\frac{\left(\SX\right)^2}{2}-\varphi\right)t} dt \right] = \lim_{T\rightarrow \infty} \frac{1}{T}\ln \int_0^Te^{\left(\frac{\left(\SX\right)^2}{2}-\varphi\right)t} dt\\
=&\lim_{T\rightarrow \infty} \frac{1}{T}\ln \frac{e^{\left(\frac{\left(\SX\right)^2}{2}-\varphi\right)T}-1}{\frac{\left(\SX\right)^2}{2}-\varphi} = \max\left(\frac{\left(\SX\right)^2}{2}-\varphi,0\right),
\end{align*}
where the first equality follows from (\ref{LN}).

Thus, $\displaystyle\lim_{T\rightarrow\infty}\frac{1}{T}\mathbb{E}\left[\ln\hat{F}_{T}\right] \geq \max\left(\frac{\left(\SF\right)^{2}}{2},  \max\left(\frac{\left(\SX\right)^2}{2}-\varphi,0\right)\right) = \lambda_2$.

\noindent \textit{The case of power utility}. Since $H_{t}$ is a positive process,
\begin{equation*}
\lim_{T\rightarrow\infty}\frac{1}{T}\ln\mathbb{E}\left[\hat{F}^p_{T}\right]^{\frac{1}{p}}\geq\lim_{T\rightarrow\infty}\frac{1}{T}\ln\mathbb{E}\left[G^p_{T}\right]^{\frac{1}{p}}=\frac{\left(\SF\right)^{2}}{2(1-p)}.
\end{equation*}

Likewise, since $G_{t}$ is a positive process,
\begin{align*}
&\lim_{T\rightarrow\infty}\frac{1}{T}\ln\mathbb{E}\left[\hat{F}^p_{T}\right]^{\frac{1}{p}}\geq\lim_{T\rightarrow\infty}\frac{1}{T}\ln \mathbb{E}\left[ H^p_{T}\right]^{\frac{1}{p}}=\lim_{T\rightarrow\infty}\frac{1}{T}\ln\mathbb{E}\left[\varphi^p\left(\int_0^T\hat{X}_tdt\right)^p\right]^{\frac{1}{p}}\\
=&\lim_{T\rightarrow \infty} \frac{1}{pT} \ln \mathbb{E}\left[\left(\int_0^Te^{\left(\frac{(1-2p)\left(\SX\right)^2}{2(1-p)^2}-\varphi\right)t + \frac{\SX}{1-p} W^X_t}dt \right)^p\right] \\
=& \lim_{T\rightarrow \infty} \frac{1}{pT} \ln \mathbb{E}\left[T^p\left(\int_0^T\frac{1}{T}e^{\left(\frac{(1-2p)\left(\SX\right)^2}{2(1-p)^2}-\varphi\right)t + \frac{\SX}{1-p} W^X_t}dt \right)^p\right].
\end{align*}

Since $0<p<1$, Jensen's inequality implies that
\begin{align*}
&\mathbb{E}\left[T^p\left(\int_0^T\frac{1}{T}e^{\left(\frac{(1-2p)\left(\SX\right)^2}{2(1-p)^2}-\varphi\right)t + \frac{\SX}{1-p} W^X_t}dt \right)^p\right]\\
 \geq& \mathbb{E}\left[T^{p-1}\int_0^Te^{\left(\frac{p(1-2p)\left(\SX\right)^2}{2(1-p)^2}-p\varphi\right)t + \frac{p\SX}{1-p} W^X_t}dt\right]\\
=&T^{p-1}\int_0^T\mathbb{E}\left[e^{\left(\frac{p(1-2p)\left(\SX\right)^2}{2(1-p)^2}-p\varphi\right)t + \frac{p\SX}{1-p} W^X_t}\right]dt = T^{p-1}\int_0^T e^{\left(\frac{p\left(\SX\right)^2}{2(1-p)}-p\varphi\right)t}dt,
\end{align*}
where the first equality follows from Fubini's Theorem. Thus,
\begin{align*}
&\lim_{T\rightarrow\infty}\frac{1}{T}\ln\mathbb{E}\left[\hat{F}^p_{T}\right]^{\frac{1}{p}}
\geq\lim_{T\rightarrow \infty} \frac{1}{pT} \ln \mathbb{E}\left[T^p\left(\int_0^T\frac{1}{T}e^{\left(\frac{(1-2p)\left(\SX\right)^2}{2(1-p)^2}-\varphi\right)t + \frac{\SX}{1-p} W^X_t}dt \right)^p\right]\\
\geq&\lim_{T\rightarrow \infty} \frac{1}{pT} \ln \left(T^{p-1}\int_0^T e^{\left(\frac{p\left(\SX\right)^2}{2(1-p)}-p\varphi\right)t}dt\right) = \max\left(\frac{\left(\SX\right)^2}{2(1-p)}-\varphi,0\right),
\end{align*}
which implies that $\displaystyle\lim_{T\rightarrow\infty}\frac{1}{T}\ln\mathbb{E}\left[\hat{F}^p_{T}\right]^{\frac{1}{p}} \geq \max\left(\frac{\left(\SF\right)^{2}}{2(1-p)}, \max\left(\frac{\left(\SX\right)^2}{2(1-p)}-\varphi,0\right)\right) = \lambda_2$.

\end{proof}

\bibliographystyle{agsm}
\bibliography{watermark}
\end{document}